\documentclass[11pt]{article}

\usepackage[margin=1in]{geometry}

\usepackage{amsmath}
\usepackage{amssymb}
\usepackage{amsthm}
\usepackage{mathtools}
\usepackage{xcolor}

\usepackage{graphicx}
\usepackage{booktabs}
\usepackage{comment}

\usepackage{tikz}
\usetikzlibrary{decorations.markings}

\usepackage[algo2e,linesnumbered,ruled,vlined]{algorithm2e}

\usepackage{float}

\usepackage{comment}
\usepackage{booktabs}

\usepackage{authblk}

\usepackage[authoryear]{natbib}

\usepackage[algo2e,linesnumbered,ruled,vlined]{algorithm2e}
\usetikzlibrary{decorations.markings}

\usepackage[hidelinks]{hyperref}
\usepackage{cleveref}

\DeclareMathOperator{\Cov}{Cov}
\DeclareMathOperator{\blkdiag}{blkdiag}
\newcommand{\proper}{\mathsf}

\newcommand{\pE}{\proper{E}}
\newcommand{\pV}{\proper{V}}

\newcommand{\pN}{\proper{N}}
\newcommand{\Var}{\pV}
\newcommand{\mv}[1]{{\boldsymbol{\mathrm{#1}}}}

\theoremstyle{plain}
\newtheorem{theorem}{Theorem}
\newtheorem{lemma}{Lemma}
\newtheorem{proposition}{Proposition}
\newtheorem{corollary}{Corollary}

\theoremstyle{definition}
\newtheorem{definition}{Definition}
\newtheorem{assumption}{Assumption}
\newtheorem{example}{Example}

\theoremstyle{remark}
\newtheorem{remark}{Remark}

\newcommand{\Bin}{\beta}   
\newcommand{\Mtr}{\mv M}   
\newcommand{\inoutframe}{%
  \draw[gray] (-1.35,-1.35) -- (-1.35,1.25);
  \draw[gray] (-1.35,-1.35) -- (1.25,-1.35);
  \foreach \t in {-1,0,1}{
    \draw[gray] (-1.35,\t) -- ++(-0.05,0)
      node[left, inner sep=1.5pt, font=\scriptsize] {$\t$};
    \draw[gray] (\t,-1.35) -- ++(0,-0.05)
      node[below, inner sep=1.5pt, font=\scriptsize] {$\t$};}
  \node[font=\small] at (-0.05,-1.72) {$x$};
  \node[font=\small, rotate=90] at (-1.78,-0.05) {$y$};}
\tikzset{
  vtx/.style={draw, circle, inner sep=1pt},
  elab/.style={inner sep=2pt},
  gph/.style={x=1.38cm,y=1.38cm,
    decoration={markings, mark=at position 0.5 with {\arrow{>}}}}}

\title{Gaussian Processes on Directed Metric Graphs}

\author{
David Bolin\textsuperscript{1},
Alexandre B. Simas\textsuperscript{1},
Erik Karlsson Strandh\textsuperscript{2,*},
and Jonas Wallin\textsuperscript{2}
}

\date{}

\begin{document}

\maketitle

\footnotetext[1]{%
Statistics Program, CEMSE Division,
King Abdullah University of Science and Technology,
23955-6900 Thuwal, Saudi Arabia.}

\footnotetext[2]{%
Department of Statistics,
Lund University,
SE-220 07 Lund, Sweden.}

\begingroup
\renewcommand{\thefootnote}{*}
\footnotetext{%
Corresponding author:
Erik Karlsson Strandh,
Department of Statistics, Lund University,
Box 743, SE-220 07 Lund, Sweden.
E-mail:
\href{mailto:erik.karlsson_strandh@stat.lu.se}
{erik.karlsson\_strandh@stat.lu.se}.}
\endgroup

\begin{abstract}
We introduce a statistical framework for Gaussian fields indexed at arbitrary edge locations on general compact directed metric graphs. The construction is based on a stochastic differential equation with a first-order operator and conditions at the vertices. We characterise well-posedness and identify the covariance reproducing kernel Hilbert space. 
We also connect the proposed framework to earlier stream-network models, showing that these arise from the same system under particular boundary conditions, and introduce new boundary conditions that yield more physically realistic processes.
The differential-equation representation enables computationally efficient inference and prediction. This makes the method applicable to large data sets without approximation. Applications to
temperature modelling on river networks and traffic speeds on road networks
illustrate the framework, including the computational efficiency and improved performance under physically informed vertex conditions.
\end{abstract}

\vspace{0.5em}
\noindent\textbf{Keywords:}
directed metric graphs, Gaussian processes, graph reduction, Ornstein--Uhlenbeck processes, reproducing kernel Hilbert spaces, river networks.


\section{Introduction}\label{sec:intro}

Data observed on physical networks arise in
transport, infrastructure, and hydrology. In many such networks, physical flow determines the direction of transport: traffic density evolves along directed road links, while constituents are carried through pipe networks and mixed at junctions \citep{HoldenRisebro1995,ShangEtAl2021}. 

Continuously indexed directional Gaussian fields have been introduced on flow-oriented trees by moving averages \citep{Hoef06,Hoef10,PeterE10}, and spatio-temporal extensions have been considered \citep{santosfernandez2022ssnbayesrpackagebayesian}.
However, these continuously indexed models have not been introduced on general compact metric graphs, 
and they are computationally expensive. To address the latter, approximations such as domain partitioning and mesh-based constructions have been considered \citep{rivArt,THORSON2019143}. 
Directionality beyond trees has also been studied at the discrete graph level.
 \citet{MaddixEtAl2022} construct a vertex-indexed Mat\'ern Gaussian process on general finite directed weighted graphs using a graph-advection operator.

In earlier work, we introduced symmetric Whittle--Mat\'ern fields on a compact metric graph $\Gamma$
\citep{bolin2024gaussian} as solutions to the fractional-order equation
\begin{equation}\label{eq:SPDE}
    (\kappa^2 - \Delta_{\Gamma})^{\alpha/2}(\tau u) = \dot{W} \qquad\text{on $\Gamma$.}
\end{equation}
Here $\Delta_\Gamma$ is the Kirchhoff Laplacian, which acts as the second derivative along edges while enforcing continuity and a sum-to-zero condition on outward derivatives at vertices; $\dot W$ is white noise; $\kappa$ controls the correlation range, $\alpha>1/2$ the sample-path regularity, and $\tau>0$ is a variance-scale parameter. For integer $\alpha$,
Markov properties enable exact sparse inference and prediction
\citep{bolin2023statistical,bolin2026markov} for these models, which are implemented in
\texttt{MetricGraph} \citep{MetricGraphpackage}. 
The Kirchhoff vertex conditions used to define $\Delta_{\Gamma}$ are, however, symmetric, whereas transport along
a directed network need not be. 


The goal of this work is to introduce a directed version of these Whittle--Mat\'ern fields, which then defines a class of directed models on general metric graphs, and not only on trees, without specifying a
directed covariance directly. We specifically study the first-order equation
\begin{equation}
\label{eq:SDE}
    (\kappa + \partial_{\Gamma})(\tau u) = \dot{W},
\end{equation}
where $\partial_{\Gamma}$ acts as the derivative along the edges and the equation is equipped with linear constraints at the vertices.
The covariance operator of $u$ in \eqref{eq:SPDE} is $\tau^{-2}(\kappa^2 - \Delta_{\Gamma})^{-\alpha}$, whereas $u$ in \eqref{eq:SDE} heuristically has covariance operator
$\tau^{-2}(\kappa + \partial_{\Gamma})^{-1}(\kappa - \partial_{\Gamma})^{-1} = \tau^{-2}(\kappa^2 - \widetilde{\Delta}_{\Gamma})^{-1}$,
where $\widetilde{\Delta}_{\Gamma}$ is a Laplacian with vertex conditions induced by those of $\partial_{\Gamma}$. Thus, the solution to \eqref{eq:SDE} is essentially a Whittle--Mat\'ern field with $\alpha = 1$ but with different vertex conditions. %
The advantage with \eqref{eq:SDE} is that it facilitates imposing vertex conditions which are more suitable for directed networks with flow-driven dependence.  On trees, particular choices recover the tail-up constructions of \citet{Hoef06}, tail-down constructions of \citet{Hoef10}, and further that one can also define other more physically realistic vertex conditions.

This unified model class defines, to our knowledge, the first statistical framework for Gaussian fields at arbitrary edge locations on general directed metric graphs. 
Building on the classical correspondence \citep{kimeldorf1970correspondence} between processes and reproducing kernel Hilbert spaces (RKHS), we identify the covariance RKHS and show that we can obtain exact solutions, without the need for discretizations. The construction is a linear structural equation model (SEM) whose variables are indexed by the points of $\Gamma$ rather than by a finite vertex set. We make the connection to SEMs exact by deriving a continuous version of the trek rule for SEMs \citep[Thm.~4.1]{Drton2018} for these models.
Finally, we show that Markov properties of the model class facilitate exact and highly computationally efficient inference that allows us to analyze much larger data sets than what has previously been possible using exact models.

The remainder of the paper is organised as follows.
\Cref{sec:construction} introduces the model class and \Cref{sec:rkhs} establishes well-posedness, derives the RKHS, and gives an explicit representation of the field.
\Cref{sec:acyclic} derives a closed-form covariance kernel for acyclic graphs and \Cref{seq:Constraint_comparison} relates the
construction to symmetric Whittle--Mat\'ern fields and to the tail-up and tail-down models. \Cref{sec:inference} develops the inference procedures, and \Cref{sec:applications} presents the two applications.
Proofs and implementation details are collected in appendices in the supplementary materials.

\section{Directed metric graphs and directed Gaussian processes}\label{sec:construction}

\subsection{Directed metric graphs}
\label{sec:graphs}

Let $\Gamma=(\mathcal V,\mathcal E)$ be a finite connected metric graph
obtained by gluing endpoints of intervals $[0,\ell_e]$, $0<\ell_e<\infty$.
The processes considered here are continuous along edges, but their traces
need not agree at vertices. We therefore use the split-edge index set
$\widetilde\Gamma:=\bigsqcup_{e\in\mathcal E}
(\{e\}\times[0,\ell_e])$, with quotient map
$\pi:\widetilde\Gamma\to\Gamma$. Each nonvertex point has a unique lift
$(e,t)$, whereas $\pi^{-1}(v)$ is the set of labelled incident edge ends at $v$; see \Cref{fig:inout}(c). Set
$\ell_{\min}:=\min_e\ell_e>0$ and fix vertex and edge orderings for all
indexed vectors and matrices. Write $d(x,y)$ for the path distance between
their projections on the underlying undirected metric graph, omitting the
projections when $x,y\in\Gamma$.

Define the vertex-valued maps
$\operatorname{tail},\operatorname{head}:\mathcal E\to\mathcal V$ by $\operatorname{tail}(e):=\pi(e,0)$ and $ \operatorname{head}(e):=\pi(e,\ell_e)$. For $v\in\mathcal V$, let $\mathcal E_v^{\mathrm{in}} :=\{e:\operatorname{head}(e)=v\}$ and $ \mathcal E_v^{\mathrm{out}} :=\{e:\operatorname{tail}(e)=v\}$, and introduce the corresponding labelled endpoint sets $\mathcal V_v^{\mathrm{in}} :=\{(e,\ell_e):e\in\mathcal E_v^{\mathrm{in}}\}$ and $\mathcal V_v^{\mathrm{out}}:=\{(e,0):e\in\mathcal E_v^{\mathrm{out}}\}$. Thus $\widetilde{\mathcal V}_v:=\pi^{-1}(v) = \mathcal V_v^{\mathrm{in}}\cup \mathcal V_v^{\mathrm{out}}$ and $\deg(v)=|\widetilde{\mathcal V}_v|$.
This endpoint representation is the directed analogue of the quantum-graph construction of a vertex through the collection of edge ends incident to it \cite[Section~1.4.1]{BerkolaikoKuchment2013}.

The graph is \emph{acyclic} if it has no directed cycle and a \emph{directed
tree} if its underlying undirected graph is a tree. A directed route follows
increasing edge coordinates through projected vertices; write $x\leadsto y$
if one exists, allowing the zero-length route, so $x\leadsto x$. The sources
are ${\mathcal V_-:=\{v:\mathcal E_v^{\mathrm{in}}=\varnothing\}}$. Put
$m:=|\mathcal V_-|$, possibly zero, and, when $m>0$, write
$\mathcal V_-=\{s_{-,1},\dots,s_{-,m}\}$. The source edges are
$\mathcal E_-:=\{e:\operatorname{tail}(e)\in\mathcal V_-\}$. A source is an
\emph{inward leaf} if it has one outgoing edge.

\begin{assumption}\label{ass:standing}
Throughout, $\Gamma$ is a finite connected directed metric graph in which
every source is an inward leaf.
\end{assumption}

This is a modelling restriction that avoids specifying a joint Gaussian law
for several outgoing source traces; shared, independent and correlated
initial values give different extensions. We use $\Gamma$ for topology and
paths and $\widetilde\Gamma$ for evaluation, with $f(e,t):=f_e(t)$. A field
descends to $\Gamma$ precisely when $f(\xi)=f(\xi')$ for every
$\xi,\xi'\in\pi^{-1}(v)$ and every $v\in\mathcal V$.
For $v\in\mathcal V_-$, let $e_v$ be its sole outgoing edge and set
$f(v):=f_{e_v}(0)$ and $u(v):=u_{e_v}(0)$; elsewhere vertex notation
denotes only a common trace.
On an outgoing incidence $f_e(v)$ means $f_e(0)$ and on an incoming incidence
it means $f_e(\ell_e)$, so the two ends of a self-loop remain distinct.

For $e\in\mathcal E$, let $L_2(e)$ and $H^1(e)$ be the usual spaces on
$[0,\ell_e]$, and set
\[
L_2(\Gamma)=\bigoplus_{e\in\mathcal E}L_2(e),
\qquad
\widetilde H^1(\Gamma)=\bigoplus_{e\in\mathcal E}H^1(e),
\]
with corresponding direct-sum norms.  They are spaces on $\widetilde\Gamma$, and elements
of $\widetilde H^1(\Gamma)$ may have distinct traces above a vertex.

Throughout, we use $a(\cdot,\cdot)$ for symmetric bilinear
forms, which may be positive semidefinite, and reserve
$\langle\cdot,\cdot\rangle$ for genuine inner products.

\subsection{The model on a single edge}
\label{sec:edge}

On a single edge $e=[0,\ell_e]$, let $L_e:=\kappa I+d/dt$, $\kappa>0$.
Then \eqref{eq:SDE} becomes $L_e(\tau u)=\dot W_e$, equivalently
\begin{equation}
\label{eq:edge-sde}
du(t)=-\kappa u(t)\,dt+\tau^{-1}dW_e(t),
\qquad t\in(0,\ell_e),
\end{equation}
where $W_e$ is a standard Brownian motion. Since
$\ker(L_e)=\operatorname{span}\{e^{-\kappa t}\}$, an initial value is
required. For $u_0\sim\pN(0,\sigma_0^2)$,
$0<\sigma_0^2<\infty$, independent of $W_e$, the unique solution is the Ornstein--Uhlenbeck process 
\begin{equation}
    u(t)
    =
    u_0e^{-\kappa t}
    +
    \tau^{-1}\int_0^t e^{-\kappa(t-s)}\,dW_e(s),
    \qquad 0\leq t\leq\ell_e.
    \label{eq:OU}
\end{equation}
Thus $\kappa^{-1}$ is the exponential decay length. Setting
$
\sigma_0^2=\sigma^2:=\frac{1}{2\kappa\tau^{2}}
$
makes $u$ stationary with marginal variance $\sigma^2$; we call
this \emph{stationary anchoring}. Under this anchoring,
$\kappa^{-1}$ is also the correlation length. 


The Cameron--Martin space of \eqref{eq:OU} is the building block for the
graph model. For $f,g\in H^1(e)$, define the edgewise
positive semidefinite bilinear form
\begin{equation}
    a_e^L(f,g)
    :=
    \int_0^{\ell_e}(L_e f)(t)(L_e g)(t)\,dt.
    \label{eq:QuadForm}
\end{equation}

\begin{proposition}\label{lem:innerProdAndRKHS}
Let $\kappa,\tau>0$ and $0<\sigma_0^2<\infty$.  The process \eqref{eq:OU}
has Cameron--Martin space $H^1(e)$ with inner product
\[
    \langle f,g\rangle_{\mathcal H_u}
    =
    \tau^{2}a_e^L(f,g)+\sigma_0^{-2}f(0)g(0).
\]
\end{proposition}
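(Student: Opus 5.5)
The plan is to realise $u$ as the image of a Gaussian vector with known Cameron--Martin space under a bounded linear injection, and then transport the inner product.

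Write $u = T(u_0, W_e)$ with
\[
T(c,w)(t) = c\,e^{-\kappa t} + \tau^{-1}\int_0^t e^{-\kappa(t-s)}\,\dot w(s)\,ds .
\]
The driving pair $(u_0,W_e)$ has Cameron--Martin space $\mathbb R\times H^1_0([0,\ell_e])$. Identify $H^1_0$ with $L_2$ through $w=\int_0^\cdot h$. The inner product on this space is then
\[
\sigma_0^{-2}c c' + \int_0^{\ell_e} h h'\,dt .
\]
Under this identification the map becomes
\[
T(c,h)(t) = c\,e^{-\kappa t} + \tau^{-1}\int_0^t e^{-\kappa(t-s)}h(s)\,ds .
\]
It is bounded and linear into $C([0,\ell_e])$.

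The standard fact I will use is this: for a Gaussian element $X$ with Cameron--Martin space $\mathcal H_X$ and a bounded linear map $T$, the Cameron--Martin space of $TX$ is $T(\mathcal H_X)$. Its norm is the quotient norm $\|f\| = \inf\{\|x\|_{\mathcal H_X} : Tx = f\}$. If $T$ is injective on $\mathcal H_X$, the norm is simply $\|T^{-1}f\|$.

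Alternatively, the same conclusion can be checked directly through the reproducing property against the covariance
\[
r(s,t) = \pE[u(s)u(t)] = \sigma_0^2 e^{-\kappa(s+t)} + \tau^{-2}\int_0^{s\wedge t} e^{-\kappa(s-r)}e^{-\kappa(t-r)}\,dr .
\]
For fixed $t$, $r(\cdot,t)\in H^1(e)$. One then verifies $\langle f, r(\cdot,t)\rangle = f(t)$ by integration by parts. This uses that $L_e r(\cdot,t) = \tau^{-2}e^{-\kappa(t-\cdot)}\mathbf 1_{[0,t]}$, that $r(0,t)=\sigma_0^2 e^{-\kappa t}$, and that $(d/ds)\,e^{-\kappa(t-s)} = \kappa\,e^{-\kappa(t-s)}$. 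The boundary terms should cancel exactly against $\sigma_0^{-2}f(0)r(0,t)$. I would include this as a sanity check.

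The key steps, in order, are as follows.

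First, show that $T$ is a bijection from $\mathbb R\times L_2$ onto $H^1(e)$. Given $f\in H^1(e)$, set $c=f(0)$ and $h=\tau L_e f = \tau(\kappa f + f')\in L_2$. Then $T(c,h)=f$, by variation of constants: $g=f-ce^{-\kappa t}$ solves $g'+\kappa g=\tau^{-1}h$ with $g(0)=0$. Conversely, $T(c,h)$ is absolutely continuous with derivative $-\kappa T(c,h)+\tau^{-1}h\in L_2$, so it lies in $H^1(e)$. Injectivity follows from the same inversion formula.

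Second, conclude that
\[
\|f\|^2_{\mathcal H_u} = \sigma_0^{-2}f(0)^2 + \tau^2\int_0^{\ell_e}(L_e f)^2\,dt .
\]
Polarising gives the stated inner product.

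Third, confirm that this norm is equivalent to the $H^1$ norm, so the space is exactly $H^1(e)$ as a set and topologically. The upper bound is immediate. The lower bound follows from the bounded inverse $T$: $\|f\|_{H^1}\lesssim |f(0)| + \|L_e f\|_{L_2}$, since $f = T(f(0),\tau L_e f)$ and $T$ is bounded into $H^1$.

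The main obstacle is only bookkeeping: justifying the transfer principle for Cameron--Martin spaces under linear maps, and handling the white-noise/Brownian identification rigorously. If I want to avoid citing the transfer principle, I will fall back on the direct reproducing-kernel verification. Together with density of $\{r(\cdot,t)\}$ in $H^1(e)$, that verification identifies the RKHS of $r$, which for a Gaussian process is the Cameron--Martin space. Density follows because anything orthogonal to all $r(\cdot,t)$ vanishes pointwise.
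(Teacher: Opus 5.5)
Your proof is correct. It differs from the paper's mainly in how the pieces are put together.

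The paper splits $u=u_0h_0+u^\circ$ with $h_0(t)=e^{-\kappa t}$. It identifies $\mathcal H_{u^\circ}=\{f\in H^1(e):f(0)=0\}$, with norm $\tau^2\|L_ef\|_{L_2(e)}^2$, as the range of the Volterra operator (Parzen's Theorem 3C). It gives the initial-value term the space $\operatorname{span}\{h_0\}$ with weight $\sigma_0^{-2}$. It then glues the two via Lemma 9.1 of van der Vaart and van Zanten, after checking that the supports of the two independent summands meet only at zero. The decomposition $f=\{f-f(0)h_0\}+f(0)h_0$ finishes the proof.

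You instead push forward the joint driver $(u_0,W_e)$ in one step. Your quotient norm is the same infimum over decompositions $f=ch_0+\tau^{-1}Gh$ that the sum lemma controls. The elementary injectivity of $T$ on $\mathbb R\times L_2(e)$ replaces the support argument, and your inverse $T^{-1}f=(f(0),\tau L_ef)$ is exactly the paper's decomposition. This one-step pushforward is also the argument the paper itself uses at graph level in the proof of \Cref{prop:transfer-construction}, so your version matches that proof.

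For the transfer principle to apply literally, define $T$ on all of $\mathbb R\times C_0([0,\ell_e])$ through $\int_0^te^{-\kappa(t-s)}\,dw(s)=w(t)-\kappa\int_0^te^{-\kappa(t-s)}w(s)\,ds$. This map is bounded and agrees with your formula on $H^1_0$.

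Your fallback is a genuinely different, fully self-contained route. Integration by parts with $L_er(\cdot,t)=\tau^{-2}e^{-\kappa(t-\cdot)}\mathbf 1_{[0,t]}$ and $r(0,t)=\sigma_0^2e^{-\kappa t}$ does give $\langle f,r(\cdot,t)\rangle=f(t)$. Combined with the completeness supplied by your norm equivalence, uniqueness of the RKHS of $r$ then identifies the space, with no transfer or sum lemma needed.
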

For $g=f$, the two terms are the noise-input and initial-value
energies, respectively.
For comparison, let $\Delta_{N,e}$ be the Neumann Laplacian and write
\[
a_e^A(f,g):=\int_0^{\ell_e}\{f'g'+\kappa^2fg\}\,dt.
\]
The $\alpha=1$ Whittle--Mat\'ern field solving
$(\kappa^2I-\Delta_{N,e})^{1/2}(\tau u^A)=\dot W_e$ has Cameron--Martin
space $H^1(e)$ and inner product
$\langle f,g\rangle_{\mathcal H_{u^A}}=\tau^2a_e^A(f,g)$
\citep{bolin2024gaussian}.

\begin{lemma}\label{lem:edge-ips}
For $f,g\in H^1(e)$,
    $a_e^L(f,g)
    =
    a_e^A(f,g)+
    \kappa\{f(\ell_e)g(\ell_e)-f(0)g(0)\}$.
\end{lemma}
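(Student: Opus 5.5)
The plan is to expand the integrand in the definition \eqref{eq:QuadForm} of $a_e^L$ and then isolate the cross term, which will turn out to be an exact derivative. For $f,g\in H^1(e)$ we have $L_e f=\kappa f+f'$ and $L_e g=\kappa g+g'$, both in $L_2(e)$, so
\[
(L_ef)(L_eg)=f'g'+\kappa^2 fg+\kappa\,(f'g+fg')
\]
almost everywhere on $[0,\ell_e]$. The first two terms integrate to $a_e^A(f,g)$. The remaining work is to evaluate $\kappa\int_0^{\ell_e}(f'g+fg')\,dt$.

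For the cross term we will use the product rule for absolutely continuous functions. On a bounded interval every element of $H^1(e)$ has an absolutely continuous representative whose classical derivative equals the weak derivative almost everywhere, and the point values $f(0)$, $f(\ell_e)$ are understood as values of this representative (the traces). The product $fg$ is then absolutely continuous with $(fg)'=f'g+fg'\in L_1(e)$; this holds because $f,g$ are bounded and $f',g'\in L_2\subset L_1$ on the bounded interval. The fundamental theorem of calculus for absolutely continuous functions then gives
\[
\int_0^{\ell_e}(f'g+fg')\,dt=f(\ell_e)g(\ell_e)-f(0)g(0).
\]

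Multiplying by $\kappa$ and adding the result to $a_e^A(f,g)$ yields the claimed identity. The only step that needs any care, and the one I would flag as the main obstacle, is justifying the product rule and the boundary evaluation in the $H^1$ setting rather than for smooth functions. There are two ways to handle it. The first is to cite the absolute-continuity characterisation of $H^1$ on intervals as above. The second is to prove the identity for $f,g\in C^1([0,\ell_e])$ and extend by density: both sides are continuous bilinear forms on $H^1(e)$, because the trace maps $f\mapsto f(0)$ and $f\mapsto f(\ell_e)$ are bounded on $H^1(e)$ by the one-dimensional Sobolev embedding into $C([0,\ell_e])$, so an identity holding on a dense subspace holds everywhere.
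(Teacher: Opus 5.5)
Your proposal is correct and follows the paper's proof: expand $(L_ef)(L_eg)=\kappa^2fg+\kappa(fg)'+f'g'$ and integrate the middle term as an exact derivative. Your extra justification of the product rule and the endpoint traces via absolute continuity of $H^1$ functions on an interval, or alternatively by density, is sound; the paper leaves this step implicit.
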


Thus the directed and symmetric forms differ only at the endpoints. Under
stationary anchoring, \Cref{lem:innerProdAndRKHS} and
\Cref{lem:edge-ips} give
\begin{equation*}
    \langle f,g\rangle_{\mathcal H_u}
    =
    \tau^2\bigl\{
        a_e^A(f,g)+\kappa f(0)g(0)+\kappa f(\ell_e)g(\ell_e)
    \bigr\},
\end{equation*}
so the stationary directed edge model is the symmetric model with a soft
anchor of weight $\kappa\tau^2$ at each endpoint. 

\subsection{The model on a metric graph}
\label{sec:model}

Extend $L_e$ to $\Gamma$ edgewise by
$(\kappa+\partial_\Gamma)u:=\bigoplus_{e\in\mathcal E}L_eu_e$, so that
$\kappa+\partial_\Gamma$ acts locally along edges and all coupling
between edges is carried by conditions at the vertices. Given coefficients
$\Bin$, seek a centred Gaussian field $u=(u_e)_{e\in\mathcal E}$ satisfying
\begin{equation}
\label{eq:system}
\left\{
\begin{aligned}
L_e(\tau u_e)&=\dot W_e,
&& e\in\mathcal E,\\
u_e(v)&=\sum_{\hat e\in\mathcal E^{\mathrm{in}}_v}
\Bin_v(e,\hat e)\,u_{\hat e}(v),
&& v\in\mathcal V\setminus\mathcal V_-,\
e\in\mathcal E^{\mathrm{out}}_v,\\
u(v)&\sim\pN(0,\sigma_v^2),
&& v\in\mathcal V_-,
\end{aligned}
\right.
\end{equation}
where $\{W_e\}_{e\in\mathcal E}$ are independent standard Brownian
motions, the source values $\{u(v)\}_{v\in\mathcal V_-}$ are mutually
independent and independent of $\{W_e\}_{e\in\mathcal E}$, and
$0<\sigma_v^2<\infty$. Note that on a cyclic graph, \eqref{eq:system} is a stochastic
boundary-value problem, and not a causal It\^o evolution. 

Because $\kappa+\partial_\Gamma$ is first order, coupling is specified through endpoint values without separately imposing the derivative-matching or Kirchhoff flux condition of the standard symmetric continuous Whittle--Mat\'ern construction. This is why the directed formulation admits vertex conditions that the second-order Kirchhoff formulation cannot, and it is
the reason for the added flexibility claimed in \Cref{sec:intro}. 

We call a family $\Bin=(\Bin_v)_{v\notin\mathcal V_-}$ of coefficients as
in \eqref{eq:system} a \emph{forward} vertex condition, since it
generates each outgoing value from the incoming ones, and write
\[
\widetilde H^1_{\Bin}(\Gamma)
:=
\Bigl\{f\in\widetilde H^1(\Gamma):
f_e(v)=\!\!\sum_{\hat e\in\mathcal E^{\mathrm{in}}_v}\!\!\Bin_v(e,\hat e)f_{\hat e}(v)
\ \text{ for all } v\notin\mathcal V_-,\ e\in\mathcal E^{\mathrm{out}}_v\Bigr\}
\]
for the corresponding subspace of $\widetilde H^1(\Gamma)$. Not
every vertex condition of interest is forward. For a non-forward condition,
the second line of \eqref{eq:system} is replaced by its full linear trace
constraint, and the field is then instead defined through the energy form in \Cref{sec:rkhs}. We now discuss the specific
vertex conditions further.

\subsection{Vertex conditions}
\label{sec:vertexconditions}

For a nonsource $v$, fix $w_{v,\hat e}>0$, $\hat e\in\mathcal E_v^{\mathrm{in}}$,
and write
\[
p_{v,\hat e}:=\frac{w_{v,\hat e}}{\sum_{\tilde e\in\mathcal E^{\mathrm{in}}_v}w_{v,\tilde e}},
\qquad
\sum_{\hat e\in\mathcal E^{\mathrm{in}}_v}p_{v,\hat e}=1.
\]
In hydrological applications, $w_{v,\hat e}$ may be discharge or a proxy such as stream width or drainage area, and if the weights are proportional to discharge then $p_{v,\hat e}$ is a flow proportion. 
We consider three types of vertex conditions. The $K_1$ and $K_2$ rows below apply at every nonsource vertex, whereas the $C_V$ row applies only when $|\mathcal E_v^{\mathrm{in}}|=1$:
\begin{equation}
\label{eq:vertex-condition}
\Bin_v(e,\hat e)=
\begin{cases}
1, & \text{continuity, } C_V,\\[-5.1pt]
p_{v,\hat e}, & \text{flow-weighted coupling, } K_1,\\[-5.1pt]
\sqrt{p_{v,\hat e}}, & \text{square-root-weighted coupling, } K_2.
\end{cases}
\end{equation}
Continuity ($C_V$) is standard for symmetric metric-graph fields \citep{bolin2024gaussian}, and square-root weighting ($K_2$) is used in tail-up models \citep{Hoef06}. The remaining condition ($K_1$) is, to our knowledge, new for Gaussian processes on networks. 
Here $K_1$ and $K_2$ are forward conditions, and we
write $\widetilde H^1_{K_1}(\Gamma)$ and $\widetilde H^1_{K_2}(\Gamma)$ for the spaces $\widetilde H^1_{\Bin}(\Gamma)$ they determine. 
At a confluence, $C_V$ also
equates the incoming traces and is imposed directly through the space
\begin{equation*}
\widetilde H^1_{C_V}(\Gamma)
:=\bigl\{f\in\widetilde H^1(\Gamma):
f(\xi)=f(\xi')\ \text{for all }\xi,\xi'\in
\widetilde{\mathcal V}_v,\ v\in\mathcal V\bigr\}.
\end{equation*}
At a sink, $K_1$ and $K_2$ impose no constraint, whereas
$C_V$ still equates multiple incoming traces. See \Cref{app:matrix-bc} for the matrix forms of these conditions.

\subsection{What the vertex conditions do}
\label{sec:whatcondsdo}

\begin{figure}[H]
  \centering
\begin{tabular}{c@{\hskip 1em}c@{\hskip 1em}c}
\begin{tikzpicture}[gph]
  \inoutframe
  \node[vtx] (v1) at (-1,-1) {$v_1$};
  \node[vtx] (v2) at (0,0)   {$v_2$};
  \node[vtx] (v3) at (1,-1)  {$v_3$};
  \node[vtx] (v4) at (0,1)   {$v_4$};
  \draw[postaction={decorate}] (v1) -- (v2) node[midway, above left,  elab] {$e_2$};
  \draw[postaction={decorate}] (v3) -- (v2) node[midway, above right, elab] {$e_3$};
  \draw[postaction={decorate}] (v2) -- (v4) node[midway, right,       elab] {$e_1$};
\end{tikzpicture}
&
\begin{tikzpicture}[gph]
  \inoutframe
  \node[vtx] (v1) at (-1,-1) {$v_1$};
  \node[vtx] (v2) at (0,0)   {$v_2$};
  \node[vtx] (v3) at (1,-1)  {$v_3$};
  \node[vtx] (v4) at (0,1)   {$v_4$};
  \draw[postaction={decorate}] (v2) -- (v1) node[midway, above left,  elab] {$e_2$};
  \draw[postaction={decorate}] (v2) -- (v3) node[midway, above right, elab] {$e_3$};
  \draw[postaction={decorate}] (v4) -- (v2) node[midway, right,       elab] {$e_1$};
\end{tikzpicture}
&
\begin{tikzpicture}[gph]
  \inoutframe
  \node[vtx] (v1) at (-1,-1)      {$v_1$};
  \node[vtx] (v3) at (1,-1)       {$v_3$};
  \node[vtx] (v4) at (0,1)        {$v_4$};
  \node[vtx, font=\small] (a2) at (-0.24,-0.24) {$v_{e_2}$};
  \node[vtx, font=\small] (a3) at (0.24,-0.24)  {$v_{e_3}$};
  \node[vtx, font=\small] (a1) at (0,0.22)      {$v_{e_1}$};
  \draw[postaction={decorate}] (v1) -- (a2) node[midway, above left,  elab] {$e_2$};
  \draw[postaction={decorate}] (v3) -- (a3) node[midway, above right, elab] {$e_3$};
  \draw[postaction={decorate}] (a1) -- (v4) node[midway, right,       elab] {$e_1$};
\end{tikzpicture}
\\
{\small (a) $\Gamma_{\mathrm{in}}$} & {\small (b) $\Gamma_{\mathrm{out}}$} & {\small (c) split of $\Gamma_{\mathrm{in}}$}
\end{tabular}
  \caption{The confluence $\Gamma_{\mathrm{in}}$, the
  divergence $\Gamma_{\mathrm{out}}$, and the labelled endpoint copies of
  $\Gamma_{\mathrm{in}}$.}
  \label{fig:inout}
\end{figure}
\Cref{fig:inout}(a) is the simplest confluence at which $K_1$
and $K_2$ produce different outgoing traces. In panel (b), all three rules
coincide because there is one inflow, and panel (c) shows the labelled endpoint
copies. On $\Gamma_{\mathrm{in}}$ with unit weights,
\begin{align*}
K_1 &: u_{e_1}(v_2)
=\tfrac12u_{e_2}(v_2)+\tfrac12u_{e_3}(v_2),\\
K_2 &: u_{e_1}(v_2)
=\sqrt{\tfrac12}u_{e_2}(v_2)+\sqrt{\tfrac12}u_{e_3}(v_2).
\end{align*}
If the incoming traces share a value $z$, $K_1$ returns $z$ whereas $K_2$
returns $\sqrt2z$. If they are independent with common variance $\sigma^2$,
the corresponding outgoing variances are $\sigma^2/2$ and $\sigma^2$. Thus
$K_1$ preserves a common incoming value, while $K_2$ preserves marginal
variance under the stated assumptions. Neither condition imposes continuity
at a general confluence. 

The variance-preserving property of $K_2$ is not special to this example:
\Cref{cor:stationary-variance-K2} shows that on a directed tree, $K_2$
with stationary anchoring propagates a constant marginal variance through
the whole network, which is the behaviour built into the tail-up models
of \citet{Hoef06}; see \Cref{sec:tailupdown}.

When $w_{v,\hat e}$ is discharge, $K_1$ has a direct
conservation interpretation. For pollutant concentration or temperature,
the transported mass or heat flux is proportional to
$w_{v,\hat e}u_{\hat e}(v)$. At a one-outflow confluence, conservation gives
$u_e(v)=\sum_{\hat e}p_{v,\hat e}u_{\hat e}(v)$,
which is exactly $K_1$. With several outflows, this is the common perfectly
mixed value when total outflow equals total inflow. The square-root rule has
no analogous conservation interpretation, and conditional means are in general discontinuous under $K_2$ whereas they are continuous under $K_1$, see 
\Cref{fig:simple_directional}.

\begin{figure}[H]
\centering
\includegraphics*[width=0.49\linewidth]{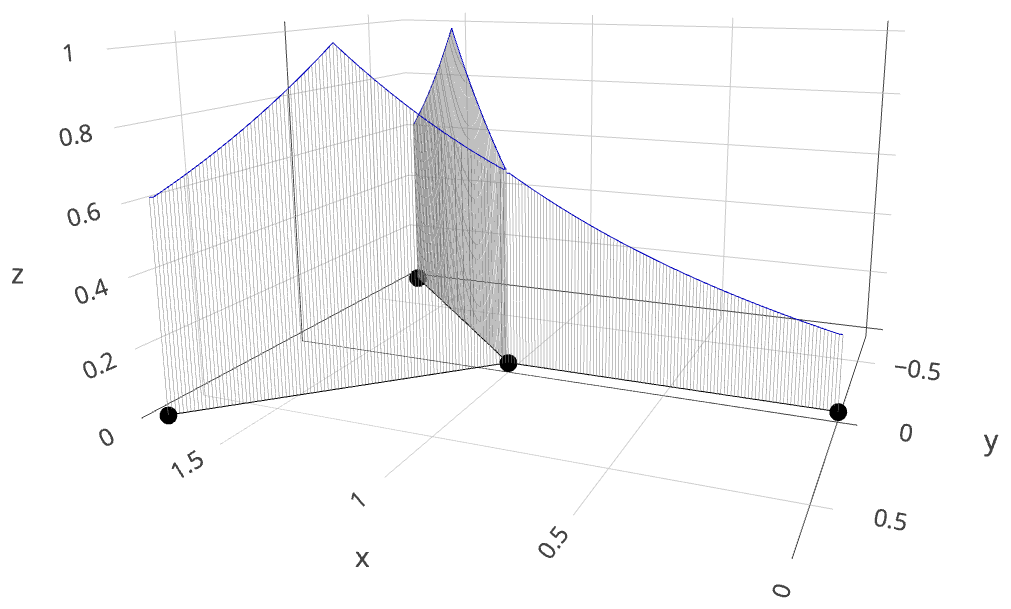}\hfill
\includegraphics*[width=0.49\linewidth]{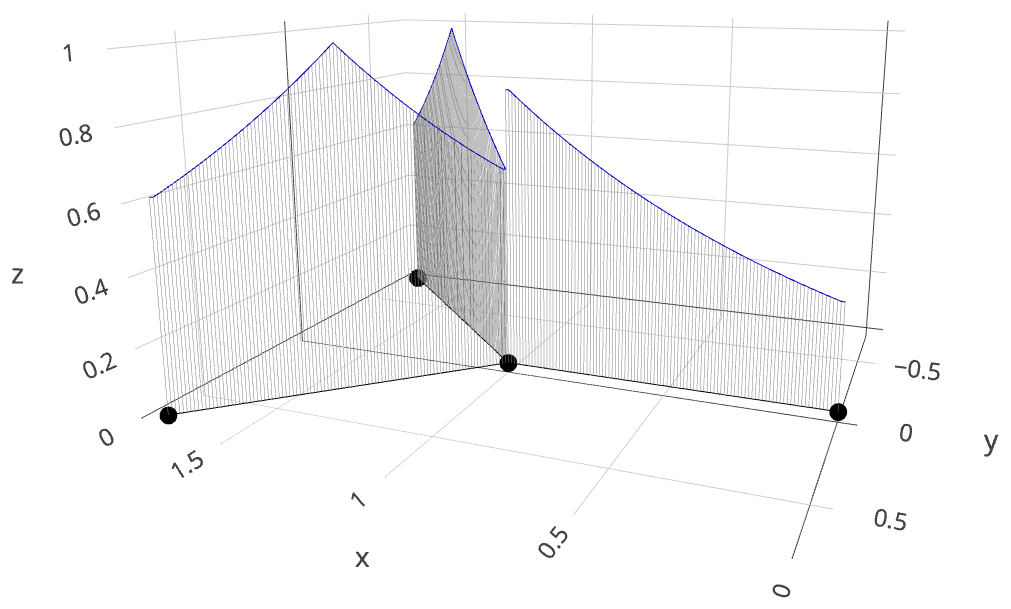}
\caption{Conditional means on $\Gamma_{\mathrm{in}}$ given
$u_{e_2}(0.5)=u_{e_3}(0.5)=1$, under flow-weighted $K_1$ (left) and
square-root-weighted $K_2$ (right). The former has matching traces at $v_2$,
whereas the latter has an amplified outgoing trace.}
\label{fig:simple_directional}
\end{figure}

\section{Well-posedness and the reproducing kernel Hilbert space}
\label{sec:rkhs}

The edge equations and source anchors in \eqref{eq:system} induce an energy form on
$\widetilde H^1(\Gamma)$. In this section we show that its restriction to a vertex-condition space is
positive definite exactly when the corresponding model is well posed, and we call
such graph--condition pairs admissible. In that case the form is the covariance
RKHS inner product, and we obtain an explicit representation of the process.

\subsection{The energy form and well-posedness}
\label{sec:energyform}
\label{sec:wellposed}

For $f,g\in\widetilde H^1(\Gamma)$ and strictly positive
anchoring coefficients $(c_v)_{v\in\mathcal V_-}$, define
\begin{equation}
    a_\Gamma^0(f,g)
    :=
    \sum_{v\in\mathcal V_-} c_v\, f(v)g(v)
    +
    \tau^{2}
    \sum_{e\in\mathcal E}
    a_e^L(f_e,g_e).
    \label{eq:q_semi}
\end{equation}
The two sums in \eqref{eq:q_semi} are the source-anchoring
and edgewise noise-input energies as in \Cref{lem:innerProdAndRKHS}. The correspondence with 
\eqref{eq:system} is $c_v=\sigma_v^{-2}$, while stationary anchoring gives
$c_v\equiv2\kappa\tau^2$. 
On $\widetilde H^1(\Gamma)$ this form is only positive semidefinite, with null
space given by zero source traces and $L_ef_e=0$ on every edge. The model is
well posed precisely when its vertex conditions eliminate this null space.

For the random field $u$, set $\eta_e:=u_e(0)$. Solving the
edge equation with integrating factor $e^{\kappa t}$ gives
\begin{equation}
\label{eq:edgewise-rep}
u_e(t)=e^{-\kappa t}\eta_e+\tau^{-1}\!\!\int_0^t
e^{-\kappa(t-s)}\,dW_e(s),
\qquad t\in[0,\ell_e].
\end{equation}
This is \eqref{eq:OU} with initial value $\eta_e$. Define
$\zeta_e:=\int_0^{\ell_e}e^{-\kappa(\ell_e-s)}\,dW_e(s)$ so that 
$\zeta_e\sim\pN\bigl(0,(1-e^{-2\kappa\ell_e})/(2\kappa)\bigr)$ are independent across edges,
and $u_e(\ell_e)=e^{-\kappa\ell_e}\eta_e+\tau^{-1}\zeta_e$. 
Substituting the terminal traces into a forward vertex condition $\Bin$ gives $\eta_e=(\Mtr\eta)_e+\xi_e$, where $\Mtr$ is the \emph{vertex transfer matrix}
\begin{equation}
\label{eq:transfer-matrix}
\Mtr_{e,\hat e}
:=
\begin{cases}
\Bin_v(e,\hat e)\,e^{-\kappa\ell_{\hat e}},
& v:=\operatorname{tail}(e)\notin\mathcal V_-
\text{ and }\hat e\in\mathcal E^{\mathrm{in}}_v,\\
0,&\text{otherwise,}
\end{cases}.
\end{equation}
Here $\xi_e$ collects the source and noise contributions. We write $\rho(\Mtr)$ for the spectral radius of $\Mtr$ and note that $\rho(\Mtr)<1$ guarantees that $I-\Mtr$ is invertible, but the converse need not hold.

\begin{lemma}
\label{lem:q_is_inner_product}
Let \Cref{ass:standing} hold, let $\kappa,\tau>0$, and let
the anchoring coefficients in \eqref{eq:q_semi} be strictly positive. For
$X\in\{C_V,K_1,K_2\}$, the form $a_\Gamma^0$ is positive definite on
$\widetilde H^1_X(\Gamma)$ if $\Gamma$ is acyclic or
$X\in\{C_V,K_1\}$. For a forward condition $\Bin$, it is positive definite
on $\widetilde H^1_{\Bin}(\Gamma)$ if and only if $I-\Mtr$ is invertible.
\end{lemma}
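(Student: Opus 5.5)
The plan is to reduce positive definiteness to the absence of nontrivial elements of the null space of $a_\Gamma^0$ inside the relevant constraint space, and then to analyse that null space through the initial values $\eta_e:=f_e(0)$. Both sums in \eqref{eq:q_semi} are nonnegative, since $c_v>0$, $\tau>0$ and each $a_e^L(f_e,f_e)=\|L_ef_e\|_{L_2(e)}^2\ge 0$. Hence $a_\Gamma^0$ is positive semidefinite on $\widetilde H^1(\Gamma)$, and $a_\Gamma^0(f,f)=0$ holds exactly when $f(v)=0$ for all $v\in\mathcal V_-$ and $L_ef_e=0$ on every edge. The second condition means $f_e(t)=\eta_e e^{-\kappa t}$. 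So on any closed subspace $S$, the form is positive definite iff the only $f\in S$ of this type with zero source traces is $f=0$. Equivalently, no nonzero vector $\eta\in\mathbb R^{\mathcal E}$ gives an element of $S$. For such $f$ the terminal trace is $f_e(\ell_e)=e^{-\kappa\ell_e}\eta_e$.

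\emph{Forward conditions.} Let $S=\widetilde H^1_{\Bin}(\Gamma)$ and take $f_e=\eta_e e^{-\kappa t}$.
\begin{itemize}
\item For an edge $e$ with $\operatorname{tail}(e)\notin\mathcal V_-$, the constraint reads $\eta_e=\sum_{\hat e}\Bin_v(e,\hat e)e^{-\kappa\ell_{\hat e}}\eta_{\hat e}=(\Mtr\eta)_e$.
\item For a source edge $e=e_v$, \Cref{ass:standing} gives $\eta_e=f(v)$. This must vanish for a null element, and the corresponding row of $\Mtr$ is zero by \eqref{eq:transfer-matrix}, so again $\eta_e=(\Mtr\eta)_e$.
\end{itemize}
Sinks impose no constraint and contribute nothing. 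Hence null elements in $S$ correspond bijectively to solutions of $(I-\Mtr)\eta=0$. Therefore $a_\Gamma^0$ is positive definite on $S$ iff $I-\Mtr$ is invertible, which is the stated equivalence.

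\emph{Specific conditions.}
\begin{itemize}
\item $K_1$ and $K_2$ are forward, so it suffices to show $I-\Mtr$ is invertible in the claimed cases.
\item If $\Gamma$ is acyclic, order the edges topologically. Then $\Mtr_{e,\hat e}\neq0$ only when $\operatorname{head}(\hat e)=\operatorname{tail}(e)$, so $\hat e$ precedes $e$. Thus $\Mtr$ is strictly triangular, hence nilpotent, and $I-\Mtr$ is invertible. This covers $K_1$ and $K_2$ on acyclic graphs.
\item For $K_1$ on a general graph, the rows of $\Mtr$ are nonnegative with sums $\sum_{\hat e}p_{v,\hat e}e^{-\kappa\ell_{\hat e}}\le e^{-\kappa\ell_{\min}}<1$ (or zero for source edges). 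So $\|\Mtr\|_\infty<1$, hence $\rho(\Mtr)<1$ and $I-\Mtr$ is invertible.
\end{itemize}

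\emph{$C_V$.} This condition is not forward at confluences, so I argue directly on the null space with a maximum principle. A null element $f$ in $\widetilde H^1_{C_V}(\Gamma)$ has a common vertex value $\varphi(v)$ at each vertex. Along each edge it satisfies $\varphi(\operatorname{head}(e))=e^{-\kappa\ell_e}\varphi(\operatorname{tail}(e))$, and $\varphi=0$ at sources. Choose a vertex $v$ maximising $|\varphi|$.
\begin{itemize}
\item If $v$ is a source, then $\varphi(v)=0$.
\item Otherwise $v$ has an incoming edge $\hat e$, and $|\varphi(v)|=e^{-\kappa\ell_{\hat e}}|\varphi(\operatorname{tail}\hat e)|\le e^{-\kappa\ell_{\hat e}}|\varphi(v)|$. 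This forces $\varphi(v)=0$.
\end{itemize}
In both cases $\varphi\equiv0$, so every $\eta_e=\varphi(\operatorname{tail}e)=0$ and $f=0$. This works without acyclicity.

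The main point of care is the null-space bookkeeping for forward conditions: checking that source-edge rows and sink vertices fit the identity $\eta=\Mtr\eta$ exactly, so that the equivalence holds in both directions. The $C_V$ case needs the separate maximum-principle argument because $C_V$ is not encoded by $\Mtr$.
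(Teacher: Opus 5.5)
Your proposal is correct and follows essentially the same route as the paper's proof. Both arguments reduce to zero-energy elements $f_e=\eta_e e^{-\kappa t}$ with vanishing source traces, and both use the identity $\eta=\Mtr\eta$ with zero source rows for the forward equivalence, $\|\Mtr\|_\infty\le e^{-\kappa\ell_{\min}}<1$ for $K_1$, nilpotency of $\Mtr$ on acyclic graphs, and a maximum principle for $C_V$. The only difference is cosmetic: you apply the maximum principle to the common vertex values $\varphi(v)$, whereas the paper applies it to the edge amplitudes $\eta_e$, and the two versions are equivalent.
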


For $K_1$, the proof uses only the absolute row-sum bound
$\sum_{\hat e}|\Bin_v(e,\hat e)|\leq1$ for every outgoing edge $e$, and
therefore applies to any forward condition with this property. The bound
fails for $K_2$ at a vertex with at least two inflows, because
$\sum_{\hat e}\sqrt{p_{v,\hat e}}>
\sum_{\hat e}p_{v,\hat e}=1$. This is harmless on an acyclic graph but can
prevent admissibility on a cyclic graph, as \Cref{ex:cyclic_K2_null_mode}
shows.

\begin{example}
\label{ex:cyclic_K2_null_mode}
On the graph in \Cref{fig:cyclic_K2_null_mode}, assign incoming weights
$(1,3)$ to $(e_1,e_4)$ at $v$ and equal weights to $(e_2,e_3)$ at $w$. A
null vector of $I-\Mtr$ has $\eta_1=0$, because $l$ is a source, and
satisfies
\[
\eta_2=\eta_3=\frac{\sqrt3}{2}\eta_4e^{-\kappa\ell_4},
\qquad
\eta_4=\frac{\eta_2e^{-\kappa\ell_2}+\eta_3e^{-\kappa\ell_3}}{\sqrt2}.
\]
Set $\ell_2=\ell_3$,
$\ell_2+\ell_4=(2\kappa)^{-1}\log(3/2)$, and $\eta_4=1$. Then
$\eta_2=\eta_3=(\sqrt3/2)e^{-\kappa\ell_4}\neq0$, so $1$ is an eigenvalue
of $\Mtr$ and, by \Cref{lem:q_is_inner_product},
$a_\Gamma^0$ is not positive definite.
\end{example}
\begin{figure}[H]
\centering
\begin{tikzpicture}[
    vertex/.style={circle,draw,minimum size=6mm,inner sep=1pt},
    edge/.style={->,>=stealth}
]
    \node[vertex] (s) at (0,0) {$l$};
    \node[vertex] (v) at (2.5,0) {$v$};
    \node[vertex] (w) at (6,0) {$w$};
    \draw[edge] (s) -- node[above] {$e_1$} (v);
    \draw[edge] (v) to[bend left=32] node[above] {$e_2$} (w);
    \draw[edge] (v) to[bend right=32] node[below] {$e_3$} (w);
    \draw[edge] (w) -- node[above] {$e_4$} (v);
    \node[font=\scriptsize] at (0,-0.55) {anchored};
\end{tikzpicture}
\caption{An anchored source feeding a two-branch directed
cycle. \Cref{ex:cyclic_K2_null_mode} shows that $K_2$ need not be admissible
on a cyclic graph.}
\label{fig:cyclic_K2_null_mode}
\end{figure}

Acyclicity is sufficient but not necessary for admissibility and $K_2$ can be admissible on cyclic graphs, such as the traffic network in
\Cref{sec:applications}. We now show that the energy space is an RKHS, and hence defines a Gaussian field, for every admissible pair $(\Gamma,X)$. This includes the non-forward condition $C_V$ at confluences.

\begin{proposition}
\label{lem:q_equiv_H1_graph}
\label{lem:q_RKHS_graph}
Under \Cref{ass:standing}, let $\kappa,\tau>0$, take
strictly positive anchoring coefficients in \eqref{eq:q_semi}, and let
$X\in\{C_V,K_1,K_2\}$. If $a_\Gamma^0$ is positive definite on
$\widetilde H^1_X(\Gamma)$, then
\(\langle f,g\rangle_{\Gamma,X}:=a_\Gamma^0(f,g)\)
is an inner product whose norm is equivalent to the ambient
$\widetilde H^1(\Gamma)$-norm, and
$\bigl(\widetilde H^1_X(\Gamma),
\langle\cdot,\cdot\rangle_{\Gamma,X}\bigr)$ is an RKHS
on $\widetilde\Gamma$.
\end{proposition}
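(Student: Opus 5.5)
Our plan is to show norm equivalence on the closed subspace $\widetilde H^1_X(\Gamma)$ and then deduce the RKHS property from continuity of point evaluation. First, $\widetilde H^1_X(\Gamma)$ is closed in $\widetilde H^1(\Gamma)$. Each defining constraint, whether forward ($K_1,K_2$) or equality of traces ($C_V$), is a finite linear relation among endpoint traces. The trace maps $f\mapsto f_e(0)$ and $f\mapsto f_e(\ell_e)$ are bounded on $H^1(e)$ by the one-dimensional Sobolev embedding $H^1(e)\hookrightarrow C([0,\ell_e])$, so $\widetilde H^1_X(\Gamma)$ is a finite intersection of kernels of bounded functionals and is therefore a Hilbert space in the ambient norm.

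The upper bound $a_\Gamma^0(f,f)\le C\|f\|_{\widetilde H^1}^2$ follows directly. We have $\|L_ef\|_{L_2}^2\le 2\|f'\|^2+2\kappa^2\|f\|^2$, and the source traces are bounded by the trace inequality.

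For the lower bound we use a compactness (Peetre/Tartar-type) argument. Suppose $c\|f\|_{\widetilde H^1}^2\le a_\Gamma^0(f,f)$ fails for every $c>0$. Then there are $f^n\in\widetilde H^1_X(\Gamma)$ with $\|f^n\|_{\widetilde H^1}=1$ and $a_\Gamma^0(f^n,f^n)\to0$. Because $\widetilde H^1(\Gamma)$ is a finite direct sum, Rellich gives a subsequence with $f^n\to f$ in $L_2(\Gamma)$ and $f^n\rightharpoonup f$ weakly in $\widetilde H^1$, with $f\in\widetilde H^1_X(\Gamma)$ since the subspace is weakly closed. The identity $(f^n_e)'=L_ef^n_e-\kappa f^n_e$ has $L_ef^n_e\to0$ in $L_2$ and $\kappa f^n_e$ Cauchy in $L_2$. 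Hence $(f^n)'$ is Cauchy in $L_2$, so $f^n\to f$ strongly in $\widetilde H^1$ and $\|f\|_{\widetilde H^1}=1$. Continuity of $a_\Gamma^0$ gives $a_\Gamma^0(f,f)=0$, which contradicts positive definiteness on $\widetilde H^1_X(\Gamma)$.

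A constructive alternative is possible using Lemma~\ref{lem:edge-ips} or the explicit edge solution: $f_e(t)=e^{-\kappa t}f_e(0)+\int_0^te^{-\kappa(t-s)}(L_ef_e)(s)\,ds$ bounds $\|f\|$ by $\|Lf\|$ plus the initial traces $\eta$, and the $\eta$ could be controlled through $(I-\Mtr)$. That route, however, only covers forward conditions, not $C_V$, so we prefer the compactness argument. We expect this lower bound to be the main step, and the main thing to check there is that strong convergence upgrades correctly.

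Finally, with the norm equivalence in hand, $\langle\cdot,\cdot\rangle_{\Gamma,X}$ is an inner product and the space is complete under it. For $(e,t)\in\widetilde\Gamma$ we have $|f_e(t)|\le C_e\|f_e\|_{H^1(e)}\le C\|f\|_{\Gamma,X}$, so point evaluation is continuous. The Riesz representation theorem then gives a reproducing kernel on $\widetilde\Gamma$, and $\bigl(\widetilde H^1_X(\Gamma),\langle\cdot,\cdot\rangle_{\Gamma,X}\bigr)$ is an RKHS.
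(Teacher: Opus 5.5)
Your proposal is correct and follows essentially the same route as the paper. Both arguments use closedness of the trace constraints, an upper bound from trace and $L_e$ continuity, and a compactness-by-contradiction lower bound that upgrades weak to strong $\widetilde H^1$ convergence via $\partial_e f_{n,e}=L_ef_{n,e}-\kappa f_{n,e}$. Both then conclude with bounded point evaluation and the Riesz representation theorem.
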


\subsection{Explicit representation}
\label{sec:transfer}
\label{sec:subdivision}

For a forward condition, the following proposition solves
\eqref{eq:system} on an arbitrary directed metric graph and identifies the
Cameron--Martin space of the resulting field with the RKHS in
\Cref{lem:q_RKHS_graph}.

\begin{proposition}\label{prop:transfer-construction}
Let \Cref{ass:standing} hold, let $\Bin$ be a forward
condition, and suppose $I-\Mtr$ is invertible. Then \eqref{eq:system} has a
unique solution, given by \eqref{eq:edgewise-rep} with
$\eta=\Mtr\eta+\xi$, or equivalently
\begin{equation}
\label{eq:fixedpoint}
\eta=(I-\Mtr)^{-1}\xi,
\qquad
\xi_e=
\begin{cases}
u(\operatorname{tail}(e)), & e\in\mathcal E_-,\\[2pt]
\tau^{-1}\!\!\sum_{\hat e\in\mathcal E^{\mathrm{in}}_v}\!\!\Bin_v(e,\hat e)\,\zeta_{\hat e},
& \text{otherwise,}
\end{cases}
\end{equation}
where $v=\operatorname{tail}(e)$. The field $u$ is centred
Gaussian and for distinct $e,e'$, the components $\xi_e$ and $\xi_{e'}$ are
independent unless the two edges leave the same interior vertex. The
Cameron--Martin space is $\widetilde H^1_{\Bin}(\Gamma)$ with inner product
$\langle f,g\rangle_{\mathcal H_u}:=a_\Gamma^0(f,g)$ and anchoring
coefficients $c_v=\sigma_v^{-2}$, $v\in\mathcal V_-$.
\end{proposition}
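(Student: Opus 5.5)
The proof has three parts: existence and uniqueness, the Gaussian and independence claims, and identification of the Cameron--Martin space.

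\emph{Existence and uniqueness.} On each edge, any solution of the first line of \eqref{eq:system} must have the form \eqref{eq:edgewise-rep} for some initial vector $\eta$. This follows from integrating the linear SDE \eqref{eq:edge-sde} with integrating factor $e^{\kappa t}$, exactly as for \eqref{eq:OU}. The terminal traces are then $u_{\hat e}(\ell_{\hat e})=e^{-\kappa\ell_{\hat e}}\eta_{\hat e}+\tau^{-1}\zeta_{\hat e}$.
\begin{itemize}
\item Substituting these into the forward condition at a nonsource tail, and using the source anchors for $e\in\mathcal E_-$ (each source being an inward leaf by \Cref{ass:standing}), shows that the vertex conditions hold if and only if $\eta=\Mtr\eta+\xi$, with $\Mtr$ from \eqref{eq:transfer-matrix} and $\xi$ from \eqref{eq:fixedpoint}.
\item Since $I-\Mtr$ is invertible, $\eta=(I-\Mtr)^{-1}\xi$ is the unique pathwise solution. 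Conversely, this $\eta$ plugged into \eqref{eq:edgewise-rep} satisfies every line of \eqref{eq:system}.
\end{itemize}

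\emph{Gaussianity and independence of $\xi$.} The field is a linear functional of the independent Gaussian family $\{u(v)\}_{v\in\mathcal V_-}\cup\{W_e\}$, so it is centred Gaussian. For the independence claim, $\xi_e$ depends only on the anchor at $\operatorname{tail}(e)$ (if $e\in\mathcal E_-$) or on $\{\zeta_{\hat e}:\hat e\in\mathcal E^{\mathrm{in}}_{\operatorname{tail}(e)}\}$. These sets are disjoint for edges with different tails, and two source edges never share a tail because sources are inward leaves. The $\zeta$'s are independent across edges, which gives the claim.

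\emph{Cameron--Martin space.} I will proceed in four steps.
\begin{enumerate}
\item Write $u=T(\theta)$, where $\theta=((u(v))_{v\in\mathcal V_-},(W_e)_e)$ and $T$ is the linear solution map. The driving noise has Cameron--Martin space $\mathbb R^m\times\bigoplus_e H^1_0$-type paths, with norm $\sum_v\sigma_v^{-2}h_v^2+\sum_e\|\dot h_e\|_{L_2}^2$.
\item The Cameron--Martin space of $u$ is the image under $T$, with the quotient norm. Here $T$ acts deterministically: given $(h_v,\dot h_e)$, it produces $f$ solving $\tau L_ef_e=\dot h_e$ on each edge, $f(v)=h_v$ at sources, and the forward condition at the other vertices.
\item By the argument of the first part applied deterministically, with invertibility of $I-\Mtr$, this $T$ is injective. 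Its range is contained in $\widetilde H^1_{\Bin}(\Gamma)$. It is onto $\widetilde H^1_{\Bin}(\Gamma)$ because any $f$ in that space is recovered from $\dot h_e:=\tau L_ef_e\in L_2(e)$ and $h_v:=f(v)$.
\item Since $T$ is injective, the quotient norm equals the norm of the preimage, $\sum_v\sigma_v^{-2}f(v)^2+\tau^2\sum_e a_e^L(f_e,f_e)=a_\Gamma^0(f,f)$. This gives the stated inner product with $c_v=\sigma_v^{-2}$, consistent with \Cref{lem:innerProdAndRKHS} edgewise.
\end{enumerate}

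An alternative for this part is to compute the covariance and verify the reproducing property directly against the RKHS from \Cref{lem:q_RKHS_graph} (noting \Cref{lem:q_is_inner_product} gives positive definiteness). I expect the image-of-noise route to be cleaner.

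The main obstacle is making the image-of-noise argument rigorous. It needs the standard fact that the Cameron--Martin space of a bounded linear image of a Gaussian measure is the image of its Cameron--Martin space with the quotient norm, applied on a suitable path space such as $\bigoplus_e C[0,\ell_e]$. It also needs a check that $T$ extends continuously and that the It\^o integral in \eqref{eq:edgewise-rep} matches the deterministic map on Cameron--Martin directions. The map is affine in the finitely many values $\zeta_{\hat e}$ and anchors, plus edgewise Volterra integrals, so continuity on $C$ holds after integrating by parts: $\int_0^t e^{-\kappa(t-s)}dW(s)=W(t)-\kappa\int_0^t e^{-\kappa(t-s)}W(s)\,ds$.
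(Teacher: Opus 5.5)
Your proposal is correct and follows essentially the same route as the paper. Existence and uniqueness come from reducing the vertex conditions to the fixed-point equation $\eta=\Mtr\eta+\xi$. Independence of the $\xi_e$ comes from disjointness of the in-edge sets at distinct tails. The Cameron--Martin space is the bijective image of the driver space $\mathbb R^m\oplus\bigoplus_e L_2(e)$ under the solution map. One small nit: injectivity of $T$ is immediate, since the controls are recovered as $h_v=f(v)$ and $\dot h_e=\tau L_ef_e$; invertibility of $I-\Mtr$ is needed to make $T$ well defined, not to make it injective. Your integration-by-parts argument for continuity of the solution map on path space is a useful detail that the paper leaves implicit.
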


Locality of $L_e$ also makes the law invariant under
subdivision.

\begin{proposition}
\label{prop:subdivision}
Let $X\in\{C_V,K_1,K_2\}$, and obtain $\Gamma_2$ from
$\Gamma_1$ by inserting a degree-two vertex at an interior point
$(e,t_0)$. Then
$a^0_{\Gamma_1}=a^0_{\Gamma_2}$ under the natural identification
of $\widetilde H^1_X(\Gamma_1)$ and $\widetilde H^1_X(\Gamma_2)$.
\end{proposition}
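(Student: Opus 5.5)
The plan is to make the natural identification explicit and then verify two things: it maps the constraint space for $X$ on $\Gamma_1$ onto the one on $\Gamma_2$, and the energy form is unchanged. Write $e$ for the subdivided edge of $\Gamma_1$, with $t_0\in(0,\ell_e)$. In $\Gamma_2$ it is replaced by $e'=[0,t_0]$ and $e''=[0,\ell_e-t_0]$, joined at a new vertex $w$ with $\mathcal E_w^{\mathrm{in}}=\{e'\}$ and $\mathcal E_w^{\mathrm{out}}=\{e''\}$. The identification $\Phi$ sends $f\in\widetilde H^1(\Gamma_1)$ to the function with $f_{e'}(t)=f_e(t)$ and $f_{e''}(t)=f_e(t_0+t)$, and leaves all other edges unchanged. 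Only the split edge and $w$ need any argument, since $\operatorname{tail}(e')=\operatorname{tail}(e)$, $\operatorname{head}(e'')=\operatorname{head}(e)$, and every other vertex and edge keeps its incidences.

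First I will show that $\Phi$ maps $\widetilde H^1_X(\Gamma_1)$ bijectively onto $\widetilde H^1_X(\Gamma_2)$. For injectivity and the image, I use that restricting an $H^1(e)$ function to $[0,t_0]$ and $[t_0,\ell_e]$ gives $H^1$ pieces with matching traces at $t_0$. Conversely, two $H^1$ pieces with equal traces at the junction glue to an $H^1(e)$ function. So the image of $\widetilde H^1(\Gamma_1)$ is exactly the set of $g\in\widetilde H^1(\Gamma_2)$ with $g_{e''}(0)=g_{e'}(t_0)$.

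I then check that at $w$ each condition reduces to exactly this equality. Since $|\mathcal E_w^{\mathrm{in}}|=1$, the single weight gives $p_{w,e'}=1$, so $K_1$, $K_2$ and $C_V$ all impose $\Bin_w(e'',e')=1$. The conditions at all other vertices are literally the same under $\Phi$, because the traces $f_e(0)=f_{e'}(0)$ and $f_e(\ell_e)=f_{e''}(\ell_e-t_0)$ appear with the same coefficients. For $C_V$ I also confirm that equality of all traces at $\operatorname{tail}(e)$ and $\operatorname{head}(e)$ is preserved, and that the equality at $w$ is the gluing condition. The source set is unchanged, because $w$ has an inflow. If $\operatorname{tail}(e)$ is a source, then $e'$ is its sole outgoing edge and $f(v)=f_{e'}(0)=f_e(0)$, so the anchoring terms agree.

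Second, I compare the forms. The anchoring sums coincide by the previous remark. For the edge sums, $L_e$ is local and translation invariant, so $(L_{e''}f_{e''})(t)=(L_ef_e)(t_0+t)$ almost everywhere. Splitting the integral in \eqref{eq:QuadForm} at $t_0$ then gives
\[
a_e^L(f_e,g_e)=a_{e'}^L(f_{e'},g_{e'})+a_{e''}^L(f_{e''},g_{e''}).
\]
Hence $a^0_{\Gamma_1}(f,g)=a^0_{\Gamma_2}(\Phi f,\Phi g)$.

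The only delicate step is the bookkeeping at $w$: showing that each $X$ imposes exactly continuity there and nothing more, and that $C_V$, being non-forward, is still covered. This is immediate once one notes that $p_{w,e'}=1$ forces $\sqrt{p}=p=1$. Self-loops need a quick check: if $e$ is a self-loop, its two ends stay distinct labelled incidences, now attached to $e'$ and $e''$ respectively, so the argument goes through unchanged.
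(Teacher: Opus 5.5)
Your proposal is correct and follows essentially the same route as the paper's proof. In both, the anchoring terms are unchanged because the inserted vertex is not a source, the edge form splits additively at $t_0$, and each of $C_V,K_1,K_2$ reduces at the new vertex to equality of the two labelled traces, which is exactly the gluing condition. Your version additionally makes explicit the identification map, the coordinate shift on the second piece, and the source-edge and self-loop cases.
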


Whenever the model is admissible, subdividing an edge leaves the law of
the centred Gaussian field unchanged under this identification. The same
holds when removing a vertex with one incoming and one outgoing edge.

\section{The acyclic case: recursion, proper OU processes and covariances}
\label{sec:acyclic}

\subsection{Forward recursion and proper OU processes}
\label{sec:forwardrecursion}
On an acyclic graph, the field can be represented as a system
of Ornstein--Uhlenbeck processes generated recursively from its sources. We
call this graph-level construction a proper global OU process.
\begin{definition}
\label{def:proper-global-OU}
Let $\Gamma$ be a finite acyclic directed metric graph
satisfying \Cref{ass:standing}, and let $\kappa,\tau>0$. A centred Gaussian
field $u=(u_e)_{e\in\mathcal E}$ indexed by
$\widetilde\Gamma$ is a \emph{proper global OU process} if
there are mutually independent standard Brownian motions
$\{W_e\}_{e\in\mathcal E}$ and centred Gaussian initial values
$\{\eta_e\}_{e\in\mathcal E}$ such that
\[
u_e(t)=e^{-\kappa t}\eta_e+\tau^{-1}\!\!\int_0^t e^{-\kappa(t-s)}\,dW_e(s),
\qquad t\in[0,\ell_e],\ e\in\mathcal E,
\]
where $\{\eta_e\}_{e\in\mathcal E_-}$ are mutually
independent, have finite positive variances, and are independent of
$\{W_e\}_{e\in\mathcal E}$, while, for every $e\notin\mathcal E_-$,
$\eta_e$ is measurable with respect to
$\sigma\bigl(u_{\hat e}(\ell_{\hat e}):
\hat e\in\mathcal E^{\mathrm{in}}_{\operatorname{tail}(e)}\bigr)$.
\end{definition}

The representation \eqref{eq:edgewise-rep} solves the
first-order equation on each edge. On an acyclic graph, $\eta_e$ depends only
on source values and upstream noises, and is therefore independent of $W_e$.
The next two results show that the forward construction in
\Cref{sec:transfer} characterizes all proper global OU processes.

\begin{lemma}
\label{lem:graph-OU-form}
Let $u$ be a proper global OU process on a finite acyclic
directed metric graph $\Gamma$ satisfying \Cref{ass:standing}. For every
$v\notin\mathcal V_-$, there is a deterministic matrix
$B_v\in\mathbb R^{|\mathcal E^{\mathrm{out}}_v|\times|\mathcal E^{\mathrm{in}}_v|}$
such that
$\bigl(\eta_e\bigr)_{e\in\mathcal E^{\mathrm{out}}_v}
=B_v\bigl(u_{\hat e}(\ell_{\hat e})\bigr)_{
\hat e\in\mathcal E^{\mathrm{in}}_v}$ almost surely. Hence $u$ solves
\eqref{eq:system} with the forward condition
${\Bin_v(e,\hat e)=(B_v)_{e,\hat e}}$. For each source
$v\in\mathcal V_-$, set $\sigma_v^2:=\Var(\eta_{e_v})$. The Cameron--Martin space of $u$ is
$\widetilde H^1_{\Bin}(\Gamma)$ with inner product
\[
\langle f,g\rangle_{\Gamma}^{\mathrm{OU}}
=
\tau^2\sum_{e\in\mathcal E}a_e^L(f_e,g_e)
+
\sum_{v\in\mathcal V_-}\sigma_v^{-2}\,f(v)g(v),
\qquad f,g\in\widetilde H^1_{\Bin}(\Gamma).
\]
\end{lemma}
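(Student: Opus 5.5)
The plan is to turn the $\sigma$-measurability in \Cref{def:proper-global-OU} into a linear relation using joint Gaussianity. Then I will check that the resulting forward condition is admissible on an acyclic graph, and read off the law and the Cameron--Martin space from \Cref{prop:transfer-construction}.

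\emph{Step 1 (linearity at each vertex).} Fix $v\notin\mathcal V_-$ and write $Y_v:=\bigl(u_{\hat e}(\ell_{\hat e})\bigr)_{\hat e\in\mathcal E^{\mathrm{in}}_v}$ and $\eta_v:=(\eta_e)_{e\in\mathcal E^{\mathrm{out}}_v}$. Since $u$ is a centred Gaussian field on $\widetilde\Gamma$ and $\eta_e=u_e(0)$, the vector $(\eta_v,Y_v)$ is jointly centred Gaussian. By assumption $\eta_v$ is $\sigma(Y_v)$-measurable, so $\eta_v=\pE[\eta_v\mid Y_v]$ almost surely. For jointly Gaussian centred vectors the conditional expectation is linear:
\[
\pE[\eta_v\mid Y_v]=\Cov(\eta_v,Y_v)\,\Cov(Y_v)^{+}Y_v ,
\]
with the Moore--Penrose inverse if $\Cov(Y_v)$ is singular. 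To justify this, note that $\Cov(Y_v)^{+}Y_v$ differs from a generalised inverse applied to $Y_v$ only on a null set, because $Y_v$ lies almost surely in the range of $\Cov(Y_v)$. Moreover, the residual $\eta_v-B_vY_v$ is uncorrelated with $Y_v$, hence independent of it. Setting $B_v:=\Cov(\eta_v,Y_v)\Cov(Y_v)^{+}$ gives a deterministic matrix with $\eta_v=B_vY_v$ almost surely. This is the only genuinely probabilistic point. The matrix $B_v$ need not be unique when $\Cov(Y_v)$ is singular, but any choice works.

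\emph{Step 2 (u solves the system).} With $\Bin_v(e,\hat e):=(B_v)_{e,\hat e}$, the edgewise representation in \Cref{def:proper-global-OU} is exactly \eqref{eq:edgewise-rep}, so the first line of \eqref{eq:system} holds. Step 1 gives the forward vertex condition almost surely. For the third line, the source values $u(v)=\eta_{e_v}$ are mutually independent, independent of the $W_e$, and have variances $\sigma_v^2\in(0,\infty)$. Hence $u$ solves \eqref{eq:system} with this forward condition $\Bin$.

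\emph{Step 3 (admissibility and Cameron--Martin space).} Order the edges topologically, which is possible since $\Gamma$ is acyclic. The entry $\Mtr_{e,\hat e}$ in \eqref{eq:transfer-matrix} is nonzero only when $\operatorname{head}(\hat e)=\operatorname{tail}(e)$, so $\hat e$ precedes $e$ in this order. Thus $\Mtr$ is strictly triangular, hence nilpotent, and $I-\Mtr$ is invertible (alternatively, invoke \Cref{lem:q_is_inner_product}). \Cref{prop:transfer-construction} then says that \eqref{eq:system} has a unique solution in law, given by $\eta=(I-\Mtr)^{-1}\xi$. That solution has Cameron--Martin space $\widetilde H^1_{\Bin}(\Gamma)$ with inner product $a^0_\Gamma$ and $c_v=\sigma_v^{-2}$. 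Expanding $a^0_\Gamma$ via \eqref{eq:q_semi} gives exactly $\langle\cdot,\cdot\rangle^{\mathrm{OU}}_\Gamma$.

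The step needing most care is identifying the law of $u$ with that solution. Here I will argue directly rather than by appeal to uniqueness. Substituting $u_{\hat e}(\ell_{\hat e})=e^{-\kappa\ell_{\hat e}}\eta_{\hat e}+\tau^{-1}\zeta_{\hat e}$ into $\eta_v=B_vY_v$ shows that $\eta$ satisfies $\eta=\Mtr\eta+\xi$ almost surely, with $\xi$ built from the given $W_e$ and source values. Invertibility of $I-\Mtr$ then gives $\eta=(I-\Mtr)^{-1}\xi$ pathwise. So $u$ coincides almost surely with the constructed field, and the two share a law and hence a Cameron--Martin space.
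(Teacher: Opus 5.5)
Your proof is correct and reaches the same conclusion by a somewhat different route in the Cameron--Martin part. Step 1 matches the paper. The paper isolates it as \Cref{lem:gaussian-measurable-linear}, arguing that the conditional law of the outgoing initial values given the incoming traces is a degenerate Gaussian; you instead write $B_v$ explicitly with a pseudo-inverse.

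The paper does not appeal to \Cref{prop:transfer-construction}. It asserts that acyclicity makes $\{\eta_{e_v}\}_{v\in\mathcal V_-}$ and $\{W_e\}_{e\in\mathcal E}$ a complete set of independent drivers, and then repeats the control computation directly:
\begin{itemize}
\item replacing $dW_e$ by $h_e\,dt$ and $\eta_{e_v}$ by $a_v$ yields paths in $\widetilde H^1_{\Bin}(\Gamma)$;
\item every $f$ in that space arises from $a_v=f(v)$ and $h_e=\tau L_ef_e$;
\item summing the driver contributions gives $\langle\cdot,\cdot\rangle^{\mathrm{OU}}_\Gamma$, with interior initial values adding no term.
\end{itemize}

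You instead show that $\Mtr$ is strictly triangular, which is the content of the later \Cref{cor:forward-substitution}. You then derive $\eta=\Mtr\eta+\xi$ pathwise from $\eta_v=B_vY_v$, conclude that $u$ coincides almost surely with the field built in \Cref{prop:transfer-construction} from the same drivers, and import its Cameron--Martin identification. This is legitimate, since that proposition is proved independently and earlier. Your reduction is shorter, and it makes rigorous the step the paper states only in one phrase: that $u$ is a linear functional of the source values and edge noises alone. The paper's version is self-contained in the acyclic setting and shows directly why only the source anchors and edge energies enter the inner product.

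Your caveat about a singular $\Cov(Y_v)$ never materialises. Your pathwise identity shows that each incoming $\eta_{\hat e}$ depends only on drivers strictly upstream of $\hat e$. On an acyclic graph, the terminal noises $\zeta_{\hat e}$, $\hat e\in\mathcal E^{\mathrm{in}}_v$, are therefore independent of each other and of $(\eta_{\hat e})_{\hat e\in\mathcal E^{\mathrm{in}}_v}$. Hence $\Cov(Y_v)$ is positive definite and $B_v$ is unique.
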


Consequently, with $c_v=\sigma_v^{-2}$ for
$v\in\mathcal V_-$, the Cameron--Martin inner product in
\Cref{lem:graph-OU-form} satisfies
$\langle f,g\rangle_{\Gamma}^{\mathrm{OU}}=a_\Gamma^0(f,g)$, with
$a_\Gamma^0$ defined in \eqref{eq:q_semi}.

\begin{corollary}
\label{cor:forward-substitution}
Let $\Gamma$ be acyclic and $\Bin$ a forward vertex
condition. Then $\Mtr^{|\mathcal E|}=0$, the field in
\Cref{prop:transfer-construction} is a proper global OU process with
$B_v=(\Bin_v(e,\hat e))$, and
\[
    (I-\Mtr)^{-1}
    =\sum_{k=0}^{|\mathcal E|-1}\Mtr^k.
\]
\end{corollary}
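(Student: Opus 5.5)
The plan is to prove nilpotency of $\Mtr$ first and read off the other two claims from it. Since $\Gamma$ is acyclic and finite, its edges admit a topological order. Call $e\prec e'$ if $\operatorname{head}(e)\leadsto\operatorname{tail}(e')$. Acyclicity makes $\prec$ a strict partial order: an edge with $e\prec e$ would close a directed cycle. Extend $\prec$ to a linear order, which we call a topological ordering of $\mathcal E$.

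By \eqref{eq:transfer-matrix}, $\Mtr_{e,\hat e}\neq0$ only if $\hat e\in\mathcal E^{\mathrm{in}}_{\operatorname{tail}(e)}$, that is, only if $\operatorname{head}(\hat e)=\operatorname{tail}(e)$. In that case $\hat e\prec e$. Listing the edges in the topological order therefore makes $\Mtr$ strictly lower triangular, so $\Mtr^{|\mathcal E|}=0$. One can also see this path-combinatorially. The entry $(\Mtr^k)_{e,\hat e}$ is a sum over chains $\hat e=e_0,e_1,\dots,e_k=e$ of consecutive edges. Each such chain is a directed route using $k+1$ distinct edges, because there are no cycles. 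Hence the sum is empty for $k\ge|\mathcal E|$.

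The Neumann series then terminates. The identity $(I-\Mtr)\sum_{k=0}^{|\mathcal E|-1}\Mtr^k=I-\Mtr^{|\mathcal E|}=I$ gives invertibility and the stated formula for $(I-\Mtr)^{-1}$. In particular the hypothesis of \Cref{prop:transfer-construction} holds, so the field $u$ given by \eqref{eq:edgewise-rep} and \eqref{eq:fixedpoint} exists and is unique.

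It remains to check \Cref{def:proper-global-OU}, and this is the step needing some care. The edgewise representation is already of the required form. For $e\in\mathcal E_-$ we have $\eta_e=u(\operatorname{tail}(e))$. These values are the independent source anchors with variance $\sigma_v^2\in(0,\infty)$, independent of all $W_e$, exactly as assumed in \eqref{eq:system}.

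For $e\notin\mathcal E_-$, the fixed-point equation $\eta=\Mtr\eta+\xi$ together with the definitions of $\Mtr$ and $\xi$ gives, with $v=\operatorname{tail}(e)$,
\[
\eta_e=\sum_{\hat e\in\mathcal E^{\mathrm{in}}_v}\Bin_v(e,\hat e)\bigl(e^{-\kappa\ell_{\hat e}}\eta_{\hat e}+\tau^{-1}\zeta_{\hat e}\bigr)
=\sum_{\hat e\in\mathcal E^{\mathrm{in}}_v}\Bin_v(e,\hat e)\,u_{\hat e}(\ell_{\hat e}).
\]
This is the identity $u_{\hat e}(\ell_{\hat e})=e^{-\kappa\ell_{\hat e}}\eta_{\hat e}+\tau^{-1}\zeta_{\hat e}$ used in \Cref{sec:energyform}. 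It shows that $\eta_e$ is a deterministic linear function of the incoming terminal traces, so it is measurable with respect to $\sigma\bigl(u_{\hat e}(\ell_{\hat e}):\hat e\in\mathcal E^{\mathrm{in}}_{\operatorname{tail}(e)}\bigr)$, with $B_v=(\Bin_v(e,\hat e))$.

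The other point to verify is consistency. The solution must be built from independent Brownian motions $W_e$, and we must confirm there is no hidden dependence between $\eta_e$ and $W_e$. Unrolling the finite series $\eta=\sum_k\Mtr^k\xi$ handles this. It shows $\eta_e$ depends only on source values and on $\zeta_{\hat e}$ for edges $\hat e\prec e$, so $\eta_e$ is independent of $W_e$. This matches the remark preceding \Cref{lem:graph-OU-form}, and with it the field is a proper global OU process.
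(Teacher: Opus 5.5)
Your proof is correct and follows essentially the same route as the paper. You order the edges topologically so that $\Mtr$ is strictly lower triangular, deduce nilpotency and the terminating Neumann series, invoke \Cref{prop:transfer-construction}, and get measurability by reading the fixed-point equation row by row as $\eta_e=\sum_{\hat e}\Bin_v(e,\hat e)u_{\hat e}(\ell_{\hat e})$ (your extra check that $\eta_e$ is independent of $W_e$ is harmless but not required by \Cref{def:proper-global-OU}, which imposes independence only at source edges).
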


Thus \eqref{eq:fixedpoint} is solved by forward substitution
in any topological edge ordering. Together with \Cref{lem:graph-OU-form},
this shows that the forward construction and the class of proper global OU
processes coincide on acyclic graphs. In particular, $K_1$ and $K_2$ yield
proper global OU processes. The same holds for $C_V$ when every interior
vertex has in-degree one. At a confluence, however, continuity also constrains the incoming traces; it is therefore not forward and does not give a proper global OU process.

\subsection{Covariance functions}
\label{sec:covariances}

The covariance structures of \citet{Hoef06} are obtained from
moving-average constructions on stream networks.  Here the covariance is
induced by \eqref{eq:system} instead, and is built from a \emph{transfer
factor} describing how a value propagates downstream.  Throughout this
subsection $u$ is a proper global OU process on $\Gamma$.
We additionally assume that $B_v$ has identical rows at
every interior vertex, as all three conditions in
\eqref{eq:vertex-condition} do whenever they are forward. Thus
$\Bin_v(e,\hat e)$ is independent of the outgoing edge $e$, and the outgoing
traces have a common value $u(v)$. Incoming traces remain edge-labelled
unless continuity is imposed.  Here and only here, $u(v)$ denotes that common
outgoing trace.

Using the labelled endpoint sets of \Cref{sec:graphs}, write
$\bar v=(\hat e,\ell_{\hat e})\in\mathcal V_v^{\mathrm{in}}$ and
$\underline v=(e,0)\in\mathcal V_v^{\mathrm{out}}$, and abbreviate
$\Bin_v(\underline v,\bar v):=\Bin_v(e,\hat e)$.  Then the vertex relation is
$u(\underline v)=\sum_{\bar v\in\mathcal V_v^{\mathrm{in}}}
\Bin_v(\underline v,\bar v)u(\bar v)$ almost surely.
The covariance formulas below use locations in $\Gamma^\circ:=\Gamma\setminus\mathcal V$, which have unique lifts to $\widetilde\Gamma$; labelled endpoint covariances follow by the corresponding one-sided edge limits.
For $x,y\in\Gamma^\circ$, let $\mathcal P(x,y)$ be the directed routes from $x$ to $y$; use the unique source lift or common outgoing state when an endpoint is a source or interior vertex. A route $\rho$ has length $\mathrm{len}(\rho)$ and routed-vertex set $P^\circ(\rho)$, which includes a terminal common-outgoing state but excludes an initial one. For $v\in P^\circ(\rho)$, $\bar v(\rho)$ and $\underline v(\rho)$ are its incoming and outgoing labelled ends.
Define the transfer
factor
\begin{equation}
\label{eq:transferFactor}
\mathsf A(x,y):=\sum_{\rho\in\mathcal P(x,y)}
\exp\bigl(-\kappa\,\mathrm{len}(\rho)\bigr)
\prod_{v\in P^\circ(\rho)}\Bin_v\bigl(\underline v(\rho),\bar v(\rho)\bigr),
\end{equation}
with the empty product equal to $1$ and the empty sum to $0$, so that
$\mathsf A(x,y)=0$ unless $x\leadsto y$.
Thus $\mathsf A$ combines edgewise decay with the routing
coefficients along every path. On an acyclic graph the sum is finite and,
after subdivision at interior endpoints, agrees with the corresponding
Neumann expansion of $(I-\Mtr)^{-1}$ from \Cref{sec:transfer}.
On a tree, let $\rho_{x,y}$ be the unique directed path, when it exists, and set $P^\circ(x,y):=P^\circ(\rho_{x,y})$, $\bar v(x,y):=\bar v(\rho_{x,y})$, and $\underline v(x,y):=\underline v(\rho_{x,y})$. Then \eqref{eq:transferFactor} reduces to
\[
\mathsf A(x,y)=\exp\bigl(-\kappa d(x,y)\bigr)
\prod_{v\in P^\circ(x,y)}\Bin_v\bigl(\underline v(x,y),\bar v(x,y)\bigr),
\qquad x\leadsto y.
\]

Write $Z_j:=u(s_{-,j})$ for the value of the field at the $j$th source vertex.
For $y\in\Gamma^\circ$ set ${\mathcal J(y):=\{j:s_{-,j}\leadsto y\}}$ and define
the upstream ancestral subgraph
$\Lambda^\uparrow(y):=\{z\in\Gamma:z\leadsto y\}$.
Integrals over such subgraphs are understood edgewise; the
convention and the resulting variance identities are collected in
\Cref{supp:S3}.

\begin{theorem}\label{thm:cov-kernel}
Let $\Gamma$ be a finite acyclic directed metric graph and $u$ a proper
global OU process on $\Gamma$ such that, for every interior vertex $v$,
the matrix $B_v$ has identical rows, with source variances $\sigma^2_{s_{-,j}}$.
Write $\mathcal C^\uparrow(x,y):=\Lambda^\uparrow(x)\cap\Lambda^\uparrow(y)$
for the common ancestral set.  Then, for all
$x,y\in\Gamma^\circ$, $r(x,y)=\Cov\bigl(u(x),u(y)\bigr)$ satisfies
\begin{equation}
\label{eq:cov-kernel}
r(x,y)=\!\!\sum_{j\in\mathcal J(x)\cap\mathcal J(y)}\!\!
\sigma^2_{s_{-,j}}\,\mathsf A(s_{-,j},x)\,\mathsf A(s_{-,j},y)
+\tau^{-2}\!\!\int\limits_{\mathcal C^\uparrow(x,y)}\!\!
\mathsf A(z,x)\,\mathsf A(z,y)\,dz.
\end{equation}
\end{theorem}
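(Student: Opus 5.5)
The plan is to prove a linear representation first and compute the covariance from it:
\[
u(y)=\sum_{j\in\mathcal J(y)}\mathsf A(s_{-,j},y)\,Z_j+\tau^{-1}\int_{\Lambda^\uparrow(y)}\mathsf A(z,y)\,dW(z),
\]
where $dW$ denotes the edgewise white noise $\bigoplus_e dW_e$. Once this holds, \eqref{eq:cov-kernel} follows by the It\^o isometry together with independence. The $Z_j$ are mutually independent and independent of all $W_e$, since $\Gamma$ is acyclic and the process is a proper global OU process. The noises on distinct edges are independent, so the cross terms vanish. The source contribution gives $\sum_j\sigma^2_{s_{-,j}}\mathsf A(s_{-,j},x)\mathsf A(s_{-,j},y)$, with $j$ restricted to $\mathcal J(x)\cap\mathcal J(y)$ because $\mathsf A$ vanishes otherwise. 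The noise contribution gives $\tau^{-2}\int\mathsf A(z,x)\mathsf A(z,y)\,dz$ over the intersection of the supports, namely $\mathcal C^\uparrow(x,y)$.

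I would establish the representation by induction along a topological ordering of the edges, which exists by acyclicity (\Cref{cor:forward-substitution}). By \Cref{lem:graph-OU-form} the process solves \eqref{eq:system} with $\Bin_v=B_v$. By the identical-rows assumption, every outgoing edge at an interior $v$ has the same initial value $u(v)=\sum_{\bar v}\Bin_v(\underline v,\bar v)u(\bar v)$.

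\textbf{Source edge.} For $y=(e,t)$ on a source edge, \eqref{eq:edgewise-rep} gives $u(y)=e^{-\kappa t}Z_j+\tau^{-1}\int_0^t e^{-\kappa(t-s)}dW_e(s)$. Here $e^{-\kappa t}=\mathsf A(s_{-,j},y)$ and $e^{-\kappa(t-s)}=\mathsf A((e,s),y)$, since each of these has a single route with no routed vertices.

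\textbf{Inductive step.} Take $y=(e,t)$ with $v=\operatorname{tail}(e)$ interior, and assume the representation holds at each incoming terminal end $\bar v=(\hat e,\ell_{\hat e})$, understood as one-sided limits. Then
\[
u(y)=e^{-\kappa t}\sum_{\bar v}\Bin_v(\underline v,\bar v)\,u(\bar v)+\tau^{-1}\int_0^t e^{-\kappa(t-s)}dW_e(s).
\]
Substituting the inductive representation, I need the path-concatenation identity
\[
\mathsf A(z,y)=e^{-\kappa t}\sum_{\bar v\in\mathcal V_v^{\mathrm{in}}}\Bin_v(\underline v,\bar v)\,\mathsf A(z,\bar v)\qquad\text{for } z\notin e .
\]
It holds because every directed route from $z$ to $y$ passes through $v$ and enters via exactly one incoming end. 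The route therefore factors uniquely as a route from $z$ to $\bar v$, followed by the vertex $v$ with coefficient $\Bin_v(\underline v,\bar v)$, followed by the segment along $e$ of length $t$. Lengths add, and the products of coefficients multiply.

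The set $\Lambda^\uparrow(y)$ decomposes as $[0,t]$ on $e$ together with the union of the $\Lambda^\uparrow(\bar v)$. These pieces may overlap, because of diamonds in a DAG, but linearity of the sum over $\bar v$ then correctly adds the contributions of the different routes. This is exactly why $\mathsf A$ is defined as a sum over routes. Similarly $\mathcal J(y)=\bigcup_{\bar v}\mathcal J(\bar v)$.

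I expect the main obstacle to be bookkeeping rather than analysis. Three points need care. First, the conventions for $P^\circ(\rho)$ when endpoints are vertices: the terminal common-outgoing state is included and the initial one excluded, and this must make the concatenation identity hold exactly, with no double-counted coefficient at $v$. Second, the overlapping noise integrals must be written edgewise, with the integrand on each edge $\hat e$ equal to the total $\mathsf A(z,y)$, so that the It\^o isometry applies edge by edge and yields $\int\mathsf A(z,x)\mathsf A(z,y)\,dz$. Third, points $x,y$ in $\Gamma^\circ$ need the passage from the vertex traces to the interior points. I would handle these by stating the concatenation identity as a separate lemma, proved by induction on the number of routed vertices, and invoking \Cref{prop:subdivision} if necessary to assume that $x$ and $y$ lie on distinct edges or at convenient positions.
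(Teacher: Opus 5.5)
Your proposal is correct and follows essentially the same route as the paper. The paper first establishes the transfer representation $u(y)=\sum_{j\in\mathcal J(y)}\mathsf A(s_{-,j},y)Z_j+\tau^{-1}\int_{\Lambda^\uparrow(y)}\mathsf A(z,y)\,dW(z)$ (\Cref{lem:OU-transfer-representation}) by iterating the edgewise OU relation and the vertex relations along a topological ordering, and then obtains \eqref{eq:cov-kernel} from the independence of the source values and edge noises together with the It\^o isometry. Your explicit path-concatenation induction is a more careful version of the paper's ``coefficients add over distinct paths'' step, and the appeal to subdivision is unnecessary because the representation already covers two points on the same edge.
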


Every driver upstream of both points contributes, weighted by the transfer
factor to each.  Expanding the two factors as sums over paths writes
\eqref{eq:cov-kernel} as a sum over \emph{pairs} of directed paths issuing
from a common ancestor, which is the continuous analogue of
the trek rule for linear structural equation models
\citep[Thm.~4.1]{Drton2018}.

\begin{remark}
\label{rem:cyclic-kernel}
On a cyclic graph the transfer operator remains
$(I-\Mtr)^{-1}$ whenever the inverse exists, and $\rho(|\Mtr|)<1$ guarantees
absolute convergence of its directed-path expansion. We do not claim
\eqref{eq:cov-kernel} in this case, because the stochastic representation and
interchange of the infinite path sums also require justification.
\end{remark}

Taking $x=y$ in \eqref{eq:cov-kernel} gives the general
variance formula, and along one edge it reduces to the usual OU recursion;
both identities are stated in \Cref{cor:variance-formula}.
On a tree equation \eqref{eq:cov-kernel} collapses.  If
$\mathcal C^\uparrow(s,t)\neq\varnothing$ then this set has a unique
maximal element for the partial order $x\preceq y\iff x\leadsto y$, as the
proof of \Cref{thm:cov-lca} shows; we denote it by $a(s,t)$ and call it the
\emph{last common ancestor} of $s$ and $t$.

\begin{corollary}\label{thm:cov-lca}
Let $\Gamma$ be a finite directed metric tree and $u$ a proper global OU
process on $\Gamma$.  Then, for $s,t\in\Gamma^\circ$,
\[
r(s,t)=
\begin{cases}
r(a,a)\,\mathsf A(a,s)\,\mathsf A(a,t),
& \text{if } a:=a(s,t)\text{ exists},\\[0.3em]
0, & \text{if } \mathcal C^\uparrow(s,t)=\varnothing.
\end{cases}
\]
In particular, if $t\leadsto s$ then $a(s,t)=t$ and
$r(s,t)=r(t,t)\,\mathsf A(t,s)$.
\end{corollary}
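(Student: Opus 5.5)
The plan is to specialise \Cref{thm:cov-kernel} to a tree and factor both terms of \eqref{eq:cov-kernel} through the last common ancestor. First I establish the structural fact about $\mathcal C^\uparrow(s,t)$ on a directed tree. Since the underlying undirected graph is a tree, each $\Lambda^\uparrow(y)$ is the union of the unique directed paths into $y$. So if $z\in\mathcal C^\uparrow(s,t)$, the directed routes $z\leadsto s$ and $z\leadsto t$ are the unique undirected geodesics. These two geodesics share an initial segment and then separate at a single branching point $a$. If they had a further common point, the undirected graph would contain a cycle.

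I then show that $a$ does not depend on $z$ and is the maximum of $\mathcal C^\uparrow(s,t)$ for $\preceq$. Take two common ancestors $z,z'$. The route $z\leadsto s$ and the route $z'\leadsto s$ both end at $s$, so they merge at some point and coincide afterwards. The same holds for $t$. Uniqueness of undirected paths then forces the branching points to agree. It follows that every $z\in\mathcal C^\uparrow(s,t)$ satisfies $z\leadsto a$, and that $\mathcal C^\uparrow(s,t)=\Lambda^\uparrow(a)$. If $t\leadsto s$, then $t$ is itself a common ancestor, and it is maximal because any common ancestor reaches $t$. Hence $a=t$. Establishing this set identity rigorously is the main obstacle, especially when $a$ lies at a vertex, so the common outgoing state convention matters.

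Next comes the multiplicativity of the transfer factor. For $z\leadsto a\leadsto x$ on a tree, the unique route from $z$ to $x$ passes through $a$. Length is additive, so
\[
\mathsf A(z,x)=\mathsf A(z,a)\,\mathsf A(a,x).
\]
The routed-vertex conventions give this directly: $P^\circ$ includes a terminal common-outgoing state and excludes an initial one, so the vertex at $a$, if any, is counted exactly once. Here the identical-rows assumption is used, since $u(a)$ is a common outgoing state. In the tree setting this assumption is implicit for $C_V$, $K_1$ and $K_2$. For a general proper OU process I would either assume identical rows as in \Cref{thm:cov-kernel}, or choose $a$ on an edge interior by subdivision via \Cref{prop:subdivision}. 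The same holds for a source $s_{-,j}\leadsto a$. Moreover $\mathcal J(s)\cap\mathcal J(t)=\mathcal J(a)$.

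Finally I substitute into \eqref{eq:cov-kernel} and factor out $\mathsf A(a,s)\mathsf A(a,t)$:
\[
r(s,t)=\mathsf A(a,s)\mathsf A(a,t)\Bigl[\sum_{j\in\mathcal J(a)}\sigma^2_{s_{-,j}}\mathsf A(s_{-,j},a)^2+\tau^{-2}\int_{\Lambda^\uparrow(a)}\mathsf A(z,a)^2\,dz\Big].
\]
The bracket is \eqref{eq:cov-kernel} evaluated at $x=y=a$, which is $r(a,a)$. If $a$ is a vertex, I use one-sided limits on its outgoing edge, which is continuous in the point by the OU representation.

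If $\mathcal C^\uparrow(s,t)=\varnothing$, then both $\mathcal J(s)\cap\mathcal J(t)$ and the integration domain are empty, so $r(s,t)=0$. The special case $t\leadsto s$ follows because $a=t$ and $\mathsf A(t,t)=1$, coming from the zero-length route with an empty product.
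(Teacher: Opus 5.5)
Your overall route matches the paper's: find a unique maximal common ancestor $a$, deduce $\mathcal C^\uparrow(s,t)=\Lambda^\uparrow(a)$ and $\mathcal J(s)\cap\mathcal J(t)=\mathcal J(a)$, factor $\mathsf A(z,s)=\mathsf A(z,a)\mathsf A(a,s)$ through $a$, and recognise the bracket as $r(a,a)$. The gap is in the one step that carries the content: showing that the branching point $a_z$ of the routes $z\leadsto s$ and $z\leadsto t$ does not depend on $z$.

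You attribute this to ``uniqueness of undirected paths'', but that cannot suffice. In the underlying tree, $a_z$ is just the median of $z,s,t$, and in an undirected tree the median varies with $z$. For example, take an $s$--$t$ path $s,m,m',t$ with $z$ hanging off $m$ and $z'$ hanging off $m'$. The branching points are $m\neq m'$, and the ``merge and coincide afterwards'' behaviour you describe happens in this picture too. What rules this out is the orientation, which your sketch never uses at this step.

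The missing argument, which is the paper's, runs as follows.
\begin{itemize}
\item The routes from $a_z$ to $s$ and to $t$ meet only at $a_z$, so their concatenation is the undirected $s$--$t$ path $P$. Hence every $a_z$ lies on $P$.
\item Suppose $a_z\neq a_{z'}$, with $a_z$ between $s$ and $a_{z'}$ on $P$. Then the directed route $a_{z'}\leadsto s$ passes through $a_z$, and the directed route $a_z\leadsto t$ passes through $a_{z'}$.
\item So the segment of $P$ between them is traversed in both directions, giving a directed cycle. A tree does not admit one, since edge coordinates cannot increase both ways.
\end{itemize}
With that, $a$ is common to all $z$, lies in $\mathcal C^\uparrow(s,t)$, and dominates it. The rest of your proof then goes through as written.

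A smaller point: your fallback of moving $a$ off a vertex by subdivision does not work. When neither of $s,t$ is upstream of the other, the routes from $a$ diverge immediately. That is only possible at a vertex with at least two outgoing edges, and inserting degree-two vertices does not change this; \Cref{prop:subdivision} also only covers $C_V,K_1,K_2$. Without identical rows, the two outgoing traces at $a$ are different linear combinations of the incoming traces, and the factorisation through a single $r(a,a)$ fails in general. So the identical-rows assumption, which is standing in \Cref{sec:covariances}, is genuinely needed. Given that assumption, your treatment of $r(a,a)$ at a vertex via the one-sided limit along an outgoing edge is fine.
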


Thus, on a tree, covariance is the variance at the last common
ancestor propagated to both points, with strength determined by the routing
coefficients. Points without a common ancestor are independent; with strictly
positive coefficients the converse also holds. The next result shows that
$K_2$ preserves the stationary variance
$\sigma^2=(2\kappa\tau^2)^{-1}$.

\begin{corollary}\label{cor:stationary-variance-K2}
Let $\Gamma$ be a finite directed metric tree and let $u$ be a proper
global OU process on $\Gamma$ with vertex condition $K_2$.  If
$\Var\bigl(u(s_{-,j})\bigr)=\sigma^2$ for every $j=1,\dots,m$, then
$r(t,t)=\sigma^2$ for every $t\in\widetilde\Gamma$.
\end{corollary}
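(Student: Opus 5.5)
The plan is to use the edge-wise OU variance recursion together with a vertex-level variance computation, and to propagate constant variance downstream by induction over a topological ordering of the edges. On a directed tree with the $K_2$ condition, each interior vertex $v$ has outgoing traces all equal to the common value $u(v)=\sum_{\hat e\in\mathcal E_v^{\mathrm{in}}}\sqrt{p_{v,\hat e}}\,u_{\hat e}(\ell_{\hat e})$. The first step is to check that the incoming terminal traces at $v$ are mutually independent. Since $\Gamma$ is a tree, the ancestral subgraphs $\Lambda^\uparrow(\bar v)$ of distinct incoming ends of $v$ are disjoint: a common ancestor would create two distinct undirected paths to $v$, hence a cycle in the underlying graph. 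By \Cref{thm:cov-lca}, or directly by \eqref{eq:cov-kernel}, points with empty common ancestral set have zero covariance. Because they are jointly Gaussian, they are independent.

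Granting this, $\Var(u(v))=\sum_{\hat e}p_{v,\hat e}\Var(u_{\hat e}(\ell_{\hat e}))$. If every incoming terminal variance equals $\sigma^2$, this gives $\sigma^2\sum_{\hat e}p_{v,\hat e}=\sigma^2$.

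Along a single edge, \eqref{eq:edgewise-rep} gives
\[
\Var(u_e(t))=e^{-2\kappa t}\Var(\eta_e)+\tau^{-2}\frac{1-e^{-2\kappa t}}{2\kappa},
\]
using the independence of $\eta_e$ from $W_e$ noted after \Cref{def:proper-global-OU}. With $\Var(\eta_e)=\sigma^2=(2\kappa\tau^2)^{-1}$ this equals $\sigma^2$ for all $t\in[0,\ell_e]$. This is the variance recursion referred to in \Cref{cor:variance-formula}.

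The induction then runs as follows. Order the edges topologically, which is possible by \Cref{cor:forward-substitution}. For a source edge, $\Var(\eta_e)=\sigma^2$ by hypothesis. For any other edge $e$, all incoming edges at $\operatorname{tail}(e)$ precede $e$ in the ordering, so by the induction hypothesis their terminal variances equal $\sigma^2$. The vertex computation then gives $\Var(\eta_e)=\sigma^2$, and the edge computation extends this to every point of $e$, including the labelled endpoints $(e,0)$ and $(e,\ell_e)$. This covers every $t\in\widetilde\Gamma$.

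The main obstacle is the independence of incoming traces at a confluence. I would argue it cleanly via the tree property: the upstream subtrees feeding distinct incoming edges share no edges or sources. Each terminal trace is measurable with respect to the source values and Brownian motions in its own upstream subtree, as the forward recursion shows. Disjoint families of independent drivers therefore give independent traces, without needing the full covariance kernel.
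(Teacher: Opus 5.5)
Your proposal is correct and follows essentially the same route as the paper's proof. Both preserve the stationary variance along each edge via the OU recursion, preserve it across each interior vertex using $\sum_{\hat e}p_{v,\hat e}=1$ and independence of the incoming traces (derived, as in the paper, from disjoint upstream driver families on a tree), and then induct over a topological edge ordering. Of your two independence arguments, the driver-measurability one is the better choice: the covariance-kernel route is stated only for points of $\Gamma^\circ$, so it would need one-sided edge limits to reach the labelled endpoints $(\hat e,\ell_{\hat e})$.
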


\section{Relation to symmetric and stream-network models}
\label{sec:relation}
\label{seq:Constraint_comparison}

We now relate the construction to the symmetric
Whittle--Mat\'ern fields of \citet{bolin2024gaussian}, the tail-up models
\citep{Hoef06,Hoef10}, and the tail-down models \citep{Hoef10}.

\subsection{Symmetric versus directional forms under continuity}
\label{sec:symmvsdir}

By \Cref{lem:edge-ips}, the directed and symmetric edge forms
differ only at their endpoints. On the continuity domain, write
$\langle\cdot,\cdot\rangle^{\mathrm{sym}}_\Gamma:=\tau^2\sum_e a^A_e$ for
the Cameron--Martin inner product of the $\alpha=1$ Whittle--Mat\'ern field.
Summing the endpoint terms gives the following graph-level identity.

\begin{proposition}\label{prop:sym-vs-dir}
Let \Cref{ass:standing} hold and let $f,g\in\widetilde H^1_{C_V}(\Gamma)$.
Then
\begin{equation}
\label{eq:sym-vs-dir}
a_\Gamma^0(f,g)
=\langle f,g\rangle^{\mathrm{sym}}_\Gamma
+\kappa\tau^2\sum_{v\in\mathcal V}
\bigl(|\mathcal E^{\mathrm{in}}_v|-|\mathcal E^{\mathrm{out}}_v|\bigr)f(v)g(v)
+\sum_{v\in\mathcal V_-}c_v\,f(v)g(v).
\end{equation}
\end{proposition}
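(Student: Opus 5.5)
We apply \Cref{lem:edge-ips} edge by edge and then regroup the resulting endpoint terms by vertex, using continuity to turn each labelled trace into the common vertex value. Start from the definition \eqref{eq:q_semi} of $a_\Gamma^0$. The source-anchoring sum $\sum_{v\in\mathcal V_-}c_vf(v)g(v)$ is left unchanged; by the convention of \Cref{sec:graphs}, $f(v)=f_{e_v}(0)$ there, and this agrees with the common trace because $f\in\widetilde H^1_{C_V}(\Gamma)$. For the edgewise part, \Cref{lem:edge-ips} gives
\[
\tau^2\sum_{e\in\mathcal E}a_e^L(f_e,g_e)
=\tau^2\sum_{e\in\mathcal E}a_e^A(f_e,g_e)
+\kappa\tau^2\sum_{e\in\mathcal E}\bigl\{f_e(\ell_e)g_e(\ell_e)-f_e(0)g_e(0)\bigr\}.
\]
The first sum on the right is, by definition, $\langle f,g\rangle^{\mathrm{sym}}_\Gamma$.

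Next we rewrite the boundary sum as a sum over labelled edge ends, that is, over $\bigsqcup_v\widetilde{\mathcal V}_v$. Each edge $e$ contributes exactly one incoming end $(e,\ell_e)\in\mathcal V^{\mathrm{in}}_{\operatorname{head}(e)}$, carrying sign $+$, and one outgoing end $(e,0)\in\mathcal V^{\mathrm{out}}_{\operatorname{tail}(e)}$, carrying sign $-$. This gives a bijection between the terms of the edge sum and the pairs $(v,\xi)$ with $\xi\in\widetilde{\mathcal V}_v$, with sign $+1$ on $\mathcal V_v^{\mathrm{in}}$ and $-1$ on $\mathcal V_v^{\mathrm{out}}$. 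The point to watch is self-loops: their two ends are distinct labelled elements, one in $\mathcal V_v^{\mathrm{in}}$ and one in $\mathcal V_v^{\mathrm{out}}$. Both are therefore counted, and they contribute $+1$ to $|\mathcal E^{\mathrm{in}}_v|$ and $+1$ to $|\mathcal E^{\mathrm{out}}_v|$, so the bookkeeping is consistent.

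Finally, on $\widetilde H^1_{C_V}(\Gamma)$ we have $f(\xi)=f(v)$ and $g(\xi)=g(v)$ for every $\xi\in\widetilde{\mathcal V}_v$. Hence the contribution of vertex $v$ is
\[
\kappa\tau^2\bigl(|\mathcal V_v^{\mathrm{in}}|-|\mathcal V_v^{\mathrm{out}}|\bigr)f(v)g(v)
=\kappa\tau^2\bigl(|\mathcal E^{\mathrm{in}}_v|-|\mathcal E^{\mathrm{out}}_v|\bigr)f(v)g(v).
\]
Summing over $v\in\mathcal V$ and adding back the anchoring term gives \eqref{eq:sym-vs-dir}.

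There is no real obstacle. The only care needed is in this edge-to-vertex regrouping, together with the observation that at sources (in-degree $0$, out-degree $1$) the vertex term $-\kappa\tau^2f(v)g(v)$ is kept separate from the anchoring term rather than merged with it. Everything is finite, since $\Gamma$ is finite and traces of $H^1$ functions are well defined, so no convergence issues arise.
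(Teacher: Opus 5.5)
Your proposal is correct and follows essentially the same route as the paper's proof: apply \Cref{lem:edge-ips} edgewise, then regroup the $2|\mathcal E|$ signed endpoint terms by vertex, using single-valuedness on $\widetilde H^1_{C_V}(\Gamma)$. Your extra remarks on self-loops and on keeping the source vertex term separate from the anchoring term are correct bookkeeping that the paper leaves implicit.
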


Thus the directed and symmetric forms differ only through
evaluations at vertices, and the weight at $v$ depends on $\Gamma$ only
through the in- and out-degrees of $v$. Apart from source anchoring, a vertex contributes
$\kappa\tau^2(|\mathcal E_v^{\mathrm{in}}|-|\mathcal
E_v^{\mathrm{out}}|)$. Hence an outward leaf has weight
$\kappa\tau^2$, an inward leaf $-\kappa\tau^2$, and an interior
contribution vanishes exactly when its in- and out-degrees are equal.
Stationary anchoring, $c_v=2\kappa\tau^2$ for
$v\in\mathcal V_-$, changes the total weight at an
inward leaf to $+\kappa\tau^2$.

The star graph $\Gamma_{\mathrm{out}}$ in
\Cref{fig:inout}(b) gives a useful pseudo-observation interpretation. Its
inward leaf is $v_4$, and its junction $v_2$ has one inflow and two outflows, so $C_V$ is forward because the junction has a single inflow. Define the
stationary-anchored directional form and the leaf-penalized symmetric form by
\begin{align*}
\langle f,g\rangle^{\mathrm{dir}}_{\Gamma_{\mathrm{out}}}
&:=\left.a^{0}_{\Gamma_{\mathrm{out}}}(f,g)
\right|_{c=2\kappa\tau^2},\quad
\langle f,g\rangle^{\mathrm{sym},\partial}_{\Gamma_{\mathrm{out}}}
:=\langle f,g\rangle^{\mathrm{sym}}_{\Gamma_{\mathrm{out}}}
+\kappa\tau^2\!\sum_{i\in\{1,3,4\}} f(v_i)g(v_i).
\end{align*}
Then \Cref{prop:sym-vs-dir} reduces to
$
\langle f,g\rangle^{\mathrm{sym},\partial}_{\Gamma_{\mathrm{out}}}
=\langle f,g\rangle^{\mathrm{dir}}_{\Gamma_{\mathrm{out}}}
+\kappa\tau^2 f(v_2)g(v_2).
$
The additional term is exactly the precision contributed by
an independent pseudo-observation at the junction. If
$\mathcal L_{\mathrm{dir}}$ and $\mathcal L_{\mathrm{sym},\partial}$ denote
the two Gaussian laws, then
\[
\mathcal L_{\mathrm{sym},\partial}(u)
=\mathcal L_{\mathrm{dir}}(u\mid Y=0),
\qquad
Y\mid u\sim\pN\bigl(u(v_2),(\kappa\tau^2)^{-1}\bigr).
\]
Thus, apart from their common leaf penalties, the symmetric
field is the directed field conditioned on extra information at the junction, so its marginal variance is smaller at every site correlated with
$u(v_2)$.

\subsection{Tail-up, tail-down and the role of orientation}
\label{sec:tailupdown}

The relation to the tail-up and tail-down models of \citet{Hoef10} is mediated by 
the vertex conditions and the
orientation, which fixes the direction of dependence.  We say that a finite
directed metric tree is \emph{oriented along the flow} if every interior
vertex has exactly one outgoing edge.  Then exactly one vertex has no
outgoing edge, and this vertex is a leaf; we call it the \emph{outlet}.
The sources of $\Gamma$ are the remaining leaves, so a flow-oriented tree
has one outlet and, in general, many sources.

\begin{proposition}[tail-up form]\label{prop:tailup}
Let $\Gamma$ be a finite directed metric tree oriented along the flow, with
vertex condition $K_2$ and stationary anchoring at the sources.  Then
$r(t,t)=\sigma^2$ for every $t\in\widetilde\Gamma$, and for
$s,t\in\Gamma^\circ$
\[
r(s,t)=
\begin{cases}
\sigma^2 e^{-\kappa d(t,s)}\displaystyle\prod_{v\in P^\circ(t,s)}
\sqrt{p_{v,\bar v(t,s)}}, & \text{if } t\leadsto s,\\[1.2ex]
\sigma^2 e^{-\kappa d(s,t)}\displaystyle\prod_{v\in P^\circ(s,t)}
\sqrt{p_{v,\bar v(s,t)}}, & \text{if } s\leadsto t,\\[1.2ex]
0, & \text{if neither point is upstream of the other.}
\end{cases}
\]
The endpoint cases are obtained by the appropriate one-sided
edge limits.
\end{proposition}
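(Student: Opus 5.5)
The plan is to deduce the statement from \Cref{thm:cov-lca} and \Cref{cor:stationary-variance-K2}. We first check the hypotheses. A flow-oriented tree is acyclic, and every source is a leaf, hence an inward leaf, so \Cref{ass:standing} holds. Since $K_2$ is a forward condition, \Cref{cor:forward-substitution} shows that the field is a proper global OU process. Every interior vertex has exactly one outgoing edge, so $B_v$ is a single row and trivially has identical rows. Stationary anchoring gives $\Var(u(s_{-,j}))=\sigma^2$ for every $j$. \Cref{cor:stationary-variance-K2} then yields $r(t,t)=\sigma^2$ for all $t\in\widetilde\Gamma$, which is the first claim.

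For the covariance we apply \Cref{thm:cov-lca}. If $t\leadsto s$, the corollary gives $r(s,t)=r(t,t)\,\mathsf A(t,s)=\sigma^2\mathsf A(t,s)$. On a tree, $\mathsf A(t,s)=e^{-\kappa d(t,s)}\prod_{v\in P^\circ(t,s)}\Bin_v(\underline v,\bar v)$, and under $K_2$ the factor $\Bin_v(\underline v,\bar v)$ equals $\sqrt{p_{v,\bar v(t,s)}}$. This is the first displayed case. The case $s\leadsto t$ follows by symmetry of $r$.

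The remaining case is when neither point is upstream of the other, and this is where the flow orientation is used. We must show that the common ancestral set is empty, i.e.\ $\mathcal C^\uparrow(s,t)=\varnothing$; the corollary then gives $r(s,t)=0$. Suppose instead that some $z$ satisfies $z\leadsto s$ and $z\leadsto t$. Every interior vertex has out-degree one, so a directed route from $z$ is forced: at each vertex there is exactly one way to continue. Hence the downstream set of $z$ is a single directed path from $z$ to the outlet. Both $s$ and $t$ lie on this path, so they are comparable in $\preceq$, which contradicts the assumption.

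The main subtlety is making the forced-route argument precise. A route may start in an edge interior, and if $s$ and $t$ lie on the same edge, comparability must be read off the edge coordinate. A second point needing care is that $P^\circ(t,s)$ and $\bar v(t,s)$ match the path conventions of \eqref{eq:transferFactor}, so that the terminal vertex is handled correctly. The endpoint cases follow by taking one-sided edge limits in the formula, as the statement indicates; this uses continuity of $r$ along edges, which holds because the field is a proper global OU process.
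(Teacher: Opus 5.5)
Your proposal is correct and follows the paper's proof essentially step for step. You obtain constant variance from \Cref{cor:stationary-variance-K2} and the comparable cases from \Cref{thm:cov-lca}, with $t$ as last common ancestor and the $K_2$ coefficients read off the single-path transfer factor. The incomparable case uses the same out-degree-one argument to show $\mathcal C^\uparrow(s,t)=\varnothing$. Your explicit check of the hypotheses, which the paper leaves implicit, is a useful addition: inward-leaf sources, a proper global OU process via \Cref{cor:forward-substitution}, identical rows, and stationary source variances.
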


So flow-connected points are correlated through the square root of the
accumulated flow proportions and flow-unconnected points are uncorrelated,
which is the defining shape of a tail-up model.  Taking $w_{v,\hat e}$ to be
the upstream drainage areas gives the usual weights.
In the exponential-covariance notation of \citet[p.~9]{Hoef10}, the parameter map is $\sigma^2=\theta_v$ and $\kappa=\theta_r^{-1}$.

\begin{proposition}[tail-down form]\label{prop:taildown}
Let $\Gamma$ be as in \Cref{prop:tailup} and let $\Gamma^{R}$ be $\Gamma$
with every edge reversed, with stationary anchoring at its single source.
Then every interior vertex of $\Gamma^R$ has one inflow, all three
conditions of \eqref{eq:vertex-condition} coincide with $\Bin\equiv1$, and
the field descends to $\Gamma$ with
$r(s,t)=\sigma^2 e^{-\kappa d(s,t)}$ for all $s,t\in\Gamma$.
\end{proposition}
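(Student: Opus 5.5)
The plan is to verify the structural claims about $\Gamma^R$, reduce to the single-inflow case of \Cref{thm:cov-lca} with all routing coefficients equal to $1$, and then use \Cref{cor:stationary-variance-K2} to make the variance constant.

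\textbf{Structure of $\Gamma^R$.} In $\Gamma$, oriented along the flow, every interior vertex has exactly one outgoing edge and the outlet is a leaf with no outgoing edge. Reversing all edges therefore gives every interior vertex of $\Gamma^R$ exactly one incoming edge. The outlet becomes the unique vertex with no inflow; since it is a leaf it is an inward leaf, so \Cref{ass:standing} holds. The former sources of $\Gamma$ become outward leaves (sinks) of $\Gamma^R$.

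\textbf{The vertex conditions coincide.} Take a nonsource vertex of $\Gamma^R$. If it is interior it has exactly one inflow $\hat e$. If it is a sink it is a leaf with one inflow and no outflow, so no forward constraint is imposed there. With a single inflow, $p_{v,\hat e}=1$, so $K_1$ and $K_2$ both give $\Bin_v\equiv1$. $C_V$ applies since $|\mathcal E_v^{\mathrm{in}}|=1$ and imposes $u_e(v)=u_{\hat e}(v)$ for every outgoing $e$, which is the same forward condition. At a sink with a single incoming trace, $C_V$ equates nothing further. All three conditions therefore give the same space $\widetilde H^1_X(\Gamma^R)$ and, by \Cref{prop:transfer-construction}/\Cref{cor:forward-substitution}, the same proper global OU process.

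\textbf{The field descends.} At each interior vertex all traces (one incoming, several outgoing) are equal. At leaves there is only one trace. Hence $u(\xi)=u(\xi')$ on every fibre $\pi^{-1}(v)$, so the field is well defined on $\Gamma$.

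\textbf{Covariance.} Apply \Cref{cor:stationary-variance-K2}, since $\Bin\equiv1$ is $K_2$ and there is one source with variance $\sigma^2$; this gives $r(t,t)=\sigma^2$ everywhere. The rows of $B_v$ are identical (all ones), so \Cref{thm:cov-lca} applies on the tree $\Gamma^R$. Because $\Gamma^R$ has a single source $o$ and every vertex has in-degree at most one, every point $z$ has the unique directed path from $o$ as its ancestral set. So for $s,t\in\Gamma^\circ$ the common ancestral set is nonempty, and the last common ancestor $a$ is the branching point of the two root paths, which lies on the undirected geodesic between $s$ and $t$.

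With all $\Bin=1$, the transfer factor is $\mathsf A(a,s)=e^{-\kappa d(a,s)}$. Hence
\[
r(s,t)=\sigma^2e^{-\kappa(d(a,s)+d(a,t))}=\sigma^2e^{-\kappa d(s,t)},
\]
using that in a tree $d(s,t)=d(a,s)+d(a,t)$ when $a$ is the meeting point of the root paths. Vertex points, including the leaves, follow by one-sided limits and continuity of the descended field: the kernel $e^{-\kappa d}$ is continuous on $\Gamma$, and the paths are continuous in $L_2$.

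\textbf{Main obstacle.} The only nonroutine step is justifying $d(s,t)=d(a,s)+d(a,t)$. The argument is that the undirected geodesic in a tree from $s$ to $t$ passes through the deepest common point of the two root paths. When one point is an ancestor of the other, $a$ equals that point and the identity is immediate.
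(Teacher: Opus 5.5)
Your proposal is correct and follows the paper's proof essentially step for step: reversal gives in-degree one at every interior vertex, so all three rules reduce to $\Bin\equiv1$; \Cref{cor:stationary-variance-K2} then gives constant variance $\sigma^2$; \Cref{thm:cov-lca} gives $r(s,t)=\sigma^2e^{-\kappa(d(a,s)+d(a,t))}$; and the branch point $a$ lies on the $s$--$t$ geodesic. Your extension to vertex points by one-sided mean-square limits makes explicit a step the paper leaves implicit, since \Cref{thm:cov-lca} is stated only for $\Gamma^\circ$.
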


Thus, varying the orientation and vertex condition in \eqref{eq:system}
gives the exponential $K_2$ tail-up and reversed-tree tail-down models, and under $C_V$, the construction is
related to the symmetric $\alpha=1$ Whittle--Mat\'ern field through the vertex
corrections in \Cref{prop:sym-vs-dir}. The coefficients $\Bin_v$ also permit
intermediate directed models whose weights may be estimated from data.

\section{Inference}
\label{sec:inference}
In this section we describe likelihood inference and prediction using the bridge
representation of \citet{part2jcgs}, adapted to the directed setting. Edge-local energies give block-diagonal endpoint precision matrices; imposing the vertex conditions gives the proper precision matrix. The
resulting sparse precision matrices avoid the dense covariance
factorizations associated with \Cref{sec:covariances}, making inference
efficient on large graphs.
Unlike the symmetric Whittle--Mat\'ern model, directed vertex conditions
need not identify all edge ends incident to the same geometric vertex. We
therefore retain the full split-edge endpoint vector rather than collapsing
it to one value per vertex. Further derivations and implementation details
are given in \Cref{supp:S4}.

\subsection{Preliminaries}
For matrices $\mv A_i$, let $\blkdiag(\mv A_i:i\in I)$ denote their
block-diagonal assembly in the stated order.
For each edge $e=[0,\ell_e]$, let $\widetilde u_e$ be an independent
boundaryless Whittle--Mat\'ern process with $\alpha=1$. For a field $f$, let $\gamma f:=\operatorname{col}_{e\in\mathcal E}\{f_e(0),f_e(\ell_e)\}$ be its vertically stacked endpoint traces in edge order. Set
$
\widetilde{\mv U}:=\gamma\widetilde u,$
and $\mv U:=\gamma u$.
For the symmetric (undirected)
model, \citet{bolin2023statistical} gives
\(\widetilde{\mv U}\sim \pN(\mv 0,\widetilde{\mv Q}^{-1})\), where
\(\widetilde{\mv Q}=\tau^2\blkdiag(\mv Q_e:e\in\mathcal E)\) and
\begin{align}
\label{eq:Qe_sym}
\mv Q_e
:=
\frac{\kappa}{e^{2\kappa\ell_e}-1}
\begin{bmatrix}
e^{2\kappa\ell_e}+1 & -2e^{\kappa\ell_e}\\[2pt]
-2e^{\kappa\ell_e} & e^{2\kappa\ell_e}+1
\end{bmatrix}.
\end{align}
The directed and symmetric edge forms differ only through their endpoint
traces. Consequently, they have the same zero-endpoint bridges, while the
directed construction replaces $\mv Q_e$ by the rank-one endpoint block
\begin{equation}
\label{eq:Qe_firstorder}
\mv Q^{L}_e
:=
\mv Q_e+\kappa
\begin{bmatrix}-1&0\\0&1\end{bmatrix}
=
\frac{2\kappa}{e^{2\kappa\ell_e}-1}
\begin{bmatrix}
1\\[2pt]
-e^{\kappa\ell_e}
\end{bmatrix}
\begin{bmatrix}
1&-e^{\kappa\ell_e}
\end{bmatrix}.
\end{equation}
This block is singular, with null vector
$(1,e^{-\kappa\ell_e})^\top$, corresponding to
$e^{-\kappa t}\in\ker L_e$.
Thus, the directed split-edge matrix represents an
intrinsic quadratic form, not the precision of a proper endpoint Gaussian.
The split-edge quadratic-form matrix is
\[
\mv Q=
\begin{cases}
\tau^2\blkdiag(\mv Q_e:e\in\mathcal E),
&\text{for the symmetric model},\\
\blkdiag\!\left(
\tau^2\mv Q_e^L+
\mathbf 1_{\{e\in\mathcal E_-\}}
\begin{bmatrix}
c_{\operatorname{tail}(e)}&0\\0&0
\end{bmatrix}:e\in\mathcal E
\right),
&\text{for the directed model}.
\end{cases}
\]
Thus source anchoring is added only to directed source-edge blocks.
For the likelihoods below, we use stationary anchoring,
$c_v=2\kappa\tau^2$.

To obtain the proper graph-wide model, impose the vertex conditions on the endpoint traces $\mv U$. For
${X\in\{C_V,K_1,K_2\}}$, write the finite-dimensional constraint defining
\(\widetilde H^1_X(\Gamma)\) as \(\mv K\mv U=\mv 0\), where \(\mv K\) has
rank \(k\). We next use this constrained representation to construct the
likelihood.

\subsection{Likelihood evaluation}
For observations at
\(s_i=(e_i,t_i)\in\widetilde\Gamma\), $i=1,\ldots,n$, consider
$Y_i=\mv x_i^\top\mv b+u(s_i)+\epsilon_i$, where \(\mv x_i\) contains the
covariates at \(s_i\), \(\mv b\) is the corresponding coefficient vector,
and the errors are independent of $u$ with
\(\epsilon_i\overset{\mathrm{iid}}{\sim}\pN(0,\sigma_\epsilon^2)\),
where $\sigma_\epsilon>0$.
For each edge \(e\), let \(\mv Y_e\), \(\mv X_e\), and \(\mv t_e\) collect
the observations, covariates, and edge coordinates, and define
\(\mv Z_e=\mv Y_e-\mv X_e\mv b\), with observed value $\mv z_e$.
Conditional on \(\mv U\), the edge blocks are independent and
${\mv Z_e\mid \mv U
    \sim
    \pN\left(\mv S_e(\mv t_e)\mv D_e\mv U,\boldsymbol{\Sigma}_e\right)}$, with 
    ${(\boldsymbol{\Sigma}_e)_{ij}
    =
    \sigma_\epsilon^2\mathbf 1(i=j)
    +
    r_{B,e}(t_{e,i},t_{e,j})}$.
Here \(\mv D_e\) is the endpoint map sending \(\mv U\) to
\((u_e(0),u_e(\ell_e))^\top\), \(\mv S_e(t)\) is the bridge interpolation matrix,
and \(r_{B,e}\) is the covariance function of the zero-endpoint bridge on
\(e\). Both $\mv S_e$ and
$r_{B,e}$ are the same for the directed and symmetric models because the
two edge forms differ only by a quadratic in the endpoint traces; see
\Cref{lem:edge-ips}.

Let \(\mathcal E_y\) be the set of edges with observations. Set
$\mv z=\operatorname{col}_{e\in\mathcal E_y}\mv z_e$,
$\mv B=\operatorname{col}_{e\in\mathcal E_y}
\{\mv S_e(\mv t_e)\mv D_e\}$, and
$\boldsymbol\Sigma=\blkdiag(\boldsymbol\Sigma_e:e\in\mathcal E_y)$.
Then \(\mv Z\mid \mv U=\mv u\sim
\pN(\mv B\mv u,\boldsymbol{\Sigma})\).  Following
\citet{bolin2021efficient}, we construct a vertex-local sparse row basis
$\mv T_c$ for $\ker(\mv K)$ and write
$\mv U=\mv T_c^\top\mv V$. 
Define $\mv Q_0=\mv T_c\mv Q\mv T_c^\top$ and
$\mv B_c=\mv B\mv T_c^\top$, and note that admissibility makes $\mv Q_0$ positive
definite. The posterior precision and mean are
$\mv Q_y=\mv Q_0+\mv B_c^\top\boldsymbol\Sigma^{-1}\mv B_c$ and
$\boldsymbol\mu_c=\mv Q_y^{-1}\mv B_c^\top\boldsymbol\Sigma^{-1}\mv z$.
Thus, with \(\boldsymbol{\theta} = \left(\sigma_\epsilon,\kappa,\tau \right)\), the log-likelihood is, up to an additive constant, 
\[
    2\ell(\boldsymbol{\theta},\mv b;\mv y)
    =
    \log|\mv Q_0|
    -
    \log|\mv Q_y|
    -
    \sum_{e\in\mathcal E_y}\log|\boldsymbol{\Sigma}_e|
    +
    \boldsymbol{\mu}_c^\top
    \mv Q_y
    \boldsymbol{\mu}_c
    -
    \mv z^\top\boldsymbol{\Sigma}^{-1}\mv z.
\]

\subsection{Prediction}

For fixed parameters, the posterior mean of the constrained endpoint vector is
$\widehat{\mv U}=\mv T_c^\top\boldsymbol\mu_c$.
For a prediction location
\(s^\star=(e^\star,t^\star)\in\widetilde\Gamma\), define the endpoint
component of the predictor by
${\widehat u_\Gamma(s^\star)=
\mv S_{e^\star}(t^\star)\mv D_{e^\star}\widehat{\mv U}}$. The conditional
mean is
\[
    \pE\{u(s^\star)\mid \mv Y=\mv y\}
    =
    \widehat u_\Gamma(s^\star)
    +
    r_{B,e^\star}(t^\star,\mv t_{e^\star})
    \boldsymbol{\Sigma}_{e^\star}^{-1}
    \left[
        \mv z_{e^\star}
        -
        \mv S_{e^\star}(\mv t_{e^\star})
        \mv D_{e^\star}\widehat{\mv U}
    \right],
\]
with the second term omitted if there are no observations on \(e^\star\).  This is the bridge kriging formula of \citet[Sec.~6.3]{part2jcgs} written for the directed
constraints.

\section{Applications}\label{sec:applications}
In this section we present two applications of the directed fields of
\Cref{sec:construction}.  The first is summer stream temperature on a
river network, where we compare the vertex conditions of
\Cref{sec:vertexconditions} against a symmetric reference and assess the
computational cost.  The second is traffic speeds on a road network, where
the orientation itself must be estimated from map data before the
directional models can be fitted. Both datasets are shown in Figure~\ref{fig:data}. In both cases the models are fitted by
maximum likelihood and compared through plug-in leave-one-out (LOO)
predictions, evaluated with the logarithmic score (LS) and the continuous
ranked probability score (CRPS) of \citet{gneiting2007strictly}, the
scaled CRPS (SCRPS) of \citet{bolin2023local}, and the mean absolute
(MAE) and root mean squared errors (RMSE).

\begin{figure}[H]
    \centering
    \includegraphics[width=0.55\linewidth,height=0.3\textheight]{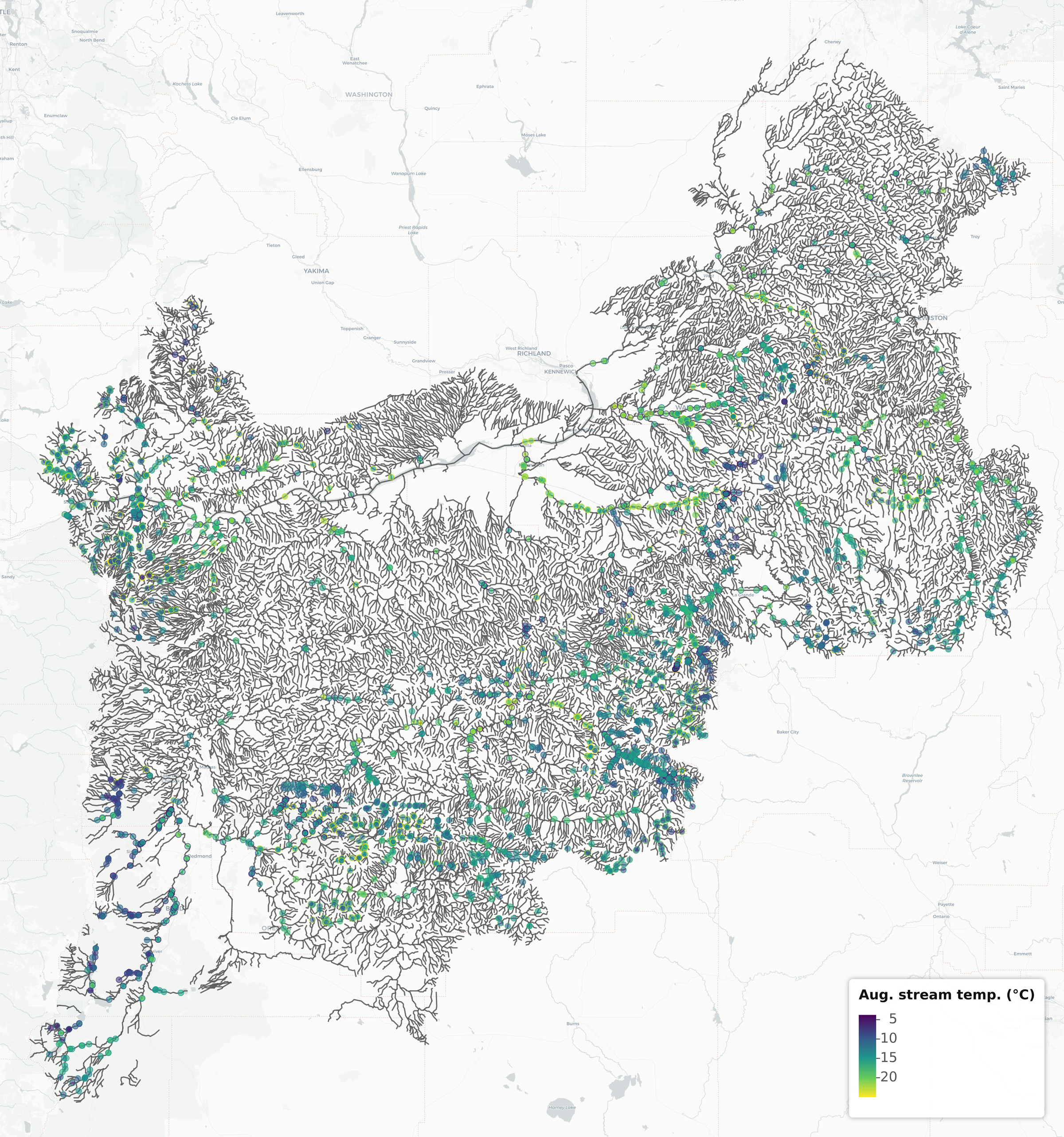}
    \includegraphics[width=0.4\linewidth,height=0.3\textheight]{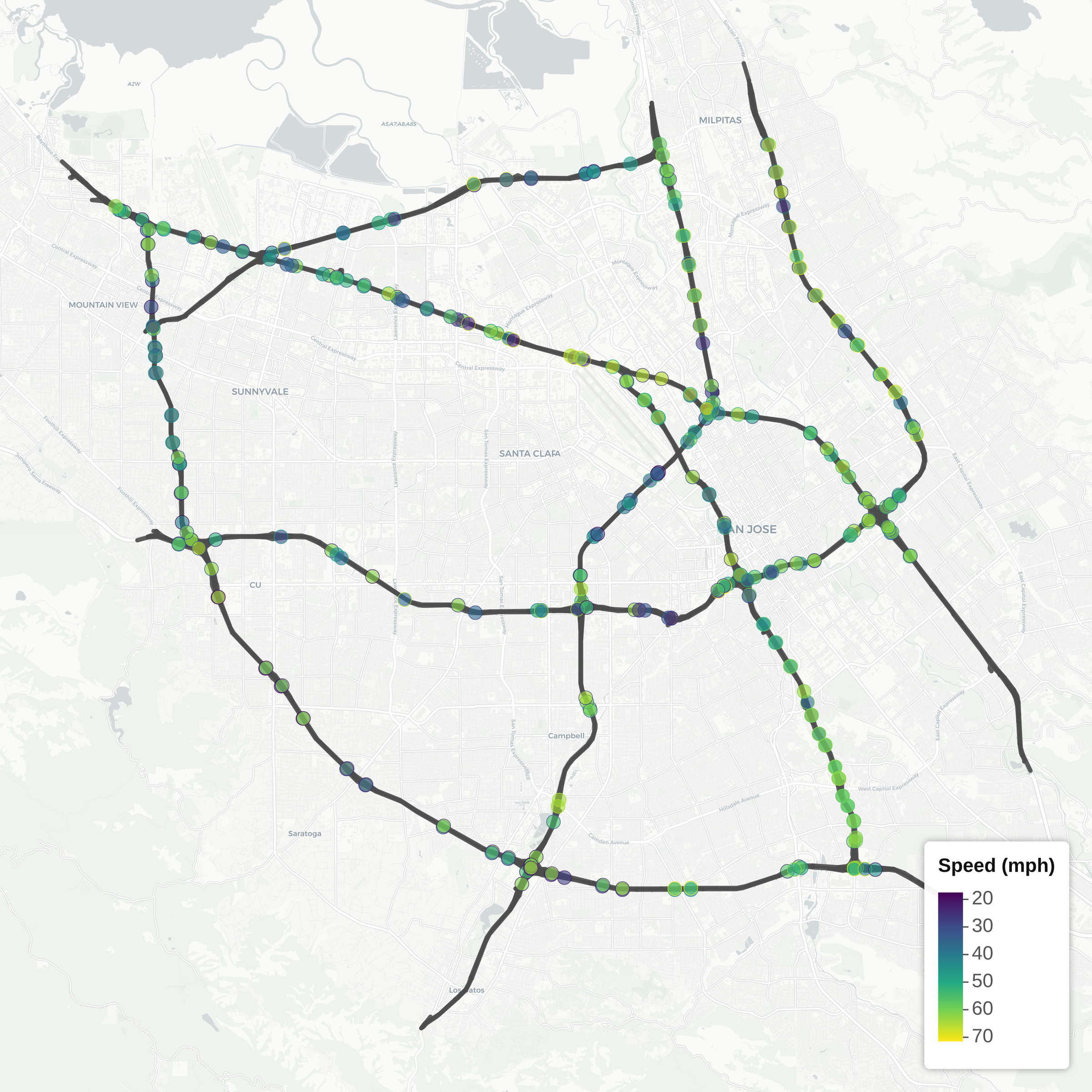}
    \caption{The Mid-Columbia River network (left) and the southern San Francisco Bay area traffic data (right).}
    \label{fig:data}
\end{figure}

\subsection{River network}
\label{subsec:river}

We use the Mid-Columbia River network, studied by
\citet{rivArt,isaak2017norwest}, to compare vertex conditions and their
computational cost.  The full data set contains 28\,613 edges and 2\,758 observation locations.
We limit ourselves to its largest connected component (18\,668 edges) and the corresponding $n=2\,080$ observations. All models include elevation,
slope, and precipitation as covariates, independent measurement error, and
one latent graph field.  

The $K_1$ and $K_2$ fields use drainage-area weights. Reversing the river orientation and applying
$K_1$ gives tail-down continuity: each reversed confluence has one inflow, so its
normalized weight is one, and flow-unconnected branches that share downstream
flow may be dependent in the original orientation.

\begin{table}[t]
\centering
\small
\setlength{\tabcolsep}{2.5pt}
\caption{Fitted log-likelihood, optimization time, and plug-in LOO scores for
the Mid-Columbia data. Lower is better and bold marks the best performance.}
\label{tab:midcolumbia-loo}
\begin{tabular}{@{}lrrrrrrr@{}}
\toprule
& \multicolumn{5}{c}{Plug-in LOO score} & \multicolumn{2}{c}{Fit} \\
\cmidrule(lr){2-6}\cmidrule(l){7-8}
Model & LS & CRPS & SCRPS & MAE & RMSE & Log lik. & Opt. (s) \\
\midrule
Tail-up, linear ($K_1$)
& \textbf{1.8993} & \textbf{0.8868} & \textbf{1.2759}
& \textbf{1.2033} & 1.6766 & -4265.53 & 12.17 \\
Tail-up, square-root ($K_2$)
& 1.9002 & 0.8869 & 1.2772 & 1.2119
& \textbf{1.6731} & \textbf{-4248.22} & 12.64 \\
Symmetric WM ($\alpha=1$)
& 1.9922 & 0.9716 & 1.3272 & 1.3271
& 1.8028 & -4433.07 & 20.89 \\
Tail-down, continuity
& 1.9862 & 0.9642 & 1.3238 & 1.3178
& 1.7928 & -4408.65 & 8.97 \\
\bottomrule
\end{tabular}

\end{table}

Table~\ref{tab:midcolumbia-loo} reports all plug-in LOO scores, 
the fitted log-likelihood, and optimization time.
The two tail-up fits perform nearly identically and substantially better than the other models. $K_1$ has slightly lower LS, CRPS, SCRPS, and MAE, whereas $K_2$ has the larger log-likelihood and slightly lower RMSE. The performance difference between the two tail-up models is negligible
relative to their advantage over the symmetric and tail-down models.
All optimizations finish within 21 seconds. A separate computational scaling
experiment comparing the sparse profile likelihood with direct dense-covariance
evaluation is reported in \Cref{app:computational-scaling}.

\subsection{Traffic data}
The \texttt{pems} data in \texttt{MetricGraph} contain $n=325$ traffic-speed
observations $y_i$, measured in miles per hour at locations $s_i$ on an
$848$-edge road network in the southern San Francisco Bay area. Directional
modelling requires the edge orientation to follow traffic flow, whereas the
stored orientations are inherited from the ordering of the source lines. We
therefore match each edge to its nearest OpenStreetMap (OSM) way. OSM one-way
metadata reverse $88$ edges, and propagation along degree-two chains reverses
a further $82$. Thus the OSM-oriented graph differs from the raw graph on
$170$ edges, or $20\%$ of the network.

Following \citet{bolin2023statistical}, we use five non-directional baselines:
a resistance-metric isotropic exponential field (isoExp) \citep{anderes2020isotropic}, vertex-indexed graph-Laplacian Mat\'ern fields
$\mathrm{GL}_\alpha$ \citep{BorovitskiyEtAl2021}, and Whittle--Mat\'ern fields $\mathrm{WM}_\alpha$, with
$\alpha\in\{1,2\}$ for the latter two classes. The directional candidates use
the conditions $K_1$ and $K_2$ with edge-specific weights
$w_e$. We fit both conditions with unit weights and on the OSM-oriented graph, we also fit $K_1$ using
$w_e=\ell_e$, threshold weights $w_e=5$ for $\ell_e>0.5\,\mathrm{km}$ and
$w_e=1$ otherwise, and OSM road-class weights $8\!:\!6\!:\!4\!:\!2\!:\!1$
for motorway, trunk, primary, secondary, and all remaining classes. Only
these ratios matter because the vertex conditions normalize the incoming
weights.

\begin{table}[t]
\centering
\small
\setlength{\tabcolsep}{2.5pt}
\caption{Plug-in LOO scores and negative log-likelihood (NLL), $n=325$.
The blocks contain non-directional baselines, directional fits on the raw
graph, and directional fits on the OSM-oriented graph. Lower values are
better; bold marks the column minima.}
\label{tab:loo}
\begin{tabular}{@{}lrrrrrr@{}}
\toprule
Model & LS & CRPS & SCRPS & MAE & RMSE & NLL \\
\midrule
$\mathrm{isoExp}$ & 3.604 & 4.735 & 2.133 & 6.175 & 8.610 & 1223.84 \\
$\mathrm{GL}_{\alpha=1}$ & 3.605 & 4.737 & 2.133 & 6.175 & 8.611 & 1221.38 \\
$\mathrm{GL}_{\alpha=2}$ & 3.551 & 4.531 & 2.107 & 5.853 & 8.304 & 1208.70 \\
$\mathrm{WM}_{\alpha=1}$ & 3.605 & 4.738 & 2.133 & 6.178 & 8.614 & 1221.23 \\
$\mathrm{WM}_{\alpha=2}$ & 3.549 & 4.522 & 2.107
& \textbf{5.841} & \textbf{8.256} & 1208.00 \\
\midrule
$K_1$, raw, unit & 3.571 & 4.652 & 2.116 & 6.174 & 8.608 & 1202.86 \\
$K_2$, raw, unit & 3.569 & 4.669 & 2.117 & 6.246 & 8.598 & 1202.59 \\
\midrule
$K_1$, OSM, unit & 3.562 & 4.641 & 2.114 & 6.188 & 8.550 & 1204.20 \\
$K_2$, OSM, unit & 3.559 & 4.625 & 2.114 & 6.240 & 8.463 & 1204.49 \\
$K_1$, OSM, length & 3.571 & 4.655 & 2.118 & 6.185 & 8.553 & 1207.94 \\
$K_1$, OSM, threshold & 3.551 & 4.553 & 2.107 & 6.056 & 8.359 & 1203.30 \\
$K_1$, OSM, road class & \textbf{3.541} & \textbf{4.513}
& \textbf{2.102} & 5.979 & 8.335 & \textbf{1199.72} \\
\bottomrule
\end{tabular}
\end{table}

We fit all models by maximum likelihood and evaluate plug-in LOO predictions using the same scoring rules as before.
\Cref{tab:loo} shows that correcting the orientation modestly improves the
unit-weight directional fits in most scores, whereas the choice of edge weights has a larger
effect. The road-class $K_1$ model gives the smallest LS, CRPS, SCRPS, and
NLL. The non-directional $\mathrm{WM}_{\alpha=2}$ model instead minimizes MAE
and RMSE.
Thus the directional model with road-class weights improves distributional
prediction and likelihood fit, while the smoother non-directional model
retains a small advantage for point prediction.

\section{Discussion}
\label{sec:discussion}

We have shown that a single first-order system, \eqref{eq:system}, produces
a family of Gaussian fields on a directed metric graph whose members are
selected by two choices: the vertex condition and the orientation.
The construction is exact and mesh-free, admits
sparse-precision inference, and contains the tail-up and tail-down models of
\Cref{sec:tailupdown} as special cases, alongside new directed models with
physically motivated vertex conditions. Under continuity, its Cameron--Martin form differs from the symmetric $\alpha=1$ Whittle--Mat\'ern form by explicit vertex and source-anchor terms.

A potential limitation with directed models is that the framework presumes that each edge carries a direction.  On a
river network this is given by the hydrology, but on a road network may be more difficult to obtain, and a two-way road is only crudely represented by a single directed edge. Handling such roads would require either a pair of opposed edges or a vertex condition that mixes the two directions. 

As the operator $\kappa+\partial_\Gamma$ is of first order, 
the fields have the regularity of the $\alpha=1$ Whittle--Mat\'ern field and no smoother member of the family is available. 
Obtaining directed analogues of $\alpha>1$ is thus a natural next step.
%
A natural extension is spatio-temporal modelling: the directed field describes network transport and a temporal operator describes time evolution.

\section*{Acknowledgements}

We thank S{\o}ren Wengel Mogensen for pointing out the connection between
our construction and the trek rule.

\section*{Software and data availability}

All models are implemented within the \texttt{MetricGraph} \textsf{R} package.

\bibliographystyle{abbrvnat}
\bibliography{ref_jw}



\newpage
\appendix

\section{Cyclic admissibility: expanded algebra}\label{supp:S2}

This section expands the calculation behind the cyclic counterexample retained
in the main-text admissibility section.  Order the edge-tail amplitudes as
$\eta=(\eta_1,\eta_2,\eta_3,\eta_4)^\top$.  The source edge $e_1$ has no
feedback row.  At $v$, the incoming weights of $(e_1,e_4)$ are $(1,3)$, so the
$K_2$ coefficients are $(1/2,\sqrt{3}/2)$; at $w$, the equal incoming weights
of $(e_2,e_3)$ give coefficients $(1/\sqrt{2},1/\sqrt{2})$.  With
$a_j=e^{-\kappa\ell_j}$, the transfer matrix is
\[
\Mtr=
\begin{bmatrix}
0&0&0&0\\
a_1/2&0&0&\sqrt3\,a_4/2\\
a_1/2&0&0&\sqrt3\,a_4/2\\
0&a_2/\sqrt2&a_3/\sqrt2&0
\end{bmatrix}.
\]
For a zero-energy mode the anchored source forces $\eta_1=0$, and
$(I-\Mtr)\eta=0$ becomes
\[
\eta_2=\eta_3=\frac{\sqrt3}{2}a_4\eta_4,\qquad
\eta_4=\frac{a_2\eta_2+a_3\eta_3}{\sqrt2}.
\]
Consequently a nonzero mode exists precisely when
\[
1=\frac{\sqrt3}{2\sqrt2}\,a_4(a_2+a_3).
\]
Taking $\ell_2=\ell_3$ reduces this to
$1=\sqrt{3/2}\exp\{-\kappa(\ell_2+\ell_4)\}$, from which
${\ell_2+\ell_4=(2\kappa)^{-1}\log(3/2)}$.  Choosing $\eta_4=1$ then gives the
null mode displayed in the article.

This calculation isolates the issue: the square-root coefficients are
variance preserving at a confluence on a tree, but their row sum can exceed
one.  Feedback around a directed cycle can therefore produce unit loop gain.

\section{Auxiliary covariance notation and results}\label{supp:S3}

The formal equivalence between the acyclic forward system and a proper global
OU construction is stated in \Cref{lem:graph-OU-form}, and its forward
substitution consequence is \Cref{cor:forward-substitution} of the main
article. We use that notation below to record auxiliary covariance formulas
that are omitted from the main text.

Retain the notation $Z_j$, $\mathcal J(y)$ and $\Lambda^\uparrow(y)$ from
\Cref{sec:covariances}. For $H\subset\Gamma$ a finite union of edge intervals,
set $\mathcal I_e(H):=\{\xi:(e,\xi)\in\pi^{-1}(H)\}$ and
$\mathcal E(H):=\{e:\mathcal I_e(H)\ne\varnothing\}$. For $g$ deterministic
on $\pi^{-1}(H)$, interpret stochastic and ordinary integrals edgewise by
\[
\int\limits_H g(z)\,dW(z)
:= \!\!
\sum_{e\in\mathcal E(H)}\int\limits_{\mathcal I_e(H)} \!\!g((e,\xi))\,dW_e(\xi),
\quad
\int_H g(z)\,dz
:=\!\!
\sum_{e\in\mathcal E(H)}\int\limits_{\mathcal I_e(H)} \!\!g((e,\xi))\,d\xi.
\]

Taking the two evaluation points equal in the covariance theorem yields the
following formulas.
\begin{corollary}\label{cor:variance-formula}
Under the assumption of \Cref{thm:cov-kernel}, for every
$s\in\Gamma^\circ$
\[
r(s,s)=\Var(u(s))
=
\sum_{j\in\mathcal J(s)}\sigma^2_{s_{-,j}}\,\mathsf A(s_{-,j},s)^2
+\tau^{-2}\int_{\Lambda^\uparrow(s)} \mathsf A(z,s)^2\,dz,
\]
and if $x,y\in\Gamma^\circ$ satisfy $x\leadsto y$ on a common edge then
$\mathsf A(x,y)=e^{-\kappa d(x,y)}$ and
\begin{equation}
\label{eq:within-edge-variance}
r(y,y)=e^{-2\kappa d(x,y)}\,r(x,x)+\frac{1-e^{-2\kappa d(x,y)}}{2\kappa\tau^2}.
\end{equation}
\end{corollary}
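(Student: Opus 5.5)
The variance formula follows by setting $x=y=s$ in \eqref{eq:cov-kernel} of \Cref{thm:cov-kernel}. Then $\mathcal J(s)\cap\mathcal J(s)=\mathcal J(s)$ and $\mathcal C^\uparrow(s,s)=\Lambda^\uparrow(s)\cap\Lambda^\uparrow(s)=\Lambda^\uparrow(s)$, so the two products $\mathsf A(\cdot,s)\mathsf A(\cdot,s)$ become squares. No further argument is needed beyond checking that the edgewise integral convention of this section agrees with the one used in the theorem.

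For the within-edge statement, let $x=(e,t_x)$ and $y=(e,t_y)$ with $t_x\le t_y$. I first show $\mathsf A(x,y)=e^{-\kappa(t_y-t_x)}$. On an acyclic graph every directed route from $x$ to $y$ either stays on $e$ or leaves $e$ at its head. Any route that leaves $e$ must return to $e$ to reach $y$, which would create a directed cycle through $e$. So the only route is the segment of $e$. It has length $t_y-t_x=d(x,y)$ and no routed vertices, so the product in \eqref{eq:transferFactor} is empty.

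For \eqref{eq:within-edge-variance}, the cleanest route avoids the global kernel and uses \eqref{eq:edgewise-rep} directly. Writing $u_e(t_y)=e^{-\kappa(t_y-t_x)}u_e(t_x)+\tau^{-1}\int_{t_x}^{t_y}e^{-\kappa(t_y-s)}\,dW_e(s)$ gives the Markov step on the edge. The stochastic integral depends only on increments of $W_e$ on $(t_x,t_y]$. The value $u_e(t_x)$ is measurable with respect to $\eta_e$ and $W_e$ on $[0,t_x]$. Since $\eta_e$ is independent of $W_e$ on an acyclic graph (as noted after \Cref{def:proper-global-OU}), the two terms are independent. By the It\^o isometry,
\[
\Var\Bigl(\tau^{-1}\int_{t_x}^{t_y}e^{-\kappa(t_y-s)}\,dW_e(s)\Bigr)=\tau^{-2}\,\frac{1-e^{-2\kappa d}}{2\kappa},
\]
with $d=d(x,y)$. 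Adding the two variances yields the claim.

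As a cross-check, one could also derive \eqref{eq:within-edge-variance} from the general variance formula. This uses $\Lambda^\uparrow(y)=\Lambda^\uparrow(x)\cup[t_x,t_y]$ on $e$ and the factorisation $\mathsf A(z,y)=\mathsf A(z,x)e^{-\kappa d}$ for $z\in\Lambda^\uparrow(x)$, which holds because every route to $y$ from upstream of $x$ passes through $x$. The main step needing care is the route-uniqueness and factorisation argument, which relies on acyclicity. For the direct derivation, the only delicate point is the independence of $\eta_e$ from $W_e$.
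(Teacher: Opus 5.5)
Your proposal is correct and follows essentially the same route as the paper: the variance identity is the covariance kernel evaluated at $x=y=s$. The within-edge recursion then comes from the edgewise Markov step, the independence (by acyclicity) of the noise on the segment from $u(x)$, and the It\^o isometry. Your explicit argument that the only route from $x$ to $y$ is the edge segment, since any other route would close a directed cycle, simply spells out what the paper compresses into the remark that no vertex is traversed between $x$ and $y$.
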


\Cref{supp:S6} gives the transfer representation and proofs.  These statements
are restricted to acyclic directed graphs; they are not used for the cyclic
traffic likelihood.

\section{Endpoint precision, constraints, likelihood, and prediction}\label{supp:S4}

\subsection{Endpoint precision and the bridge identity}

The article gives the symmetric endpoint block $\mv Q_e$ in
\eqref{eq:Qe_sym} and its directed counterpart $\mv Q_e^L$ in
\eqref{eq:Qe_firstorder}.  Here is the general boundary-update argument behind
that relation.  By \Cref{lem:edge-ips},
\[
a_e^L(f,g)
=a_e^A(f,g)
+\kappa\{f(\ell_e)g(\ell_e)-f(0)g(0)\},
\]
so the two forms differ only by a quadratic in the endpoint traces.

\begin{lemma}\label{lem:boundary-update-general}
Let $H\subset H^1(e)$ be a linear space on the edge $e=[0,\ell_e]$ and let
$\gamma_e f=(f(0),f(\ell_e))^\top$.  Set
$S=\mathrm{Ran}(\gamma_e)$, and let $\mv P_S$ be the orthogonal projector
onto $S$.  Suppose two symmetric positive semidefinite bilinear forms
$a_1,a_2$ on $H$ satisfy
$a_2(f,g)=a_1(f,g)+(\gamma_e f)^\top\mv G\,(\gamma_e g)$ for all
$f,g\in H$, with $\mv G\in\mathbb R^{2\times2}$ symmetric.  Define
\[
I_i(\mv z):=\inf\{\tfrac12a_i(f,f):\ f\in H,\ \gamma_e f=\mv z\},
\qquad \mv z\in S,\ i=1,2,
\]
and let $\mv Q_i$ be the symmetric matrix representing $I_i$, in that
$I_i(\mv z)=\tfrac12\mv z^\top\mv Q_i\mv z$ for $\mv z\in S$ and
${\mv Q_i=\mv P_S\mv Q_i\mv P_S}$.  Then
\[
I_2(\mv z)=I_1(\mv z)+\tfrac12\mv z^\top\mv G\mv z,
\qquad
\mv Q_2=\mv Q_1+\mv P_S\mv G\mv P_S.
\]
\end{lemma}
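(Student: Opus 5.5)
The plan is to work directly from the definition of $I_i$ as a constrained infimum. Fix $\mv z\in S$ and consider the admissible set $\mathcal A(\mv z):=\{f\in H:\gamma_e f=\mv z\}$. This set is nonempty because $\mv z\in\mathrm{Ran}(\gamma_e)$, and it is the same set for $i=1$ and $i=2$.

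For every $f\in\mathcal A(\mv z)$, the hypothesis with $g=f$ gives
\[
\tfrac12a_2(f,f)=\tfrac12a_1(f,f)+\tfrac12(\gamma_e f)^\top\mv G(\gamma_e f)=\tfrac12a_1(f,f)+\tfrac12\mv z^\top\mv G\mv z.
\]
The second term is a constant on $\mathcal A(\mv z)$. Taking the infimum over $\mathcal A(\mv z)$ on both sides therefore yields $I_2(\mv z)=I_1(\mv z)+\tfrac12\mv z^\top\mv G\mv z$. This holds whatever the value of the infimum; both infima are finite and nonnegative since $a_i\ge0$. This step is essentially immediate and involves no existence of minimisers.

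For the matrix identity, note that $\mv z\in S$ gives $\mv P_S\mv z=\mv z$, so
\[
\tfrac12\mv z^\top\mv G\mv z=\tfrac12\mv z^\top\mv P_S\mv G\mv P_S\mv z.
\]
Hence $\mv Q_1+\mv P_S\mv G\mv P_S$ is symmetric and represents $I_2$ on $S$. It also satisfies the normalisation $\mv P_S(\cdot)\mv P_S=(\cdot)$, because $\mv Q_1=\mv P_S\mv Q_1\mv P_S$ and $\mv P_S$ is idempotent.

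The one point needing care is that the representing matrix is unique under this normalisation. Suppose two symmetric matrices $\mv A,\mv A'$ with $\mv A=\mv P_S\mv A\mv P_S$ and $\mv A'=\mv P_S\mv A'\mv P_S$ have equal quadratic forms on $S$. Polarisation gives $\mv z^\top(\mv A-\mv A')\mv w=0$ for all $\mv z,\mv w\in S$. Then for arbitrary $\mv x,\mv y$,
\[
\mv x^\top(\mv A-\mv A')\mv y=(\mv P_S\mv x)^\top(\mv A-\mv A')(\mv P_S\mv y)=0,
\]
so $\mv A=\mv A'$. Applying this with $\mv A=\mv Q_2$ and $\mv A'=\mv Q_1+\mv P_S\mv G\mv P_S$ gives $\mv Q_2=\mv Q_1+\mv P_S\mv G\mv P_S$.

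A subtlety is that the statement presupposes each $I_i$ is a quadratic form on $S$. That is given here, since $\mv Q_i$ is assumed to represent it. If one instead had to justify it, one would use that $I_1$ is the Schur-complement (minimal-energy extension) functional of a positive semidefinite form, which is quadratic. No such step is needed given the stated hypotheses, so the main "obstacle" is only the uniqueness bookkeeping above.
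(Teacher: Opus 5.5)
Your proof is correct and follows the paper's argument essentially step for step. You fix $\mv z\in S$, note that the boundary term is constant on the fibre $\{f\in H:\gamma_e f=\mv z\}$, pass to the infimum, and then use $\mv P_S\mv z=\mv z$ to identify the normalised representative. The one addition is that you prove, by polarisation on the subspace $S$, that the $\mv P_S$-normalised symmetric representative is unique; the paper simply asserts this when it calls $\mv Q_1+\mv P_S\mv G\mv P_S$ ``the canonical symmetric representative'' of $I_2$.
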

\begin{proof}
Fix $\mv z\in S$ and set
$
\mathcal F_{\mv z}:=\{f\in H:\gamma_e f=\mv z\}.
$
This set is nonempty because $S=\mathrm{Ran}(\gamma_e)$. For every
$f\in\mathcal F_{\mv z}$, the assumed relation between the bilinear forms
gives
\[
\frac12 a_2(f,f)
=
\frac12 a_1(f,f)+\frac12\mv z^\top\mv G\mv z.
\]
The second term depends only on $\mv z$, and hence is constant on
$\mathcal F_{\mv z}$. Therefore,
\[
\begin{aligned}
I_2(\mv z)
&=\inf_{f\in\mathcal F_{\mv z}}
  \left\{\frac12a_1(f,f)+\frac12\mv z^\top\mv G\mv z\right\} \\
&=I_1(\mv z)+\frac12\mv z^\top\mv G\mv z.
\end{aligned}
\]

Since $\mv z=\mv P_S\mv z$ for $\mv z\in S$, we also have
\[
\mv z^\top\mv G\mv z
=
\mv z^\top\mv P_S\mv G\mv P_S\mv z.
\]
Thus, for every $\mv z\in S$,
\[
I_2(\mv z)
=
\frac12\mv z^\top
\bigl(\mv Q_1+\mv P_S\mv G\mv P_S\bigr)\mv z.
\]
The matrix in parentheses is symmetric and is supported on $S$, since
\[
\mv P_S
\bigl(\mv Q_1+\mv P_S\mv G\mv P_S\bigr)
\mv P_S
=
\mv Q_1+\mv P_S\mv G\mv P_S.
\]
It is therefore the canonical symmetric representative of $I_2$, and
hence
$
\mv Q_2=\mv Q_1+\mv P_S\mv G\mv P_S.
$
\end{proof}

For $H=H^1(e)$ the trace map is onto, so $S=\mathbb R^2$ and
$\mv P_S=\mv I$.  Taking
$\mv G=\kappa\operatorname{diag}(-1,1)$ proves
$\mv Q_e^L=\mv Q_e+\kappa\operatorname{diag}(-1,1)$, as stated in
\eqref{eq:Qe_firstorder}.  Moreover, the update is constant after conditioning
on the two endpoints and vanishes on the zero-endpoint subspace.  This proves
that the directed and symmetric constructions have the same interpolation
matrix $\mv S_e$ and bridge covariance $r_{B,e}$.

\subsection{Matrix representation of the vertex conditions}\label{app:matrix-bc}

The vertex conditions of \Cref{sec:vertexconditions} admit a convenient finite-dimensional matrix representation. Using the endpoints
$\widetilde{\mathcal V}_v$ and their in--out decomposition from
\Cref{sec:graphs}, fix an ordering
${\widetilde{\mathcal V}_v=\{\xi_{v,1},\dots,\xi_{v,d_v}\}}$, where $d_v=\deg(v)$, and define the local boundary trace operator mapping ${\gamma_v: \widetilde H^1(\Gamma)\to\mathbb R^{d_v}}$ by
$\gamma_v f=\bigl(f(\xi_{v,1}),\dots,f(\xi_{v,d_v})\bigr)^\top$.
Thus, $\gamma_v f$ collects the endpoint values of $f$ at all edge ends incident to $v$.

Writing $\gamma f\in\mathbb R^{2|\mathcal E|}$ for the global trace operator of $f$, ordered by edge, and
$\gamma_v f\in\mathbb R^{\deg(v)}$ for its restriction to the
endpoints in $\widetilde{\mathcal V}_v$, any homogeneous local linear
condition on the endpoint values can be written as $\mv K_v\gamma_v f=0$. Collecting these vertexwise conditions gives the global representation $\mv K\gamma f=0$, where $\mv K$ is obtained by placing the columns of the blocks $\mv K_v$ according to the chosen global endpoint ordering.

This matrix formulation of endpoint constraints corresponds to the construction in quantum-graph theory. There, the framework is typically developed for second-order differential operators, for which the vertex conditions must account not only for the edgewise function values but also for their outward derivatives. Accordingly, introducing the outward derivative
trace $\gamma_v^\partial f\in\mathbb R^{\deg(v)}$, the general
homogeneous local linear condition takes the form $\mathbf K_v\gamma_v f+\mathbf D_v\gamma_v^{\partial}f=0$, where $\mv D_v$ is the coefficient matrix specifying, for each local constraint, the linear combination of outward derivative traces entering that constraint
\cite[Section~1.4.1]{BerkolaikoKuchment2013} and
\cite[Section~3.1, Eq.~(5), Theorem~3]{Kuchment2004QuantumGraphsI}.

We now record the concrete matrices corresponding to the three conditions of \eqref{eq:vertex-condition}. Continuity, $C_V$, can be written at each vertex as $\mv K_v^{C}\gamma_v f=0$, where
\[
\mv K_v^{C}
=
\begin{bmatrix}
1 & -1 & 0 & \cdots & 0\\
0 & 1 & -1 & \cdots & 0\\
\vdots & \ddots & \ddots & \ddots & \vdots\\
0 & \cdots & 0 & 1 & -1
\end{bmatrix}
\in\mathbb R^{(d_v-1)\times d_v}.
\]
Next, order \(\widetilde{\mathcal V}_v\) with the incoming endpoints first and
the outgoing endpoints second, and write
$
\gamma_v f=
\Bigl(
\bigl(f(e_i^{\mathrm{in}},\ell_{e_i^{\mathrm{in}}})\bigr)_{i=1}^{m_v},
\bigl(f(e_j^{\mathrm{out}},0)\bigr)_{j=1}^{n_v}
\Bigr)^\top,
$
where \(m_v=|\mathcal E_v^{\mathrm{in}}|\) and
\(n_v=|\mathcal E_v^{\mathrm{out}}|\). In the notation of
\eqref{eq:vertex-condition}, set
$\beta^1_{v,i}=p_{v,e_i^{\mathrm{in}}}$ and 
$\beta^2_{v,i}=\sqrt{p_{v,e_i^{\mathrm{in}}}}$,
so that $\beta^k_{v,i}=\Bin_v(e,e_i^{\mathrm{in}})$ under $K_k$, for $k=1,2$ and
any $e\in\mathcal E_v^{\mathrm{out}}$, and write
$\beta^1_v=(\beta^1_{v,1},\dots,\beta^1_{v,m_v})^\top$ and
$\beta^2_v=(\beta^2_{v,1},\dots,\beta^2_{v,m_v})^\top$.  Then $K_1$ and $K_2$ are respectively equivalent to $\mv K_v^{K_1}\gamma_v f=0$ and $\mv K_v^{K_2}\gamma_v f=0$, where
\[
\mv K_v^{K_1}
=
\begin{bmatrix}
-\mathbf 1_{n_v}(\beta^1_v)^\top & I_{n_v}
\end{bmatrix},
\qquad
\mv K_v^{K_2}
=
\begin{bmatrix}
-\mathbf 1_{n_v}(\beta^2_v)^\top & I_{n_v}
\end{bmatrix}.
\]
If $n_v=0$, the $K_1$ and $K_2$ blocks have no rows,
whereas $\mv K_v^C$ still equates the incoming traces.
Hence the three spaces $\widetilde H^1_X(\Gamma)$ of \Cref{sec:vertexconditions}
admit the unified description
\[
\widetilde H^1_X(\Gamma)
=
\Bigl\{
f\in\widetilde H^1(\Gamma):
\mv K^{X}\gamma f=0
\Bigr\},
\qquad
X\in\{C_V,K_1,K_2\},
\]
where $\mv K^{X}$ is obtained by assembling the corresponding local blocks over all $v\notin\mathcal V_-$.

\begin{example}
We illustrate the local matrix formulation at a vertex with two incoming edges and one outgoing edge, namely $v_2$ in the orientation $\Gamma_{\mathrm{in}}$ of \Cref{fig:inout}(a), whose vertex-split representation is \Cref{fig:inout}(c).
At vertex $v_2$ we have $\mathcal E_{v_2}^{\mathrm{in}}=\{e_2,e_3\}$ and $\mathcal E_{v_2}^{\mathrm{out}}=\{e_1\}$, so that $m_{v_2}=2$ and $n_{v_2}=1$. Ordering the incoming endpoints first, the trace is $\gamma_{v_2}f=\bigl(f(e_2,\ell_{e_2}),\,f(e_3,\ell_{e_3}),\,f(e_1,0)\bigr)^\top$. The continuity condition $C_V$ is then equivalent to $\mv K_{v_2}^{C}\gamma_{v_2}f=0$, where
\[
\mv K_{v_2}^{C}
=
\begin{bmatrix}
1 & -1 & 0\\
0 & 1 & -1
\end{bmatrix}.
\]
For $K_1$, the condition is $\mv K_{v_2}^{K_1}\gamma_{v_2}f=0$, where
\[
\mv K_{v_2}^{K_1}
=
\begin{bmatrix}
-\beta_{v_2,1}^1 & -\beta_{v_2,2}^1 & 1
\end{bmatrix}
=
\begin{bmatrix}
-p_{v_2,e_2} & -p_{v_2,e_3} & 1
\end{bmatrix},
\]
with $p_{v_2,e_i}=w_{v_2,e_i}/(w_{v_2,e_2}+w_{v_2,e_3})$ for $i=2,3$.  The corresponding matrix for $K_2$ is obtained by replacing $\beta_{v_2}^1$ with $\beta_{v_2}^2$.
\end{example}

\subsection{Constraint complement and Gaussian integration}\label{supp:likelihood}

Let $\mv K$ be the assembled full-row-rank constraint matrix and let
$\mv T_c$ be the sparse row basis used by the implementation for
$\ker(\mv K)$.  Thus $\mv K\mv T_c^\top=0$ and every admissible endpoint
vector has the unique form $\mv U=\mv T_c^\top\mv V$.  If $\mv Q$ is the
block-diagonal split-edge quadratic-form matrix, then the prior precision of $\mv V$ is
$\mv Q_0=\mv T_c\mv Q\mv T_c^\top$. More explicitly, if
$m=2|\mathcal E|$ and $\mv K\in\mathbb R^{k\times m}$, then
$\mv T_c\in\mathbb R^{(m-k)\times m}$,
$\mv V\in\mathbb R^{m-k}$, and
$\mv Q_0\in\mathbb R^{(m-k)\times(m-k)}$. Because $\mv K$ is assembled from
vertex-local blocks and $\mv Q$ from edge-local blocks, $\mv T_c$ and
$\mv Q_0$ retain the graph sparsity.

Assume throughout this subsection that the constrained prior precision
$\mv Q_0$ is positive definite and that $\sigma_\epsilon>0$. Then every
$\boldsymbol\Sigma_e$, and hence $\boldsymbol\Sigma$, is positive definite.

Stacking the edgewise bridge regressions gives
$\mv Z\mid\mv V\sim \pN(\mv B_c\mv V,\boldsymbol\Sigma)$, where
$\mv B_c=\mv B\mv T_c^\top$. Here
$\mv B_c\in\mathbb R^{n\times(m-k)}$, while the block-diagonal
$\boldsymbol\Sigma$ keeps the observation contribution edge-local.
Completing the square gives
\[
\mv Q_y=\mv Q_0+\mv B_c^\top\boldsymbol\Sigma^{-1}\mv B_c,\qquad
\boldsymbol\mu_c=\mv Q_y^{-1}\mv B_c^\top
 \boldsymbol\Sigma^{-1}\mv z.
\]
The identity
\[
\mv v^\top\mv Q_0\mv v+
(\mv z-\mv B_c\mv v)^\top\boldsymbol\Sigma^{-1}
(\mv z-\mv B_c\mv v)
=(\mv v-\boldsymbol\mu_c)^\top\mv Q_y
(\mv v-\boldsymbol\mu_c)
+\mv z^\top\boldsymbol\Sigma^{-1}\mv z
-\boldsymbol\mu_c^\top\mv Q_y\boldsymbol\mu_c,
\]
and Gaussian integration yield the determinant expression in the main
article.  

For a prediction point $s^\star=(e^\star,t^\star)$, conditioning first on
$\mv U$ separates the zero-endpoint bridge on $e^\star$ from all other edge
bridges.  Taking the posterior expectation of
$\mv U=\mv T_c^\top\mv V$ gives
$\widehat{\mv U}=\mv T_c^\top\boldsymbol\mu_c$; ordinary Gaussian
conditioning of the local bridge then gives the predictor displayed in the
article. Adding $\mv x(s^\star)^\top\mv b$ gives the corresponding
response predictor. No additional bridge result is required for either
derivation.

\section{Application construction and scaling experiment}\label{supp:S5}

\subsection{River-network preprocessing}

Drainage area supplies the normalized incoming weights in the Mid-Columbia
analysis.  It is used
as a proxy for discharge, so the $K_1$ analysis should be read as an
idealized complete-mixing model rather than as a measured heat-balance model.
The timings in \Cref{tab:midcolumbia-loo} cover numerical maximization only;
they omit data preparation, fixed-effect recovery and plug-in LOO prediction.

\subsection{Road-network admissibility check}

Because the road network is cyclic, its fitted directional models use the
constrained endpoint likelihood rather than the acyclic trek covariance.  For
every fit, a successful sparse Cholesky factorisation of the restricted
precision confirms numerical positive definiteness, and hence admissibility to
the factorisation tolerance, at the fitted parameter values.

\subsection{Computational scaling experiment}
\label{app:computational-scaling}

To assess likelihood-evaluation scaling, we used the pruned largest connected
component of the Mid-Columbia network and sampled observation locations
uniformly across its edges for ten values of $n$ from $100$ to $20\,000$. For
$K_1$, $K_2$, and continuity on the reversed graph,
\Cref{fig:sim-speed-evaluate} reports the median of five timings at a fixed
parameter vector, comparing the precision-based likelihood of
\Cref{sec:inference} with direct covariance-based evaluation. The $K_1$ and
$K_2$ covariance matrices are also structurally sparse because observations
on flow-unconnected branches have zero covariance, although the implementation
constructs the full $n\times n$ matrix before sparse factorisation. The three
precision-based evaluations take $1.18$--$1.22$ seconds at $n=20\,000$.
Covariance-based $K_1$ and $K_2$ take approximately $19$--$21$ seconds at
$n=12\,000$, while covariance-based continuity takes $71$ seconds at
$n=8\,000$; larger cases were not attempted because of memory requirements.
These wall-clock times are descriptive of the benchmark run and should not
be interpreted as hardware-independent performance guarantees.

\begin{figure}[H]
    \centering
    \includegraphics[width=0.72\linewidth]{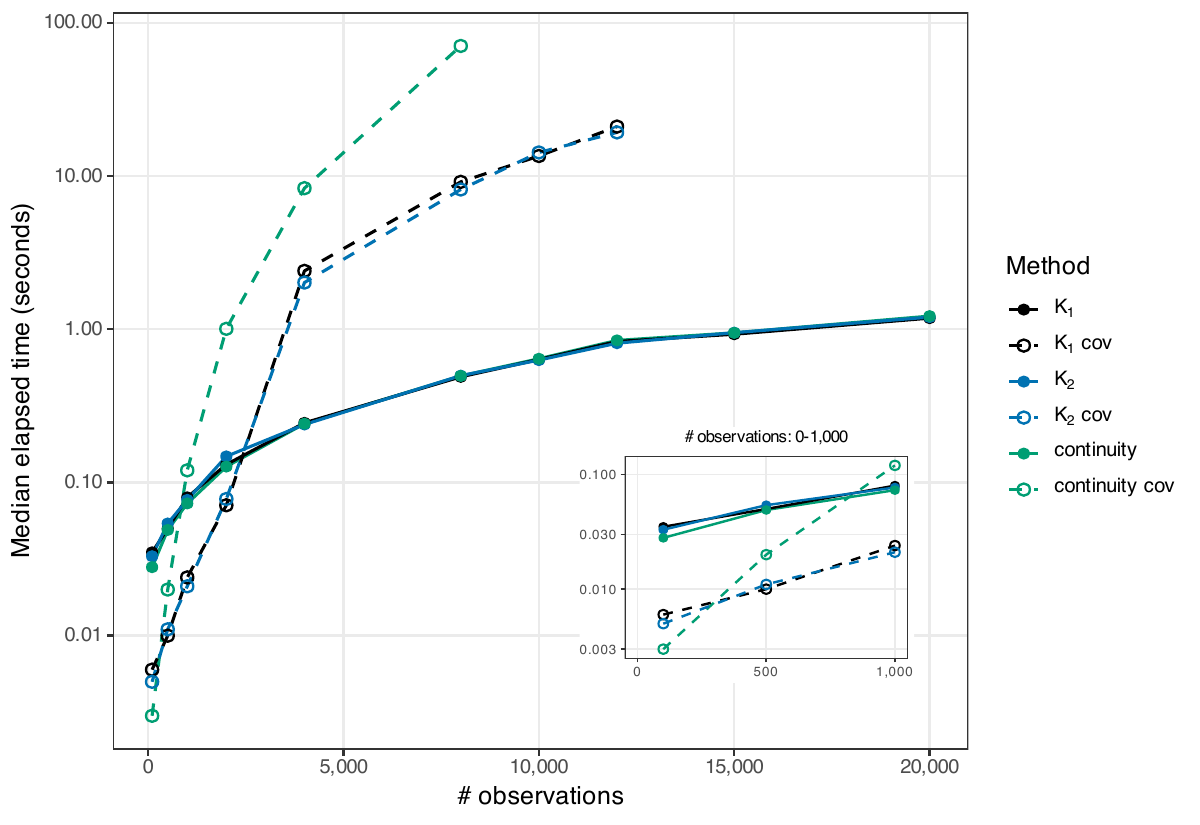}
    \caption{Median likelihood-evaluation time over five runs. Solid lines
    denote the precision-based likelihood and dashed lines direct
    covariance-based evaluation; the inset shows $n\leq 1\,000$.}
    \label{fig:sim-speed-evaluate}
\end{figure}

\section{Proofs}\label{supp:S6}

\subsection{Proofs for Section~\ref{sec:construction}}
\begin{proof}[Proof of \Cref{lem:edge-ips}]
Expanding $(L_ef)(L_eg)=\kappa^2fg+\kappa(fg)'+f'g'$ and integrating the
middle term gives
$a_e^L(f,g)=a^A_e(f,g)
+\kappa\{f(\ell_e)g(\ell_e)-f(0)g(0)\}$, which is the claim.
\end{proof}

\begin{proof}[Proof of \Cref{lem:innerProdAndRKHS}]
Write $h_0(t):=e^{-\kappa t}$ and split the solution \eqref{eq:OU} as
$u=u_0h_0+u^\circ$, where
${u^\circ(t):=\tau^{-1}\int_0^t e^{-\kappa(t-s)}\,dW_e(s)}$ is the
noise-input component on $e$, and $u_0h_0$ is the initial-value component.
Define the Volterra operator
\[
    (Gh)(t):=\int_0^t e^{-\kappa(t-s)}h(s)\,ds.
\]
Brownian motion has Cameron--Martin space
$H_0^1(e):=\{h\in H^1(e):h(0)=0\}$,
identified isometrically with $L_2(e)$ by differentiation. Thus a control
$h\in L_2(e)$ generates the Cameron--Martin path $\tau^{-1}Gh$ for
$u^\circ$. By
Theorem~3C of \cite{parzen1961regression}, $u^\circ$ has RKHS
$\operatorname{Ran}(\tau^{-1}G)$, that is
    $\mathcal H_{u^\circ}=\{f\in H^1(e):f(0)=0\}$,
    and 
    $\langle f,g\rangle_{\mathcal H_{u^\circ}}
    =\tau^{2}\langle L_ef,L_eg\rangle_{L_2(e)}$,
since $G$ inverts $L_e$ on functions vanishing at $0$.

The initial term has RKHS
\(\mathcal H_0=\operatorname{span}\{h_0\}\), with
\(\langle ah_0,bh_0\rangle_{\mathcal H_0}=\sigma_0^{-2}ab\).
The initial-value and noise-input Gaussian summands are independent.
As random elements of \(C(e)\), the two terms have supports
\(\{f\in C(e):f(0)=0\}\) and \(\operatorname{span}\{h_0\}\).
These supports intersect only at zero, and the latter is one-dimensional
and therefore complemented, so
Lemma~9.1 of \citet{vanderVaartVanZanten2008} gives the orthogonal
sum \(\mathcal H_u=\mathcal H_{u^\circ}\oplus\mathcal H_0\).

Every \(f\in H^1(e)\) has the unique decomposition
$
    f=\{f-f(0)h_0\}+f(0)h_0,
$
and \(L_eh_0=0\), so adding the two component inner products gives
\[
    \langle f,g\rangle_{\mathcal H_u}
    =
    \tau^{2}\langle L_e f,L_e g\rangle_{L_2(e)}
    +\sigma_0^{-2}f(0)g(0)
    =
    \tau^2a_e^L(f,g)+\sigma_0^{-2}f(0)g(0),
\]
where the last step is by \eqref{eq:QuadForm}.  
\end{proof}

\subsection{Proofs for Section~\ref{sec:rkhs}}
\begin{proof}[Proof of \Cref{lem:q_is_inner_product}]
The form \eqref{eq:q_semi} is bilinear and symmetric on
$\widetilde H^1(\Gamma)$, and
\[
a_\Gamma^0(f,f)
=\sum_{v\in\mathcal V_-}c_v\,f(v)^2
+\tau^2\sum_{e\in\mathcal E}a_e^L(f_e,f_e)
\ \ge\ 0,
\]
so it is a positive semidefinite bilinear form on any
subspace. It defines an inner product on $\widetilde H^1_X(\Gamma)$ exactly
when it is definite there, i.e., when
$a_\Gamma^0(f,f)=0\Longrightarrow f=0$ for
$f\in\widetilde H^1_X(\Gamma)$,
and the whole proof consists of determining when
$\widetilde H^1_X(\Gamma)$ contains a nonzero element of zero energy.

Suppose then that $f\in\widetilde H^1_X(\Gamma)$ has
$a_\Gamma^0(f,f)=0$.
Since the $c_v$ are strictly positive, $f(v)=0$ for every
$v\in\mathcal V_-$, and
$L_ef_e=0$ gives $f_e(t)=\eta_ee^{-\kappa t}$ with $\eta_e=0$ for
$e\in\mathcal E_-$.
If $\Bin$ is forward, the vertex conditions are $\eta=\Mtr\eta$, so
$f=0$ when $I-\Mtr$ is invertible. Conversely, a null vector
$\eta\neq0$ of $I-\Mtr$ vanishes on source edges, because the
corresponding rows of $\Mtr$ do, and $f_e(t):=\eta_ee^{-\kappa t}$ is then
a nonzero element of $\widetilde H^1_{\Bin}(\Gamma)$ of zero energy.  This
proves the stated equivalence.  It applies to $K_1$ and $K_2$, and to
$C_V$ at vertices of in-degree one.  For $K_1$, and for $C_V$ at such
vertices, the incoming coefficients at each vertex sum to one. For each
$e\notin\mathcal E_-$, with $v=\operatorname{tail}(e)$,
\[
\sum_{\hat e}|\Mtr_{e,\hat e}|
=\sum_{\hat e}\Bin_v(e,\hat e)e^{-\kappa\ell_{\hat e}}
\le e^{-\kappa\ell_{\min}}<1.
\]
Every source row is zero, and hence
$\|\Mtr\|_\infty\le e^{-\kappa\ell_{\min}}<1$, whence
$\rho(\Mtr)<1$.  If instead $\Gamma$ is acyclic, ordering the edges
topologically makes $\Mtr$ strictly triangular, hence nilpotent, so
$\rho(\Mtr)=0$ for any $\Bin$, including $K_2$.

It remains to treat $C_V$ at a vertex with
$|\mathcal E^{\mathrm{in}}_v|\ge2$ on a possibly cyclic graph.  There $C_V$
is not forward, so $\Mtr$ is not available.  What continuity does give is that all traces at
$v$ share a common value, so for every $e$ with $v=\operatorname{tail}(e)$
and every $\hat e\in\mathcal E^{\mathrm{in}}_v$,
\begin{equation}
\label{eq:cv-onestep}
\eta_e=f_e(0)=f(v)=f_{\hat e}(\ell_{\hat e})=\eta_{\hat e}e^{-\kappa\ell_{\hat e}}.
\end{equation}
One such relation per edge suffices, by the following maximum principle.
Set $M_*:=\max_e|\eta_e|$ and suppose $M_*>0$.  Pick $e$ attaining the
maximum; its tail $v$ is not a source, since $\eta_e=0$ for
$e\in\mathcal E_-$, so $\mathcal E^{\mathrm{in}}_v\neq\varnothing$ and we
may choose $\hat e\in\mathcal E^{\mathrm{in}}_v$ in
\eqref{eq:cv-onestep}.  Then
$M_*=|\eta_e|=|\eta_{\hat e}|\,e^{-\kappa\ell_{\hat e}}
\le M_*e^{-\kappa\ell_{\min}}<M_*$, which is 
a contradiction; hence $M_*=0$, so $\eta=0$ and $f=0$.  
\end{proof}

\begin{proof}[Proof of \Cref{lem:q_RKHS_graph}]
We find constants $c_X,C_X>0$ with
${c_X\|f\|^2_{\widetilde H^1(\Gamma)}\le a_\Gamma^0(f,f)\le
C_X\|f\|^2_{\widetilde H^1(\Gamma)}}$ on $\widetilde H^1_X(\Gamma)$.
The upper bound follows from continuity of the endpoint traces and continuity of
${L_e:H^1(e)\to L_2(e)}$. If the lower bound failed, there would be
\(f_n\in\widetilde H_X^1(\Gamma)\) with
$\|f_n\|_{\widetilde H^1(\Gamma)}=1$ and 
${a_\Gamma^0(f_n,f_n)\longrightarrow0}$.
By compactness on the finitely many edges, after taking a subsequence,
\[
f_n\rightharpoonup f
\quad\text{in }\widetilde H^1(\Gamma),
\qquad
f_n\longrightarrow f
\quad\text{in }\bigoplus_{e\in\mathcal E}L_2(e).
\]
The trace constraints are closed, so \(f\in\widetilde H_X^1(\Gamma)\).
Moreover, $a_\Gamma^0(f_n,f_n)\to0$ and $\tau^2>0$ give
\(L_ef_{n,e}\to0\) in $L_2(e)$ for every $e$; the positivity of $c_v$
also gives $f_n(v)\to0$ at every source. Hence
\[
\partial_ef_{n,e}=L_ef_{n,e}-\kappa f_{n,e}
\longrightarrow-\kappa f_e
\quad\text{in }L_2(e),
\]
whereas weak $H^1$-convergence gives
$\partial_e f_{n,e}\rightharpoonup\partial_e f_e$. Uniqueness of the weak
limit therefore gives $\partial_e f_e=-\kappa f_e$. Thus
\(f_n\to f\) strongly in \(\widetilde H^1(\Gamma)\), so
\(\|f\|_{\widetilde H^1(\Gamma)}=1\). Continuity of the form gives
\(a_\Gamma^0(f,f)=0\), so positive definiteness yields \(f=0\), a
contradiction.

The space \(\widetilde H_X^1(\Gamma)
=\ker(\mv K^X\gamma)\) is closed in
\(\widetilde H^1(\Gamma)\), hence complete in the equivalent norm
\(\|f\|_q^2:=a_\Gamma^0(f,f)\). For
\((e,t)\in\{e\}\times[0,\ell_e]\), the one-dimensional Sobolev estimate and
the equivalence just proved give
$|f_e(t)|
\le C_e\|f_e\|_{H^1(e)}
\le C_e'\|f\|_q$.
Thus every edge-point evaluation is bounded, and the Riesz representation
theorem gives the reproducing kernel.
\end{proof}

\begin{proof}[Proof of \Cref{prop:transfer-construction}]
Any solution of \eqref{eq:system} satisfies \eqref{eq:edgewise-rep} on
each edge, and substituting the terminal traces into the second line of
\eqref{eq:system} gives $\eta_e=(\Mtr\eta)_e+\xi_e$ for
$e\notin\mathcal E_-$, while the third line gives $\eta_e=\xi_e$ for
$e\in\mathcal E_-$, where the corresponding row of $\Mtr$ vanishes.  Hence
$\eta$ solves \eqref{eq:fixedpoint}, which has the unique solution
$(I-\Mtr)^{-1}\xi$.  Conversely, defining $\eta$ by \eqref{eq:fixedpoint}
and $u$ by \eqref{eq:edgewise-rep} yields a field satisfying all three
lines of \eqref{eq:system}.  Since $\xi$ is a linear image of the jointly
Gaussian family $(\{u(v)\}_{v\in\mathcal V_-},\{\zeta_e\})$, both $\eta$
and $u$ are Gaussian; and $\xi_e$ depends on the noise only through
$\{\zeta_{\hat e}:\hat e\in\mathcal E^{\mathrm{in}}_{\operatorname{tail}(e)}\}$,
which with mutual independence of the source values gives the stated
independence.

For the Cameron--Martin space, $u$ is the image of 
${\bigl(\{u(v)\}_{v\in\mathcal V_-},
\{W_e\}_{e\in\mathcal E}\bigr)}$ under the
map defined by \eqref{eq:edgewise-rep} and \eqref{eq:fixedpoint}.  The
Cameron--Martin space of each $W_e$ is $H_0^1(e)$,
identified isometrically with $L_2(e)$ by differentiation. Using this
identification, the Cameron--Martin space of the family is
$\mathbb R^m\oplus\bigoplus_e L_2(e)$ with squared norm
${\sum_{v\in\mathcal V_-}\sigma_v^{-2}a_v^2+\sum_e\|h_e\|^2_{L_2(e)}}$, where $h_e\,dt$ replaces
$dW_e$.
Because the noise enters \eqref{eq:edge-sde} as $\tau^{-1}dW_e$, the
function generated on $e$ by the control $h_e$ satisfies
$L_ef_e=\tau^{-1}h_e$, and hence
$\sum_e\|h_e\|^2_{L_2(e)}=\tau^2\sum_e a_e^L(f_e,f_e)$,
which is the second term of \eqref{eq:q_semi}.
Consider the map $(a,h)\mapsto f$ sending
$((a_v)_{v\in\mathcal V_-},(h_e)_{e\in\mathcal E})$ to the
unique $f\in\widetilde H^1_{\Bin}(\Gamma)$ with
$f(v)=a_v$ for $v\in\mathcal V_-$ and
$L_ef_e=\tau^{-1}h_e$.  This is well defined and bijective: solving this equation
edgewise leaves the tail values $\eta_e$ free, and the vertex conditions
together with the prescribed source values determine $\eta$ uniquely
through \eqref{eq:fixedpoint}, which is solvable because $I-\Mtr$ is
invertible.  Under this bijection the squared norm above is exactly
$a_\Gamma^0(f,f)$. Moreover the map is the restriction to
Cameron--Martin spaces of the solution map itself: substituting
$a_v$ for the source value at each $v\in\mathcal V_-$ and
$h_e\,dt$ for $dW_e$ in
\eqref{eq:edgewise-rep} and \eqref{eq:fixedpoint} returns exactly $f$.
Since the Cameron--Martin space of a continuous linear image of a Gaussian
family is the image of that family's Cameron--Martin space, and the map here is a bijection, the
Cameron--Martin space of $u$ is $\widetilde H^1_{\Bin}(\Gamma)$ with the
inner product
$\langle f,g\rangle_{\mathcal H_u}=a_\Gamma^0(f,g)$.
\end{proof}

\begin{proof}[Proof of \Cref{prop:subdivision}]
The anchoring terms in \eqref{eq:q_semi} are unchanged, since the inserted
vertex is not a source. Write $e_1=[0,s]$ and
$e_2=[s,\ell_e]$ for the two new edges. Additivity gives
\[
a_e^L(f,g)
=a_{e_1}^L(f|_{e_1},g|_{e_1})+a_{e_2}^L(f|_{e_2},g|_{e_2}).
\]
At the inserted vertex, $C_V,K_1,K_2$ all equate the incoming
and outgoing labelled endpoint traces, exactly identifying the two pieces.
Under this identification the inner products agree, so
\Cref{lem:q_RKHS_graph} gives the same reproducing covariance kernel and
hence the same centred Gaussian law.
\end{proof}

\subsection{Proofs for Section~\ref{sec:acyclic}}
\begin{lemma}
\label{lem:gaussian-measurable-linear}
Let \(X\in\mathbb R^m\) and \(Y\in\mathbb R^n\) be jointly Gaussian and
centred. If
\(Y\) is \(\sigma(X)\)-measurable, then a deterministic 
\(B\in\mathbb R^{n\times m}\) exists such that \(Y=BX\) a.s.
\end{lemma}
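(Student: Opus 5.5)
The plan is to use the Gaussian conditional expectation formula together with the observation that a $\sigma(X)$-measurable random variable is its own conditional expectation. Let $\Sigma_{XX}=\Cov(X)$ and $\Sigma_{YX}=\Cov(Y,X)$. Let $\Sigma_{XX}^{+}$ be the Moore--Penrose pseudoinverse. Define the candidate
\[
B:=\Sigma_{YX}\Sigma_{XX}^{+}.
\]

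The first step handles a possibly singular $\Sigma_{XX}$. Every entry of $\Sigma_{YX}$ is a covariance with $X$, so the range of $\Sigma_{YX}^\top$ lies in $\operatorname{Ran}(\Sigma_{XX})$. Indeed, if $\Sigma_{XX}w=0$ then $w^\top X=0$ almost surely, and hence $\Sigma_{YX}w=\pE(Yw^\top X)=0$. Likewise, $X\in\operatorname{Ran}(\Sigma_{XX})$ almost surely. Consequently $\Sigma_{YX}\Sigma_{XX}^{+}\Sigma_{XX}=\Sigma_{YX}$.

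The second step sets $R:=Y-BX$. The vector $(R,X)$ is jointly centred Gaussian, being a linear image of $(X,Y)$. By the identity from the first step,
\[
\Cov(R,X)=\Sigma_{YX}-B\Sigma_{XX}=0.
\]
Uncorrelated jointly Gaussian vectors are independent, so $R$ is independent of $X$. On the other hand, $R=Y-BX$ is $\sigma(X)$-measurable, because $Y$ is by hypothesis and $BX$ trivially is.

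The final step combines these facts. A random vector that is both independent of $\sigma(X)$ and $\sigma(X)$-measurable is independent of itself, and is therefore almost surely constant. Concretely, $R=\pE(R\mid X)=\pE R=0$ almost surely, using that $R$ is centred. Hence $Y=BX$ almost surely.

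The only point needing care is the degenerate case of singular $\Sigma_{XX}$, which is why the first step is included. Everything else is standard.
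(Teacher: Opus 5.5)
Your proof is correct and rests on the same idea as the paper's: the Gaussian conditional expectation $\pE(Y\mid X)$ is linear in $X$, and a $\sigma(X)$-measurable $Y$ equals its own conditional expectation. The paper cites the Gaussian conditional-law fact and passes through the conditional law being a Dirac mass, whereas you prove the linearity directly from the independent residual $Y-\Sigma_{YX}\Sigma_{XX}^{+}X$, which also gives $B$ explicitly and handles a singular $\Sigma_{XX}$.
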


\begin{proof}
Since \((X,Y)\) is jointly Gaussian, the conditional law of \(Y\) given \(X\) is Gaussian
with mean affine in \(X\) and covariance independent of \(X\). If \(Y\) is
\(\sigma(X)\)-measurable, then this conditional law is almost surely a Dirac mass, so its
conditional covariance vanishes. Hence \(Y=\mathbb E[Y\mid X]=BX\) a.s.~for some
deterministic matrix \(B\).
\end{proof}

\begin{proof}[Proof of \Cref{lem:graph-OU-form}]
Fix \(v\notin\mathcal V_-\). The incoming trace vector and the outgoing
initial-value vector are jointly centred Gaussian. By
\Cref{def:proper-global-OU}, the latter is measurable with respect to the
former. Hence \Cref{lem:gaussian-measurable-linear} gives a deterministic
matrix \(B_v\) such that
\[
\bigl(\eta_e\bigr)_{e\in\mathcal E_v^{\mathrm{out}}}
=
B_v\bigl(u_{\hat e}(\ell_{\hat e})\bigr)_{\hat e\in\mathcal E_v^{\mathrm{in}}}
\qquad\text{a.s.}
\]
These are exactly the forward vertex relations in \eqref{eq:system}, with
\(\Bin_v(e,\hat e)=(B_v)_{e,\hat e}\).

Using the source-edge notation of \Cref{sec:graphs}, acyclicity
makes
\(\{\eta_{e_v}\}_{v\in\mathcal V_-}\) and
\(\{W_e\}_{e\in\mathcal E}\) a complete set of independent drivers. Replace
\(dW_e\) by \(h_e(t)\,dt\), with \(h_e\in L_2(e)\), and replace
\(\eta_{e_v}\) by \(a_v\in\mathbb R\). The resulting deterministic path
satisfies
\[
f_e(t)=e^{-\kappa t}b_e+\tau^{-1}\!\int_0^t e^{-\kappa(t-s)}h_e(s)\,ds,
\qquad t\in[0,\ell_e],
\]
where \(b_{e_v}=a_v\) at a source and, at every interior vertex,
$\bigl(b_e\bigr)_{e\in\mathcal E_v^{\mathrm{out}}}
=
B_v\bigl(f_{\hat e}(\ell_{\hat e})\bigr)_{\hat e\in\mathcal E_v^{\mathrm{in}}}$.
Thus every controlled path lies in \(\widetilde H^1_{\Bin}(\Gamma)\).
Conversely, any \(f\) in this space is obtained by taking
\(a_v=f(v)\) and \(h_e=\tau L_ef_e\).

It remains to compute the inner product. For arbitrary
$f,g\in\widetilde H^1_{\Bin}(\Gamma)$, the polarized contribution of the
control on edge $e$ is
$
\tau^2a_e^L(f_e,g_e).
$
The source variable
\(\eta_{e_v}\sim\pN(0,\sigma_v^2)\) contributes
\(\sigma_v^{-2}f(v)g(v)\). Interior initial values add no term because the
vertex relations determine them from upstream traces. Summing over the
independent drivers gives
\[
\langle f,g\rangle_{\Gamma}^{\mathrm{OU}}
=
\tau^2\sum_{e\in\mathcal E}a_e^L(f_e,g_e)
+
\sum_{v\in\mathcal V_-}\sigma_v^{-2}\,f(v)g(v),
\]
which proves the claim.
\end{proof}

\begin{proof}[Proof of \Cref{cor:forward-substitution}]
Order the edges topologically. By \eqref{eq:transfer-matrix},
\(\Mtr_{e,\hat e}\) can be nonzero only if \(\hat e\) precedes \(e\).
Thus \(\Mtr\) is strictly triangular, so
\[
    \Mtr^{|\mathcal E|}=0,
    \qquad
    (I-\Mtr)^{-1}=\sum_{k=0}^{|\mathcal E|-1}\Mtr^k.
\]
In particular, \(I-\Mtr\) is invertible and
\Cref{prop:transfer-construction} applies. Reading \eqref{eq:fixedpoint} in
the same ordering determines each outgoing initial value from the terminal
traces immediately upstream. This is the measurability condition in
\Cref{def:proper-global-OU}, with
\(B_v=(\Bin_v(e,\hat e))\); its remaining conditions hold by construction.
\end{proof}

We next record a transfer representation used in the covariance proofs.
\begin{lemma}\label{lem:OU-transfer-representation}
Let \(\Gamma\) be a finite acyclic directed metric graph and let \(u\) be a
proper global OU process on \(\Gamma\).  Then, for every
\(y\in\Gamma^\circ\),
\[
u(y)=\sum_{j\in\mathcal J(y)} \mathsf A(s_{-,j},y)\,Z_j
+\tau^{-1}\!\int_{\Lambda^\uparrow(y)} \mathsf A(z,y)\,dW(z),
\]
with \(\mathsf A\) as in \eqref{eq:transferFactor}.
\end{lemma}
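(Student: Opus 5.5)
We prove \Cref{lem:OU-transfer-representation} by induction over a topological ordering of the edges, using the edgewise representation in \Cref{def:proper-global-OU}. The transfer factor in \eqref{eq:transferFactor} and the claimed representation are both invariant under subdivision: inserting a degree-two vertex with coefficient $\Bin=1$ changes neither route lengths nor coefficient products. So we may, and do, assume that $y=(e,t)$ with $t\in(0,\ell_e)$ lies in the interior of an edge $e$. As in \Cref{sec:covariances}, we use that each $B_v$ has identical rows, so that $\eta_e=\sum_{\hat e\in\mathcal E^{\mathrm{in}}_v}\Bin_v(e,\hat e)\,u_{\hat e}(\ell_{\hat e})$ with $v=\operatorname{tail}(e)$, by \Cref{lem:graph-OU-form}. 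If the identical-rows assumption is not granted, the same argument goes through with the row of $B_v$ indexed by $e$, since the routes in $\mathcal P(x,y)$ fix the outgoing edge at each routed vertex.

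\emph{Step 1 (one edge).} On $e$ we have
\[
u(y)=e^{-\kappa t}\eta_e+\tau^{-1}\int_0^t e^{-\kappa(t-s)}\,dW_e(s).
\]
For $z=(e,s)$ with $s\le t$, the only route from $z$ to $y$ stays on $e$, so $\mathsf A(z,y)=e^{-\kappa(t-s)}$. The stochastic integral is therefore exactly the contribution of $\Lambda^\uparrow(y)\cap(\{e\}\times[0,t])$ to $\tau^{-1}\int_{\Lambda^\uparrow(y)}\mathsf A(z,y)\,dW(z)$.

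\emph{Step 2 (base case).} If $e\in\mathcal E_-$, then $\eta_e=Z_j$ with $s_{-,j}=\operatorname{tail}(e)$, and $\mathsf A(s_{-,j},y)=e^{-\kappa t}$. On an acyclic graph $\Lambda^\uparrow(y)$ is just the initial segment of $e$, so the representation holds.

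\emph{Step 3 (induction).} Let $v=\operatorname{tail}(e)$ be an interior vertex. Each incoming terminal trace $u_{\hat e}(\ell_{\hat e})$ is a one-sided limit of interior points of $\hat e$. By the induction hypothesis applied on $\hat e$, and passing to the limit in $L_2$ (both sides are continuous in the point: the deterministic kernels converge, and so do the integrals by the Itô isometry), we obtain
\[
u_{\hat e}(\ell_{\hat e})=\sum_j \mathsf A(s_{-,j},\bar v)\,Z_j+\tau^{-1}\int_{\Lambda^\uparrow(\bar v)}\mathsf A(z,\bar v)\,dW(z),
\]
where $\bar v=(\hat e,\ell_{\hat e})$. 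Substituting this into $\eta_e$ and multiplying by $e^{-\kappa t}$, the key combinatorial identity to verify is the following. For any upstream point $z$ (a source or a noise location),
\[
\mathsf A(z,y)=e^{-\kappa t}\sum_{\hat e\in\mathcal E^{\mathrm{in}}_v}\Bin_v(e,\hat e)\,\mathsf A(z,\bar v_{\hat e}).
\]
This is the first-step decomposition of the route sum: every route from $z$ to $y$ passes through $v$ exactly once (acyclicity), enters $v$ by a unique incoming edge $\hat e$, and then runs along $e$ for length $t$. The routed vertex $v$ contributes the factor $\Bin_v(\underline v,\bar v)$. Because routes are finite and the graph is acyclic, these sums are finite and can be interchanged.

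\emph{Step 4 (assembly).} We then sum over $\hat e$ and interchange the finite sum with the edgewise stochastic integrals. The domains $\Lambda^\uparrow(\bar v_{\hat e})$ overlap when branches share ancestors, but linearity of the integral handles this: the integrand on the union $\bigcup_{\hat e}\Lambda^\uparrow(\bar v_{\hat e})$ is the sum of the integrands, and by Step 3 this sum equals $\mathsf A(z,y)$. Adding the on-edge piece from Step 1 then covers all of $\Lambda^\uparrow(y)$. Similarly, the source terms combine to $\mathsf A(s_{-,j},y)$, and $j\in\mathcal J(y)$ exactly when some $\mathcal J(\bar v_{\hat e})$ contains $j$.

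The main obstacle is the bookkeeping in Steps 3--4. We must verify that the route-sum identity matches the convention that $P^\circ(\rho)$ excludes an initial common-outgoing state but includes the routed interior vertices. We must also check that the noise at points $z$ lying on several incoming branches' ancestral sets is counted with the correct multiplicity; this is automatic because $\mathsf A$ sums over all routes. The limit to endpoint traces is routine by $L_2$ continuity.
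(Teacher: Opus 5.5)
Your proposal is correct and follows the same route as the paper: iterate the edgewise OU relation and the vertex relations of \Cref{lem:graph-OU-form} along a topological edge ordering, then identify the accumulated coefficient of each $Z_j$ and each $dW(z)$ with the route sums that define $\mathsf A$. Your version is more explicit, making the induction and the last-vertex decomposition $\mathsf A(z,y)=e^{-\kappa t}\sum_{\hat e}\Bin_v(e,\hat e)\,\mathsf A(z,\bar v_{\hat e})$ precise where the paper argues informally; two steps are superfluous, though: the subdivision reduction (since $y\in\Gamma^\circ$ already lies in an edge interior) and the $L_2$ limit (since you can apply the edgewise formula directly at $t=\ell_{\hat e}$).
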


\begin{proof}
Since \(\Gamma\) is finite and acyclic, the edges of
\(\Lambda^\uparrow(y)\) admit a topological ordering from the inflow
boundary toward \(y\).  On each edge the representation of
\Cref{def:proper-global-OU} gives
\[
u_e(t)=e^{-\kappa (t-s)}u_e(s)+\tau^{-1}\int_s^t e^{-\kappa (t-\xi)}\,dW_e(\xi),
\qquad 0\le s\le t\le \ell_e,
\]
and at each interior vertex \Cref{lem:graph-OU-form} gives
\(u(\underline v)=\sum_{\bar v}\Bin_v(\underline v,\bar v)u(\bar v)\) a.s.
Both relations are linear, so \(u(y)\) is a linear functional of the source
values and the edge noises.  Iterating them along a directed path from a
driver to \(y\) multiplies the exponential decay accumulated along the path
by the routing coefficient at each vertex the path traverses; a driver that
reaches \(y\) along several distinct paths contributes once for each, and
the coefficients add.  The total coefficient of \(Z_j\) is therefore the sum
over paths \eqref{eq:transferFactor}, that is \(\mathsf A(s_{-,j},y)\), and
that of \(dW(z)\) is \(\tau^{-1}\mathsf A(z,y)\), the factor \(\tau^{-1}\)
coming from the edgewise representation.  Summing gives the claim.  On a
tree at most one path joins any pair of points and each sum has a single
term.
\end{proof}

\begin{proof}[Proof of \Cref{thm:cov-kernel}]
By \Cref{lem:OU-transfer-representation}, \(u(x)\) and \(u(y)\) are linear
functionals of the source values \(Z_j\) and the edge noises, with
coefficients \(\mathsf A(s_{-,j},\cdot)\) and
\(\tau^{-1}\mathsf A(z,\cdot)\).  By \eqref{eq:system} the \(Z_j\) are
mutually independent and independent of the noises, and the noises are
independent across disjoint portions of \(\Gamma\).  Taking covariances
therefore pairs each driver with itself: the source terms give
\(\sum_j\sigma^2_{s_{-,j}}\mathsf A(s_{-,j},x)\mathsf A(s_{-,j},y)\), and
the It\^o isometry gives
\(\tau^{-2}\int\mathsf A(z,x)\mathsf A(z,y)\,dz\).  A driver contributes
only if it reaches both points, so the sum is over
\(\mathcal J(x)\cap\mathcal J(y)\) and the integral over
\(\mathcal C^\uparrow(x,y)\); elsewhere one of the two factors vanishes.
\end{proof}
\begin{proof}[Proof of \Cref{cor:variance-formula}]
The first identity is \eqref{eq:cov-kernel} with \(x=y=s\).  For the
second, let \(x\leadsto y\) lie on a common edge. Then no vertex is traversed
between them, so \(\mathsf A(x,y)=e^{-\kappa d(x,y)}\) and
${u(y)=e^{-\kappa d(x,y)}u(x)+\tau^{-1}\int_x^y e^{-\kappa d(z,y)}dW(z)}$,
the integral being over the segment from \(x\) to \(y\).  That segment is
not upstream of \(x\), since \(\Gamma\) is acyclic, so the two terms are
independent and
\(\Var(u(y))=e^{-2\kappa d(x,y)}\Var(u(x))
+\tau^{-2}\int_0^{d(x,y)}e^{-2\kappa r}dr\), which is
\eqref{eq:within-edge-variance}.
\end{proof}
\begin{proof}[Proof of \Cref{thm:cov-lca}]
We first check that $\mathcal C^\uparrow(s,t)$, when non-empty, has a
unique maximal element.  Write $P$ for the undirected path joining $s$ and
$t$, and let $z\in\mathcal C^\uparrow(s,t)$.  Since $\Gamma$ is a tree, the
directed paths from $z$ to $s$ and from $z$ to $t$ are the unique
undirected ones. Let $a_z$ be their last common point. Then
$z\leadsto a_z$, $a_z\leadsto s$, and $a_z\leadsto t$, and the remaining
path segments from $a_z$ have disjoint interiors. Moreover $a_z\in P$,
because the paths from $a_z$ to $s$ and from $a_z$ to $t$ are edge-disjoint
apart from $a_z$, so their concatenation is the
undirected $s$--$t$ path.  At most one point of $\mathcal C^\uparrow(s,t)$
can lie on $P$: if $m\neq m'$ both did, with $m$ between $s$ and $m'$ along
$P$, then $m'\leadsto s$ would force $m'\leadsto m$ and $m\leadsto t$ would
force $m\leadsto m'$, a directed cycle, which a tree does not admit.
Hence $a_z$ is one and the same point $a$ for every
$z\in\mathcal C^\uparrow(s,t)$, and $a$ is the maximum of
$\mathcal C^\uparrow(s,t)$: it lies in the set, and $z\leadsto a$ for every
$z$ in it.

Consequently $\mathcal C^\uparrow(s,t)=\Lambda^\uparrow(a)$ and
\(\mathcal J(s)\cap\mathcal J(t)=\mathcal J(a)\).  For \(z\leadsto a\) the
directed path from \(z\) to \(s\) is the concatenation of those from \(z\)
to \(a\) and from \(a\) to \(s\), hence passes through \(a\), so
\(\mathsf A(z,s)=\mathsf A(z,a)\mathsf A(a,s)\), and likewise for \(t\).
Substituting in \eqref{eq:cov-kernel} lets \(\mathsf A(a,s)\mathsf A(a,t)\)
be taken out of both the sum and the integral, leaving
\begin{align*}
r(s,t)&=\mathsf A(a,s)\mathsf A(a,t)\Bigl\{
\sum_{j\in\mathcal J(a)}\sigma^2_{s_{-,j}}\mathsf A(s_{-,j},a)^2
+\tau^{-2}\!\int_{\Lambda^\uparrow(a)}\!\mathsf A(z,a)^2dz\Bigr\}\\
&=r(a,a)\,\mathsf A(a,s)\,\mathsf A(a,t),
\end{align*}
the brace being \(r(a,a)\) by \Cref{cor:variance-formula}.  If
\(\mathcal C^\uparrow(s,t)=\varnothing\) then no driver reaches both points
and \eqref{eq:cov-kernel} is empty, so \(r(s,t)=0\).  If \(t\leadsto s\)
then \(t\) is itself the maximal common ancestor and \(\mathsf A(t,t)=1\),
giving \(r(s,t)=r(t,t)\mathsf A(t,s)\).
\end{proof}
\begin{proof}[Proof of \Cref{cor:stationary-variance-K2}]
Set \(r_*:=\sigma^2=(2\kappa\tau^2)^{-1}\), the stationary variance. If
\(x\leadsto y\) lie on the same edge, \eqref{eq:within-edge-variance} gives
\[
r(y,y)=e^{-2\kappa d(x,y)}\,r(x,x)+\frac{1-e^{-2\kappa d(x,y)}}{2\kappa\tau^2},
\]
so \(r(x,x)=r_*\) implies \(r(y,y)=r_*\). Thus stationarity of the variance is preserved along each edge.
Now let \(v\) be an interior vertex, and suppose that \(r(\bar v,\bar v)=r_*\) for all \(\bar v\in\mathcal V_v^{\mathrm{in}}\). For each \(\underline v\in\mathcal V_v^{\mathrm{out}}\), the \(K_2\)-condition gives
$u(\underline v)=\sum_{\bar v\in\mathcal V_v^{\mathrm{in}}}\beta_v(\underline v,\bar v)\,u(\bar v)$ and 
$\sum_{\bar v\in\mathcal V_v^{\mathrm{in}}}\beta_v(\underline v,\bar v)^2=1$.
Because $\Gamma$ is a tree, distinct incoming traces into $v$ have pairwise
disjoint upstream driver sets. Hence the variables
${\{u(\bar v):\bar v\in\mathcal V_v^{\mathrm{in}}\}}$ are independent:
\Cref{lem:OU-transfer-representation} expresses the traces through disjoint
families of source values and edge noises, which are mutually independent by
\eqref{eq:system}. Hence
${r(\underline v,\underline v)
=
\sum_{\bar v\in\mathcal V_v^{\mathrm{in}}}\beta_v(\underline v,\bar v)^2\,r(\bar v,\bar v)
=
r_*}$.
Thus stationarity of the variance is also preserved across every interior vertex.

Finally, choose a topological ordering of the edges from inflow leaves to the terminal vertices. The inflow values have variance \(r_*\) and repeated application of the two propagation steps above therefore yields
\(r(t,t)=r_*\) for every \(t\in\widetilde\Gamma\).
\end{proof}

\subsection{Proofs for Sections~\ref{sec:relation} and~\ref{sec:inference}}
\begin{proof}[Proof of \Cref{prop:sym-vs-dir}]
By \Cref{lem:edge-ips}, for each edge
\[
\tau^2a_e^L(f_e,g_e)
=\tau^2a^A_e(f_e,g_e)
+\kappa\tau^2\bigl\{f_e(\ell_e)g_e(\ell_e)-f_e(0)g_e(0)\bigr\}.
\]
Summing over $e\in\mathcal E$ and adding the anchoring term gives
\eqref{eq:q_semi} on the left and
$\langle f,g\rangle^{\mathrm{sym}}_\Gamma$ plus the endpoint sum on the
right.  Since $f,g\in\widetilde H^1_{C_V}(\Gamma)$ are single-valued at
every vertex, each edge end at $v$ contributes $\pm\kappa\tau^2f(v)g(v)$,
with a plus sign when $v$ is the head of that edge and a minus sign when it
is the tail.  Collecting the $2|\mathcal E|$ endpoint terms by vertex, the
coefficient of $\kappa\tau^2 f(v)g(v)$ is the number of edges entering $v$
minus the number leaving it, which is
$|\mathcal E^{\mathrm{in}}_v|-|\mathcal E^{\mathrm{out}}_v|$.  This is
\eqref{eq:sym-vs-dir}.
\end{proof}

\begin{proof}[Proof of \Cref{prop:tailup}]
Constant variance is \Cref{cor:stationary-variance-K2}.  If $t\leadsto s$
then $t$ is the last common ancestor of $s$ and $t$, so \Cref{thm:cov-lca}
gives $r(s,t)=r(t,t)\mathsf A(t,s)=\sigma^2\mathsf A(t,s)$, and
$\mathsf A(t,s)$ is the single product \eqref{eq:transferFactor} along the
unique directed path, whose routing coefficients are
$\sqrt{p_{v,\bar v(t,s)}}$ under $K_2$.
The case $s\leadsto t$ follows by exchanging $s$ and $t$ and using
covariance symmetry.
Now suppose neither point is upstream of the other, and let
$z\in\mathcal C^\uparrow(s,t)$.  Because $\Gamma$ is oriented along the
flow, every vertex other than the outlet has exactly one outgoing edge, so
the directed path leaving $z$ never branches and runs to the outlet.  Both
$s$ and $t$ are downstream of $z$ and therefore lie on it,
hence one precedes the other and one of the two points is upstream of the
other, contrary to assumption.  So $\mathcal C^\uparrow(s,t)=\varnothing$
and $r(s,t)=0$ by \Cref{thm:cov-lca}.
\end{proof}

\begin{proof}[Proof of \Cref{prop:taildown}]
Reversal exchanges $\mathcal E^{\mathrm{in}}_v$ and
$\mathcal E^{\mathrm{out}}_v$ at every $v$, so each interior vertex of
$\Gamma^R$ has exactly one incoming edge. Then
$p_{v,\hat e}=1$ and the three coefficients in
\eqref{eq:vertex-condition} all equal $1$.  The outlet of $\Gamma$ becomes
the unique source of $\Gamma^R$, and since it is a leaf it is an inward
leaf of $\Gamma^R$, so
\Cref{ass:standing} holds and \Cref{cor:stationary-variance-K2}
gives stationary variance at every labelled edge point. Since
all routing coefficients equal $1$, all incident traces agree, so the field
descends to $\Gamma$. With all routing
coefficients equal to $1$,
\eqref{eq:transferFactor} reduces to $\mathsf A(x,y)=e^{-\kappa d(x,y)}$
whenever $x$ precedes $y$ in $\Gamma^R$.
Now, let $s,t\in\Gamma$ and let $a$ be their last common ancestor in
$\Gamma^R$, which exists because $\Gamma^R$ is a tree with a single source
reaching every point.  \Cref{thm:cov-lca} gives
$r(s,t)=\sigma^2e^{-\kappa(d(a,s)+d(a,t))}$.  If one of $s,t$ precedes the
other in $\Gamma^R$ then $a$ is that point and the exponent is $d(s,t)$.
Otherwise $a$ is the vertex at which the two directed paths from the
outlet separate and the
path from $s$ to $t$ in $\Gamma$ passes through it, so again
$d(a,s)+d(a,t)=d(s,t)$.
\end{proof}


\end{document}